\newtheorem{definition}{Definition}[section]
\newtheorem{theorem}{Theorem}[section]
\newtheorem*{conjecture*}{Conjecture}
\newtheorem*{theorem*}{Theorem}
\newtheorem*{corollary*}{Corollary}
\newtheorem{proposition}{Proposition}[subsection]
\newtheorem{lemma}{Lemma}[subsection]
\newtheorem{remark}{Remark}[section]
\DeclareMathAlphabet\mathbfcal{OMS}{cmsy}{b}{n}
\title{Rough initial data and the strength of the blue-shift instability\\ on cosmological black holes with
$\Lambda>0$}
\author[1,2]{Mihalis Dafermos}
\author[2]{Yakov Shlapentokh-Rothman}
\affil[1]{\small University of Cambridge, Department of Pure Mathematics and Mathematical
Statistics, Wilberforce~Road,~Cambridge~CB3~0WA,~United~Kingdom\vskip.2pc \ }
\affil[2]{\small Princeton University, Department of Mathematics, Fine~Hall,~Washington~Road,~Princeton,~NJ~08544,~United~States~of~America\vskip.2pc \ }
\begin{document}

\maketitle

\begin{abstract}
We consider the wave equation on  Reissner--Nordstr\"om--de Sitter
and more generally Kerr--Newman--de Sitter black hole spacetimes with $\Lambda>0$. 
The strength of the blue-shift instability associated to the Cauchy horizon of
these spacetimes has been the subject of much discussion, since---in contrast 
to the asymptotically flat $\Lambda=0$ case---the
competition with the  decay  associated to the region
between the event and cosmological horizons is  delicate, especially as
the extremal limit is approached.
Of particular interest is the question
as to whether generic, admissible initial data posed on a Cauchy surface 
lead to solutions whose local (integrated) energy blows up at the Cauchy horizon, for this 
statement holds in the asymptotically
flat case and would correspond precisely 
to the blow up required by Christodoulou's formulation of strong cosmic censorship. 
Some recent heuristic work suggests that the answer is in general negative for solutions
arising from sufficiently
smooth
data, i.e.~there exists
a certain range of black hole parameters such that 
for  all such data,
the arising solutions have finite
local (integrated) energy at the Cauchy horizon.
In this short note, we shall show in contrast that, by slightly relaxing the smoothness
assumption on initial data, we are able to prove the analogue of the Christodoulou
statement in the affirmative,
 i.e.~we show that for generic  data in our allowed class,
 the local energy blow-up statement indeed holds at the Cauchy horizon,
for all subextremal black hole parameter ranges. 
We present two distinct proofs. The first is based on an explicit
mode construction while the other  is softer and uses only time translation invariance
of appropriate scattering maps, in analogy with our previous
[\emph{Time-translation invariance of scattering maps and blue-shift instabilities on Kerr black hole spacetimes},
 Commun.~Math.~Phys.~350 (2017), 985--1016].
 Both proofs use statements concerning the non-triviality of transmission and
reflexion, which are easy to infer by o.d.e.~techniques and analyticity considerations.
  Our slightly enlarged class of initial data is 
 still sufficiently regular to ensure both stability and decay properties in the
region between the event and cosmological horizons as well as
the boundedness and 
continuous extendibility beyond the Cauchy horizon. This suggests thus that it is finally this 
class---and not smoother data---which may provide the  correct setting to formulate 
the genericity condition in  strong cosmic censorship.
\end{abstract}

\section{Introduction}
\label{theintrosec}

Penrose was the first to remark that the
\emph{Cauchy horizon} of the Reissner--Nordstr\"om and Kerr black hole solutions
 is subject to
a \emph{blue-shift instability}~\cite{penrose1968battelle}. 
As the Cauchy  horizon
delimits the region of spacetime
determined uniquely by initial data, this instability provided a path to a possible resolution to one
of general relativity's most puzzling paradoxes:  Observers crossing the Cauchy horizon
experience failure of predictability in a supposedly deterministic classical-physics 
theory, without however
manifestly exiting the domain of validity of the classical description.  The blue-shift instability
associated to the Cauchy horizon  eventually led to the formulation 
of the  \emph{strong cosmic censorship conjecture}~\cite{PenroseSCCrefer}, 
according to which, for \emph{generic} 
initial data for the Einstein vacuum equations
\begin{equation}
\label{Einsteinvac}
R_{\mu\nu}[g]=0,
\end{equation}
or more general Einstein matter systems like the Einstein--Maxwell equations,
the spacetime region $(\mathcal{M},g)$ uniquely determined by initial data (i.e.~the ``maximal Cauchy development'') is
suitably inextendible. 
A similar conjecture can be made for the vacuum equations
with  a  cosmological constant:
\begin{equation}
\label{EinsteinvacLambda}
R_{\mu\nu}[g]=\Lambda g_{\mu\nu},
\end{equation}
 which admit in the case $\Lambda>0$
the so-called  Kerr--de Sitter solutions, or
for the more general Einstein--Maxwell equations with $\Lambda>0$
\begin{align}
 	\label{eq:einsteinmaxwell}
	&R_{\mu\nu}[g] = \Lambda g_{\mu\nu}+ 2 \left( F_{\mu}^{~\lambda} F_{\lambda\nu}  - \frac{1}{4} g_{\mu\nu} F_{\lambda \kappa} F^{\lambda \kappa} \right), \qquad 
\nabla^\nu F_{\mu\nu} = 0 , \qquad \nabla_{[\mu} F_{\nu\lambda]} = 0,
\end{align}
which admit the 
Reissner--Nordstr\"om--de Sitter solutions
(in fact the Kerr--Newman--de Sitter solutions encompassing
all of the previous),
all again possessing Cauchy horizons inside of black holes.
 For the conjecture  to be made precise, one must in particular stipulate 
\emph{in what sense} $\mathcal{M}$ should be inextendible. The stronger the inextendibility condition,
the more definitive a resolution to the paradox the conjecture would provide.
The most satisfying statement would be if  the  spacetime metric $g$
itself could be shown to
be  generically inextendible merely as a continuous ($C^0$) Lorentzian
metric~\cite{Chrmil},
i.e.~without requiring further differentiability. This is the so-called
\emph{$C^0$-formulation} of strong cosmic censorship.
This formulation would correspond to the inextendibility statement which
indeed holds for Schwarzschild~\cite{SbierskiCzero} across its $r=0$ singularity
and
 is related to the property that  observers approaching $r=0$ 
 are ``torn apart'' by infinite tidal deformations, 
making the issue of  their future (as classical observers)
a moot point.
 The question of which (if any)  formulation of strong cosmic censorship holds hinges in turn  
on the \emph{strength} of the blue-shift instability.

A proxy problem for understanding the above issue for $(\ref{Einsteinvac})$ is 
to consider just the linear wave equation
\begin{equation}
\label{linearwaveequation}
\Box_g\psi=0
\end{equation}
on a fixed Reissner--Nordstr\"om or Kerr background.
Remarkably, it has been shown that, for solutions to $(\ref{linearwaveequation})$ arising from
sufficiently regular and localised initial data posed on a Cauchy hypersurface,
$\psi$ remains uniformly bounded on the entire maximal Cauchy development, in particular, on
the  black hole interior, up to and including the Cauchy horizon,
to which $\psi$ in fact extends continuously~\cite{annefranzen, Franzen2}.  
See also~\cite{Hintz:2015koq}.
Thus,  the \emph{amplitude} of $\psi$ is not affected by any blue-shift instability,
which only acts on \emph{derivatives} of $\psi$.
Indeed, for solutions arising from
\emph{generic} initial data in the above class, the  derivative of $\psi$ transversal 
to the Cauchy horizon has been proven~\cite{LukOhpub, DafShlap} to blow up identically
along the Cauchy horizon, in fact, $\psi$ fails to be in 
the Sobolev space $H^1_{\rm loc}$. This  means that extensions
of $\psi$ not only fail to solve $(\ref{linearwaveequation})$ classically
at the Cauchy horizon
 but cannot there
 be interpreted as ``finite-energy'' weak solutions of $(\ref{linearwaveequation})$.
 (For a previous conditional instability results, see~\cite{mcnamara1978instability, D2}.)
To obtain the above statement, it was essential 
 that the natural localisation assumption on data implies that solutions generically
 decay only inverse polynomially on the event horizon, which is then
dwarfed by  the
blue-shift at the Cauchy horizon, governed by an exponential growth mechanism. 
The proof given in our~\cite{DafShlap} was quite soft, exploiting  directly
the translation invariance and non-trivial transmission properties of the scattering
map, together with the properties of the Killing  generator of the 
Cauchy horizon.  An alternative approach has been given in~\cite{LukSbierski}, directly relating a lower bound on the event horizon to the blow up statement on the Cauchy horizon.

While Penrose's blue-shift instability property is familiar to the wider relativity community, 
the question of its precise strength has remained mostly confined to more specialist literature. 
Indeed,
the amplitude stability and continuous extendibility result for $(\ref{linearwaveequation})$
referred to above,
first suggested  by~\cite{McNamara1} and proven finally in~\cite{annefranzen},
remained largely unknown---perhaps because 
it was originally thought that one could not extrapolate
these latter stability statements to the fully nonlinear equations $(\ref{Einsteinvac})$. Indeed,
it was widely expected that the quadratic terms in $(\ref{Einsteinvac})$ would
lead to a highly non-linear behaviour once derivatives were sufficiently large,
leading to a breakdown of the basic causal structure of the spacetime metric $g$
associated to the Cauchy horizon, forming instead
a spacelike singularity beyond
which the metric $g$ itself failed to be continuously extendible, just as in Schwarzschild.
This expectation has been definitively
falsified, however, in~\cite{DafLuk1}, where it has been proven, in the context
of the fully nonlinear evolution under $(\ref{Einsteinvac})$, without symmetry assumptions,
that---assuming only the stability of the 
Kerr exterior region---then it follows that the Cauchy horizon persists as a null 
boundary of spacetime, and the metric indeed extends beyond continuously. 
(See~\cite{Hiscock, PI1, ori1991inner, D2} 
for earlier work on model spherically symmetric problems
and~\cite{Ori1997} for an heuristic study of the problem without symmetry.)
Thus, satisfying though it would have been, the
$C^0$-formulation of strong cosmic censorship,
described at the beginning of this paper, is in fact \underline{false}!
This  motivated  Christodoulou's  reformulation of strong cosmic censorship~\cite{Chr},
where ``inextendibility'' is stated in the class of metrics not just merely continuous but
now also required to
have locally square integrable  Christoffel symbols. 
In  analogy with the proxy problem $(\ref{linearwaveequation})$, this
formulation corresponds precisely to blow up in
$H^1_{\rm loc}$, and represents the threshold for the standard notion of weak
solution of the Einstein equations $(\ref{Einsteinvac})$. Though this notion of
inextendibility is not sufficient to ensure that classical
observers are torn apart by infinite tidal deformations before exiting
the domain of predictability, it still
provides a definitive sense in which the classical-physics description 
can be said to   locally break down
whenever predictability fails,
giving thus at least a partially 
satisfactory resolution to the paradox of 
Cauchy horizons.\footnote{There of course
is an even weaker formulation of strong cosmic censorship, where inextendibilty 
is required in the sense of a $C^2$ Lorentzian metric. We will take
the point of view here, however, that the $C^2$-formulation is manifestly 
unsatisfactory,
given that strong local well posedness results have already been 
shown for $(\ref{Einsteinvac})$ well below the $C^2$ threshold,
for instance at the level
of data which are $H^2$~\cite{Klainerman2015}. This point of view
is nicely explained also in~\cite{cardosoetal}.
Nonetheless, the $C^2$ formulation is still very useful to consider as a test-case for what
can be proven!}
It remains an open problem 
to show that the Christodoulou formulation indeed holds, at the very least in a neighbourhood of Kerr,
but there has been supporting recent work on spherically symmetric 
model 
problems~\cite{D2, LukOh2017one, LukOh2017two, VandeMoortel2018}
and for the vacuum without symmetry in~\cite{Lukweaknull}. 

Turning to the case $\Lambda>0$, already through the prism of the proxy problem $(\ref{linearwaveequation})$, the issue of the nature
of the blow-up for $\psi$ on the  Kerr--de Sitter and Reissner--Nordstr\"om--de Sitter 
spacetimes (satisfying $(\ref{EinsteinvacLambda})$ and $(\ref{eq:einsteinmaxwell})$ respectively) appears more delicate, and has been mired in confusion. See the discussion
in~\cite{Dafermos2014}. As opposed
to the asymptotically flat spacetimes satisfying $(\ref{Einsteinvac})$, 
where, for generic appropriate data,
$\psi$ decays inverse polynomially on 
the event horizon, in the latter two $\Lambda>0$ cases, $\psi$ decays exponentially~\cite{Haefner, 
Dafermos:2007jd, vasy}. 
On the one hand, this very fast decay is extremely fortuitous for proving non linear stability
theorems. Indeed,
exploiting this fast decay,
the stability properties of the region between the event and cosmological horizon,
under the full  nonlinear evolution of $(\ref{EinsteinvacLambda})$ and more generally 
$(\ref{eq:einsteinmaxwell})$,
have been inferred in the very slowly rotating case $|a|\ll M, Q$ in the remarkable 
recent~\cite{hintz2016global, hintz2016non}.\footnote{In particular, this allows
one to unconditionally apply the analogue of~\cite{DafLuk1} in the black hole interior
to definitively falsify the $C^0$ formulation of strong cosmic censorship for
$(\ref{EinsteinvacLambda})$ or more generally $(\ref{eq:einsteinmaxwell})$ in the
$\Lambda>0$ case. For work on a spherically symmetric non-linear toy model problem for
understanding strong cosmic censorship with $\Lambda>0$, see~\cite{Costa2017, possphsymmint}
and references therein. For the linear wave equation $(\ref{linearwaveequation})$ in the
black hole interior in the $\Lambda>0$ case see~\cite{hintzvasycosmoint}.}
This exponential decay means 
in principle, however, that decay along the event horizon is  in direct competition 
with the blue-shift associated to the Cauchy horizon generating exponential growth.
The precise exponential rates now matter!

For smooth initial data, the asymptotic behaviour of solutions in the
region between the event and cosmological horizons should be governed
by quasinormal modes~\cite{Chandrasekhar, dyatlov2011quasi,warnick}. Thus, the question of whether the blue-shift wins
appears to be connected to determining the ``spectral gap'',
the infimum of the imaginary parts of the quasinormal modes. 
For this, one must take into account phenomena connected to 
the event and cosmological horizons, 
slowly decaying solutions corresponding to trapped null geodesics,
as well as other, more subtle slowly damped modes corresponding to the near-extremal limit in
the superradiant case. The relevance of these for the problem at hand has been 
discussed in~\cite{bradymossmyers},~\cite{cardosoetal} and~\cite{Dias:2018ynt} 
respectively. (See also~\cite{hodhod} for the case of a spherically symmetric charged scalar field.) Remarkably, in the Kerr--de Sitter  case,  
 for all subextremal 
parameters,
it has very recently been argued that the spectral
gap is necessarily sufficiently small so as to expect the blue-shift effect to be
strong enough so as for 
the $H^1_{\rm loc}$ blow-up result to still hold~\cite{Dias:2018ynt}.
In the Reissner--Nordstr\"om--de Sitter case, however, there remains a range of black hole 
parameters for which the prospect of a relatively large
 spectral gap ``survives'' all the above obstructions, suggesting in particular
 that solutions $(\ref{linearwaveequation})$ do now extend to be $H^1_{\rm loc}$ at the
Cauchy horizon~\cite{cardosoetal, Dias:2018ynt}.
The above discussion thus  suggests the intriguing possibility that when
$\Lambda>0$, Christodoulou's formulation
of strong cosmic censorship is violated~\cite{Dafermos2014, HarveyPhy,cardosoetal} (at least
 for $(\ref{eq:einsteinmaxwell})$
if not for  $(\ref{EinsteinvacLambda})$).

The above apparent failure of even Christodoulou's revised formulation of strong cosmic censorship (already a weakening of the original $C^0$-formulation!)~would leave
a rather discomforting situation for general relativity in the presence of a positive
cosmological constant $\Lambda>0$: 
For if Cauchy horizons generically occur at which
spacetime can moreover still be interpreted as a weak solution of
the Einstein equations, then it is difficult to argue decisively that the classical description has
``broken down'', and thus, it would appear that the paradox  persists 
of classical predictability failing without manifestly exiting the classical regime.

{\bf \emph{The purpose of this short note is to suggest a way out.}}
We will prove  that, at the level of the proxy
 problem $(\ref{linearwaveequation})$,  there is indeed a way to retain the
 desirable generic
$H^1_{\rm loc}$ blowup at the Cauchy horizon: \emph{It suffices to 
consider a slightly less regular, but still well-motivated, class of initial data.}

To formulate our result, let $\widetilde{\mathcal{M}}$ denote maximally extended subextremal
Reissner--Nordstr\"om--de Sitter spacetime (or more generally,  Kerr--Newman--de Sitter
spacetime). Let $\widetilde\Sigma$ denote a complete spacelike hypersurface intersecting
two  cosmological horizons $\mathcal{C}^+$ as in Figure~\ref{fig1}. 
Initial data $(\Psi, \Psi')$ on $\widetilde\Sigma$ give rise to a solution $\psi$
on the future domain of dependence $D^+(\widetilde\Sigma)$, with $\psi|_{\widetilde{\Sigma}}=
\Psi$, $n_{\widetilde{\Sigma}}\psi|_{\widetilde{\Sigma}}=\Psi'$, where $n_{\widetilde{\Sigma}}$
denotes the future normal to $\widetilde{\Sigma}$.  
The local energy flux of $\psi$ along $\widetilde\Sigma$ is of course computable
in terms of initial data $(\Psi, \Psi')$, in particular $\psi$ has finite local energy flux along
$\widetilde\Sigma$ if $(\Psi,\Psi')\in H^1_{\rm loc}(\widetilde\Sigma)\times
L^2_{\rm loc}(\widetilde\Sigma)$. For brevity, we will say in this case
that the data $(\Psi, \Psi')$ have finite local energy
along $\widetilde{\Sigma}$. 

Our main result is the following
\begin{theorem}
\label{maintheoremINTRO}
Consider a subextremal Reissner--Nordstr\"om--de Sitter spacetime, or more generally,
Kerr--Newman--de Sitter spacetime $\widetilde{\mathcal{M}}$. 
For generic initial data $(\Psi, \Psi')$
with finite local energy along 
$\widetilde{\Sigma}$,  
the resulting solution $\psi$ of $(\ref{linearwaveequation})$
in $D^+(\widetilde{\Sigma})$ 
has infinite local energy along
hypersurfaces intersecting transversally the Cauchy horizon $\mathcal{CH}^+$,
i.e.~$\psi$ in particular fails to extend $H^1_{\rm loc}$ around any point
of $\mathcal{CH}^+$.
\end{theorem}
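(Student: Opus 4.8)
The plan is to localise the entire problem to the black hole interior and to extract a single sharp dichotomy governing $H^1$-regularity at $\mathcal{CH}^+$. Write $v$ for the Killing advanced-time parameter that both parametrises the event horizon $\mathcal{H}^+$ bounding the interior and approaches the Cauchy horizon $\mathcal{CH}^+$, and let $\kappa_->0$ denote the surface gravity of $\mathcal{CH}^+$, strictly positive precisely because we are subextremal. First I would record the two energy identities that frame the competition. On the one hand, the hypothesis that $(\Psi,\Psi')$ has finite local energy on $\widetilde\Sigma$ gives, through the bounded (if degenerate) $T$-energy estimate in the region between $\mathcal{C}^+$ and $\mathcal{H}^+$ and then into the interior, only that the Killing flux $\int_{\mathcal{H}^+}|\partial_v\psi|^2\,dv\,d\sigma_{S^2}$ is finite; crucially this is compatible with $\partial_v\psi|_{\mathcal{H}^+}$ decaying arbitrarily slowly (polynomially, or as a generic $L^2$ tail), in stark contrast to the exponential quasinormal decay enjoyed only by smooth data. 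On the other hand, passing to a regular null coordinate $V$ at $\mathcal{CH}^+$ with $dV\sim \kappa_-e^{-\kappa_- v}\,dv$, the geometric transversal derivative is $\partial_V\psi\sim \kappa_-^{-1}e^{\kappa_- v}\partial_v\psi$, so that the regular energy transverse to $\mathcal{CH}^+$ is
$$\int |\partial_V\psi|^2\,dV\ \sim\ \int e^{\kappa_- v}\,\big|\partial_v\psi\big|^2\,dv .$$
Thus $\psi$ extends in $H^1_{\rm loc}$ across $\mathcal{CH}^+$ if and only if the trace carries finite \emph{exponentially weighted} energy $e^{\kappa_- v/2}\partial_v\psi\in L^2(dv)$ --- a condition strictly stronger than mere finiteness of the Killing flux, and one that a generic $L^2$ datum fails, \emph{for every} $\kappa_->0$ and hence throughout the subextremal range.

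To make the heuristic identification of the two traces rigorous I would separate variables --- spherical harmonics for Reissner--Nordstr\"om--de Sitter, oblate spheroidal harmonics for Kerr--Newman--de Sitter --- and Fourier transform in $t$, reducing to a one-dimensional radial scattering problem in the tortoise coordinate. The interior evolution is then encoded in a transmission/reflection pair relating $\widehat{\partial_v\psi}|_{\mathcal{H}^+}(\omega)$ to $\widehat{\partial_v\psi}|_{\mathcal{CH}^+}(\omega)$, and the displayed blow-up criterion becomes, via Plancherel and a Paley--Wiener argument, the statement that $\widehat{\partial_v\psi}|_{\mathcal{CH}^+}$ fails to extend holomorphically with $L^2$ control to the strip $\{0<\mathrm{Im}\,\omega<\kappa_-/2\}$. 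The explicit construction is then to exhibit a finite-energy datum whose induced event-horizon trace has non-vanishing content at real frequencies transmitted non-trivially into the interior, so that the weighted integral above diverges; by the non-triviality of the interior transmission coefficient --- established by standard o.d.e.\ and analyticity arguments, so that no cancellation can conspire to restore finiteness --- the Cauchy-horizon trace is then genuinely of infinite weighted energy.

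Finally I would upgrade this single example to a genuinely generic statement, and here I expect the main work to lie. The soft route, following the earlier scattering argument, is to exploit time-translation invariance: translating the data by Killing time $\tau$ preserves its energy on $\widetilde\Sigma$ (as $T$ is Killing) while multiplying the regular $\mathcal{CH}^+$-energy of the solution by $e^{\kappa_-\tau}$ through the blue-shift weight, so the solution map from $\widetilde\Sigma$-data to $\mathcal{CH}^+$-energy is unbounded on bounded sets; a uniform-boundedness/Baire-category argument then forces the set of data with finite Cauchy-horizon energy to be meager, whence generic data blow up. Equivalently, the characterisation above exhibits the ``good'' set as the preimage of the weighted space $\{e^{\kappa_- v/2}f\in L^2\}$, a dense but meager subspace, so its complement is co-meager. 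The delicate points I anticipate are: verifying that the exterior scattering genuinely populates the event-horizon trace with the requisite real-frequency content (non-triviality of transmission from $\widetilde\Sigma$/$\mathcal{C}^+$ to $\mathcal{H}^+$); controlling the superradiant frequency regime and the spheroidal separation in the full Kerr--Newman--de Sitter case; and fixing the precise function-space topology in which ``generic'' is asserted so that the exceptional set is provably meager. None of these touches the core mechanism, which is robust: finite energy permits arbitrarily slow decay along $\mathcal{H}^+$, and any such slowly decaying trace is blue-shifted past the $H^1$ threshold at $\mathcal{CH}^+$.
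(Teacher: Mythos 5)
Your proposal contains a genuine gap, and it is located precisely at the point your write-up treats as obvious: the claim that ``finite energy permits arbitrarily slow decay along $\mathcal{H}^+$,'' so that the mechanism works ``for every $\kappa_->0$ and hence throughout the subextremal range.'' This is not correct, and the error is visible already at the level of modes. A mode $e^{-i\omega t}R(r)$ with $\omega=\omega_R-i\gamma$ generically contains the outgoing branch $(r-r_+)^{i\omega/\kappa_+}$ at the event horizon, whose transversal regularity there is $H^{1/2+\gamma/\kappa_+}$; hence a solution with finite \emph{local} (non-degenerate) energy across $\widetilde\Sigma\cap\mathcal{H}^+_A$ forces $\gamma>\kappa_+/2$ (unless $\omega$ is a quasinormal frequency, a discrete set with its own, possibly large, gap). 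So finite local energy does \emph{not} allow arbitrarily slow horizon decay: the slowest admissible rate is governed by $\kappa_+$, while blow-up at $\mathcal{CH}^+$ requires decay slower than $\kappa_-/2$. The window $\kappa_+/2<\gamma<\kappa_-/2$ is nonempty only because of the fundamental inequality $\kappa_->\kappa_+$ (Lemma~\ref{fundsurfgrav}), which both of the paper's proofs invoke as the essential ingredient and which your argument never uses. Relatedly, your reduction of blow-up to ``non-vanishing content at real frequencies transmitted non-trivially'' is insufficient: non-trivial real-frequency transmission does not make the weighted integral $\int e^{\kappa_- v}|\partial_v\psi|^2\,dv$ diverge. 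In the paper, the real-frequency non-triviality lemmata are used only in combination with analyticity of the scattering coefficients $\mathfrak{A},\widetilde{\mathfrak{B}}$ to guarantee their non-vanishing at a single \emph{complex} frequency $\omega_R-i\hat\kappa/2$ with $\kappa_+<\hat\kappa<\kappa_-$, for which blow-up is then manifest.

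The same omission invalidates your genericity step. Time translation does \emph{not} preserve the local energy of the data: only the degenerate $T$-flux is conserved, whereas the non-degenerate $H^1$-type norm of translated data grows like $e^{\kappa_+\tau}$, because translation concentrates the data toward the horizon crossing in the regular (Kruskal) coordinate --- this is exactly the computation $(\ref{h1psi})$ in the paper's second proof. If translation really preserved the data norm while multiplying the Cauchy-horizon energy by $e^{\kappa_-\tau}$, the result would follow for any $\kappa_->0$ independently of $\kappa_+$, which is too strong. The paper repairs this by combining translation with the rescaling $e^{-\hat\kappa\tau/2}$, $\kappa_+<\hat\kappa<\kappa_-$: then the data norm decays like $e^{(\kappa_+-\hat\kappa)\tau}$ (bounded in $H^{1+\epsilon}$ after interpolation with $\dot H^2$), while the regular $\mathcal{CH}^+$-energy still grows like $e^{(\kappa_--\hat\kappa)\tau}$, and only then does the uniform boundedness principle produce a single blow-up datum. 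Note also that the paper's genericity is then immediate by linearity (the ``co-dimension $1$'' argument: add a small multiple of the one constructed blow-up datum to any given datum); no meagerness argument is needed for the statement of Theorem~\ref{maintheoremINTRO}. Your overall architecture (interior transmission, weighted-$L^2$ blow-up criterion at $\mathcal{CH}^+$, translation/uniform-boundedness for genericity) does parallel the paper's second proof, but without the inequality $\kappa_->\kappa_+$ and the accompanying rescaling it does not close.
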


\begin{figure}
\begin{center}
\begin{tikzpicture}
\definecolor{light-gray}{gray}{.9}
\fill[light-gray] (-2,2)--(0,0) to [out = -20,in=180]  (2,-.4) to [out = 0,in=210] (3,-.2) -- (3,-3.8);
\fill[light-gray] (-2,2)--(0,-4) to [out = 20,in = 180] (2,-3.6) to [out = 0, in=150] (3,-3.8);
\fill[light-gray] (-2,2) -- (0,-4) -- (-2,-6) -- (-4,-4) to [out = 160, in = 0] (-6,-3.6) to [out = 0,in = 30] (-7,-3.8);
\fill[light-gray] (-2,2) -- (-4,0) to [out = 200,in = 0] (-6,-.4) to [out = 0,in = -30] (-7,-.2) --(-7,-3.8);

\draw (0,0) -- (2,-2) node[sloped,above,midway]{$\mathcal{C}^+$}; 
\draw (2,-2) --  (0,-4) node[sloped,below,midway]{$\mathcal{C}^-$};
\draw (0,-4) -- (-2,-2) node[sloped,below,midway]{$\mathcal{H}^-$}; 
\draw (-2,-2) -- (-4,0) node[sloped,above,midway]{$\mathcal{H}^+$}; 
\draw (-2,-2) -- (0,0) node[sloped,above,midway]{$\mathcal{H}^+$}; 
\draw (-4,0) -- (-2,2) node[sloped,above,midway]{$\mathcal{CH}^+$}; 
\draw (-2,2) -- (0,0) node[sloped,above,midway]{$\mathcal{CH}^+$}; 


\draw (-4,0) -- (-2,-2); 
\draw (-2,-2) --  (-4,-4) node[sloped,below,midway]{$\mathcal{H}^-$};
\draw (-4,-4) -- (-2,-6) node[sloped,below,midway]{$\mathcal{CH}^-$};
\draw (-2,-6) -- (0,-4) node[sloped,below,midway]{$\mathcal{CH}^-$};
\draw (-4,-4) -- (-6,-2);  
\draw (-6,-2) -- (-4,0);
\draw (0,-4) -- (-2,-6) -- (-4,-4);

\draw [dashed] (0,0) to [out = -20,in=180]  (2,-.4) to [out = 0,in=210] (3,-.2);

\draw [dashed] (0,-4) to [out = 20,in = 180] (2,-3.6) to [out = 0, in=150] (3,-3.8);
\draw (2,-2) -- (3,-1); 
\draw (-6,-2) -- (-4,0) node[above,midway]{$\mathcal{C}^+$};
\draw (-6,-2) -- (-4,-4) node[below,midway]{$\mathcal{C}^-$};
\draw (2,-2) -- (3,-3);

\draw [dashed] (-4,0) to [out = 200,in = 0] (-6,-.4) to [out = 0,in = -30] (-7,-.2) ;
\draw [dashed] (-4,-4) to [out = 160, in = 0] (-6,-3.6) to [out = 0,in = 30] (-7,-3.8);
\draw (-6,-2) -- (-7,-1); 
\draw (-6,-2) -- (-7,-3);

\path [draw=black,fill=white] (0,0) circle (1/16); 
\path [draw=black,fill=black] (2,-2) circle (1/16);
\path [draw=black,fill=white] (0,-4) circle (1/16); 
\path [draw=black,fill=white] (-4,-4) circle (1/16); 
\path [draw=black,fill=black] (-2,-2) circle (1/16);
\path [draw=black,fill=white] (-4,0) circle (1/16); 
\path [draw=black,fill=black] (-2,2) circle (1/16);
\path [draw = black, fill = black] (-2,-6) circle(1/16);
\path [draw = black,fill = black] (-6,-2) circle (1/16);
\draw (3,-2) node[right]{$\ldots$};
\draw (-7,-2) node[left]{$\ldots$};

\draw[very thick] (3,-.5) to [out=200,in=-20]node[sloped,above,midway]{$\widetilde{\Sigma}$} (-7,-.5) ;

\draw (2,-.1) node{$\mathcal{I}^+$};
\draw (2,-3.9) node{$\mathcal{I}^-$};
\draw (-6,-.1) node {$\mathcal{I}^+$};
\draw (-6,-3.9) node {$\mathcal{I}^-$};

\end{tikzpicture}
\end{center}
\caption{Portion of maximally extended Reissner--Nordstr\"{o}m--de Sitter and a hypersurface $\widetilde{\Sigma}$}\label{fig1}
\end{figure}
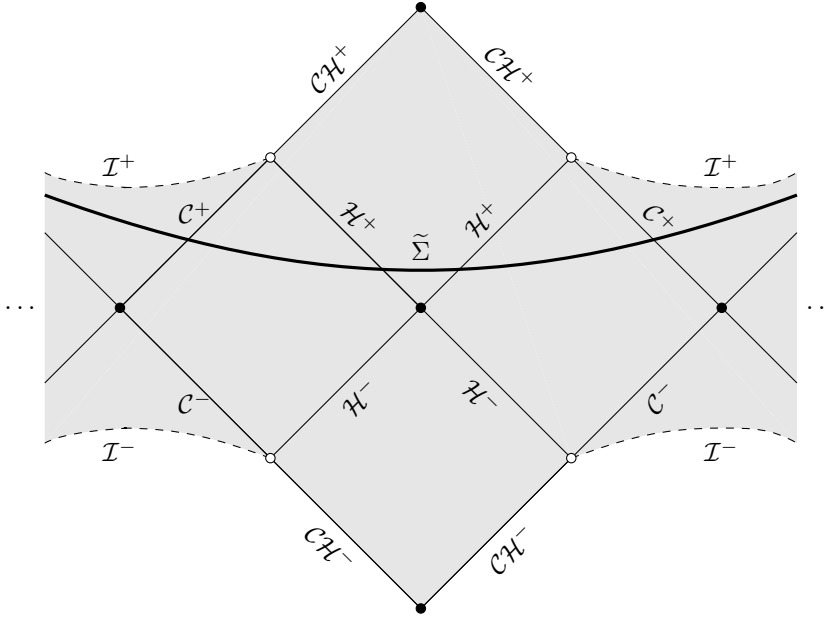

The genericity statement can be understood as the following ``co-dimension $1$ property'': 
For all Cauchy data $\left(\Psi_0,\Psi'_0\right)$ which lead to a solution $\psi_0$ of finite energy along hypersurfaces transversally intersecting the Cauchy horizon $\mathcal{CH}^+$, the solution $\psi$ corresponding to the Cauchy data $\left(\Psi_0 + c\Psi_1,\Psi'_0 + c\Psi'_1\right)$ has infinite energy along hypersurfaces transversally intersecting $\mathcal{CH}^+$ for some $\left(\Psi_1,\Psi_1'\right)$ and every $c\in \mathbb{R}\setminus\{0\}$. By linearity, it suffices to construct a single $\left(\Psi_1,\Psi_1'\right)$ in the case  $\left(\Psi_0,\Psi_0'\right) = (0,0)$. Note that this is analogous to the notion of genericity used by Christodoulou in his proof of weak cosmic censorship for the spherically symmetric Einstein-scalar-field system~\cite{Christodoulou4,Chrmil}. (We also observe that one can show that the initial data leading to the desired Cauchy horizon blow-up form a set of Baire second category within the class of all initial data. However, since smooth solutions are dense in $H^1$, in light of~\cite{Dafermos2014, HarveyPhy,cardosoetal} referred to above, we do not expect to show that the set of Cauchy data leading to $H^1_{\rm loc}$-blow up at $\mathcal{CH}^+$ is open.)

In fact, one can take generic initial data $(\Psi, \Psi')$ in a slightly more regular class,
i.e.~one can replace the assumption that  $\psi$ merely has finite local energy on $\widetilde{\Sigma}$ (i.e.~corresponding to data lying in the Sobolev space
$H^1_{\rm loc}(\widetilde{\Sigma})\times L^2_{\rm loc}(\widetilde{\Sigma})$), 
with the assumption that the data lie in
$H^{1+\epsilon}_{\rm loc}(\widetilde{\Sigma})\times H^\epsilon_{\rm loc}(\widetilde{\Sigma})$,
with $\epsilon\to 0$ however as extremality is approached.

Recall from the above discussion that, by linearity,  to obtain the above theorem it suffices to produce
a \emph{single} solution $\psi$, arising from data in the admissible  space, 
satisfying the claimed
blow up.
In the Reissner--Nordstr\"om--de Sitter case,
we can  in fact construct a spherically symmetric such $\psi$, or more generally,
 a $\psi$ whose angular frequency is 
supported on an arbitrary
fixed spherical harmonic number $\ell$. 
This means we can replace the space $H^{1(+\epsilon)}_{\rm loc}(\widetilde{\Sigma})\times H^{0(+\epsilon)}_{\rm loc}(\widetilde{\Sigma})$ in the above
statement with
a space (let us call it $\mathcal{D}$)
with \emph{arbitrary additional regularity in the angular directions}. Solutions 
$\psi$ of $(\ref{linearwaveequation})$ arising from  data in $\mathcal{D}$ would then
share the same phenomenology of behaviour as smooth $C^\infty$ solutions
(in particular, in view of the norms of~\cite{Dafermos:2007jd}, the analogue of the
results of~\cite{annefranzen} still apply). 
Moreover, the fact that one formulates strong cosmic censorship in terms
of inextendibility with low regularity strongly suggests that one should also
allow similarly low regularity initial data (cf.~the role of low regularity in 
the genericity assumption in the proof of weak cosmic censorship
under spherical symmetry~\cite{Christodoulou4}).
One can thus argue that there is no particular reason to prefer smoother initial data, and perhaps 
the above class of data indeed provides a more appropriate
setting in which to consider the genericity assumption of strong cosmic censorship.
We shall discuss this further in  Section~\ref{discussionsec}.

We shall here carry out the proof of Theorem~\ref{maintheoremINTRO}
in detail only in the Reissner--Nordstr\"om--de Sitter case,
as already in this case, the $H^1_{\rm loc}$ blowup shown here is thought
to fail for smooth data if the black hole parameters are suitably
close to extremality~\cite{Dafermos2014, cardosoetal}. 
To make this paper self-contained, we will give  
an explict construction of  the  Reissner--Nordstr\"om--de Sitter 
metric in the relevant region $\mathcal{M}$ (see Section~\ref{prelim}).
The theorem  easily reduces to
Theorem~\ref{mainresult} (see Section~\ref{resultsection}), 
concerning only region $\mathcal{M}$, 
 which is the precise formulation we shall prove. 
We shall in fact
provide \emph{two} distinct proofs.
Our first proof (see Section~\ref{modes}) is explicit and constructs a solution suitably 
blowing up at the Cauchy horizon as a mode solution
whose time frequency has negative imaginary part,  related to the regularity
at the event horizon. We note that this imaginary part may be \emph{less than} the
spectral gap associated to smooth data; thus one sees how the problem posed
by determining the precise spectral gap is completely circumvented by passing to lower 
regularity (cf.~Definition 3.19 given in~\cite{warnick} for an $H^k$-quasinormal mode
and also the example discussed in Section~6 of that paper).
Our second proof (see Section~\ref{timetranslate})
is softer, and appeals directly to the time translation invariance properties of scattering
maps.
In both proofs, the fundamental inequality $\kappa_->\kappa_+$ connecting
the surface gravities of the Cauchy and event horizons plays an essential role.
In addition, both proofs require appealing to the nonvanishing of transmission and reflexion of
suitable scattering maps for a certain  open set  of frequencies; this is here inferred
exploiting analyticity properties.
As is clear from the second proof, 
equation $(\ref{linearwaveequation})$ can be replaced by 
a wide class of translation-invariant wave-type
equations, on spacetimes sharing only the basic qualitative properties of the 
respective horizons. 
We shall leave, however, such further generalisations of our result to another 
occasion.

\paragraph{Acknowledgement.} The authors would like to thank Christoph Kehle,
Harvey Reall, 
Igor Rodnianski, Jorge Santos and Claude Warnick for 
useful conversations and sharing their insights on aspects of this problem. 
YS acknowledges support from the NSF Postdoctoral Research Fellowship under award no.\ 1502569. MD acknowledges support through NSF grant DMS-1709270 and EPSRC grant EP/K00865X/1.

\section{The Reissner--Nordstr\"om--de Sitter metrics}\label{prelim}
In this section we will quickly review the structure of the Reissner--Nordstr\"om--de 
Sitter spacetime and introduce the relevant notation.

We will eventually define a manifold $(\mathcal{M},g)$ with stratified boundary corresponding
to the union of a static region $\mathcal{M}_{\rm static}$ 
(bounded by a bifurcate event horizon $\mathcal{H}^+_A\cup\mathcal{H}^-$ 
and cosmological horizon $\mathcal{C}^+$) and
a black hole interior region
$\mathcal{M}_{\rm int}$ 
(bounded by a bifurcate event horizon $\mathcal{H}^+_B\cup\mathcal{H}^+_A$ 
and Cauchy horizon $\mathcal{CH}^+_B\cup \mathcal{CH}^+_A$).
See already Figure~\ref{fig2}. 
We review this construction explicitly here, as we shall make use of the properties
of the various underlying coordinate systems. 

Note that the above  $(\mathcal{M},g)$ 
is itself still only a subset of the maximally extended $(\widetilde{\mathcal{M}} , \widetilde{g})$
Reissner--Nordstr\"om--de Sitter referred to in Section~\ref{theintrosec}.
Since Theorem~\ref{maintheoremINTRO} quickly reduces to a
statement on $(\mathcal{M},g)$, we shall not discuss the explict construction of
$\widetilde{\mathcal{M}}$ here (see already Section~\ref{resultsection}).

\subsection{Schwarzschild coordinates and the static patch $\mathcal{M}_{\rm static}$}
We say that a three-tuple of positive constants $(M,e,\Lambda)$ is \emph{non-degenerate} if the function
\[1 - \frac{2M}{r} +\frac{e^2}{r^2} - \frac{\Lambda}{3}r^2\]
has three distinct positive zeros, which we then label
\begin{equation}\label{rplusminusc}
0 < r_- < r_+ < r_c < \infty.
\end{equation}
Henceforth, we shall always consider a fixed choice of such $(M,e,\Lambda)$.

We define the  \emph{static region} of  
Reissner--Nordstr\"{o}m--de Sitter  with parameters 
 $(M,e,\Lambda)$
 to be the manifold  $\mathcal{M}_{{\rm static}}$ defined  by the coordinate range 
 $(t,r,\theta,\phi) \in \mathbb{R} \times (r_+,r_c) \times \mathbb{S}^2 \doteq \mathcal{M}_{{\rm static}}$ 
 with  metric given by
\begin{equation}\label{metricsimple}
g \doteq -\left(1 - \frac{2M}{r} +\frac{e^2}{r^2} - \frac{\Lambda}{3}r^2\right)dt^2 + \left(1 - \frac{2M}{r} +\frac{e^2}{r^2} - \frac{\Lambda}{3}r^2\right)^{-1}dr^2 + r^2d\sigma_{\mathbb{S}^2},
\end{equation}
where $d\sigma_{\mathbb{S}^2}=d\theta^2+\sin^2\theta d\phi^2$ 
denotes the round metric on $\mathbb{S}^2$.  
We call the above coordinates \emph{Schwarzschild coordinates}.
Note that the metric $(\ref{metricsimple})$ is manifestly stationary and spherically
symmetric, with Killing fields $T\doteq \partial_t$ and $\Omega_1,\Omega_2, \Omega_3$,
where $\Omega_i$ denote the standard angular momentum operators in $(\theta,\phi)$
coordinates.

There exists  a one parameter family
of stationary spherically symmetric two-forms $F_{\mu\nu}$
on $\mathcal{M}_{{\rm static}}$
 so that the triple $(\mathcal{M}_{{\rm static}},g, F)$ is now a solution to the Einstein--Maxwell equations with a positive cosmological constant $\Lambda>0$ $(\ref{eq:einsteinmaxwell})$ (see~\cite{carterreview}).
The choice of electromagnetic field will have no relevance in this paper; 
in what follows, we shall only refer to the underlying metric $(\ref{metricsimple})$.

We will time orient $\left(\mathcal{M}_{{\rm static}},g\right)$ with the timelike vector field $T=\partial_t$. 

\subsection{Outgoing Eddington--Finkelstein coordinates attaching $\mathcal{CH}^+_B\cup\mathcal{M}_{\rm interior}\cup \mathcal{H}^+_A$ and $\mathcal{C}^-$}
\label{outgoingcorfir}

The metric $g$ defined by~\eqref{metricsimple} can be smoothly extended to a larger manifold; we now succinctly review the construction. 

First, it is convenient to introduce a function $r^*(r)$ by setting
\begin{equation}\label{rstar}
\frac{dr^*}{dr} = \left(1 - \frac{2M}{r} +\frac{e^2}{r^2} - \frac{\Lambda}{3}r^2\right)^{-1},\qquad r^*\left(\frac{r_++r_c}{2}\right) = 0.
\end{equation}
Note that the range $r \in (r_+,r_c)$ corresponds to $r^* \in (-\infty,\infty)$. Then we may define $v \doteq t+ r^*$, and one finds that in the \emph{outgoing Eddington--Finklestein coordinates} $(v,r,\theta,\phi)$, the metric $g$ defined by~\eqref{metricsimple} becomes
\begin{equation}\label{metricoutgoing}
g = -\left(1 - \frac{2M}{r} +\frac{e^2}{r^2} - \frac{\Lambda}{3}r^2\right)dv^2 + 2dvdr + r^2d\sigma_{\mathbb{S}^2}.
\end{equation}
The manifold $\mathcal{M}_{{\rm static}}$ corresponds to the coordinate range $(v,r,\theta,\phi) \in \mathbb{R} \times (r_+,r_c) \times \mathbb{S}^2$.

It is now manifest that the expression~\eqref{metricoutgoing} in fact also defines a smooth Lorentzian metric on the manifold with boundary $\mathcal{M}_0$ defined by the coordinate range
$(v,r,\theta,\phi) \in \mathbb{R} \times [r_-,r_c] \times \mathbb{S}^2\doteq \mathcal{M}_0$. 
The manifold $\mathcal{M}_{{\rm static}}$ is thus an open submanifold of
$\mathcal{M}_0$ corresponding to the subset $r_+<r<r_c$. Note that defining
$T=\partial_v$ with respect to the above coordinates, $T$ is a Killing field
on $\mathcal{M}_0$ smoothly extending 
the definition from $\mathcal{M}_{{\rm static}}$. 

Let us define the \emph{black hole interior region} to be the open subset
$\mathcal{M}_{{\rm interior}}\doteq \{p\in\mathcal{M}_0:r_-<r(p)<r_+\}$ 
and the \emph{outgoing future event horizon} to
be the hypersurface
$\mathcal{H}^+_A= \{p\in \mathcal{M}_0:r(p)=r_+\}$.  Note that $T$ is spacelike on
$\mathcal{M}_{{\rm interior}}$ and null and tangent to $\mathcal{H}^+_A$.
In particular, $\mathcal{H}^+_A$ is a null hypersurface. 

The boundary of $\mathcal{M}_0$ (as a manifold with boundary) consists
of two components $\partial\mathcal{M}_0= \mathcal{CH}^+_B\cup \mathcal{C}^-$ defined by 
\[
\mathcal{CH}^+_B \doteq \{p\in \mathcal{M}_0: r(p) = r_-\} ,\qquad \mathcal{C}^-
 \doteq \{p\in \mathcal{M}_0 :r(p) = r_c\}.
\]
The vector field $T$ is null and tangential to these hypersurfaces which
are in particular thus null and Killing horizons. We shall call $\mathcal{CH}^+_B$ the \emph{outgoing
Cauchy horizon} and $\mathcal{C}^-$ the  \emph{past cosmological horizon}.

Finally let us note that the future orientation defined by $T$ 
on $\mathcal{M}_{\rm static}$ extends
to a unique future orientation on $\mathcal{M}_0$. According to this,
the vector field $T$ is future directed null on $\mathcal{H}^+_A$, $\mathcal{CH}^+_B$
and $\mathcal{C}^-$.

\subsection{Ingoing Eddington--Finkelstein coordinates on $\mathcal{M}_{\rm static}$  attaching $\mathcal{H}^-$ and $\mathcal{C}^+$}
\label{ingoingfir}

Returning to $\left(\mathcal{M}_{{\rm static}},g\right)$, we can set $u \doteq t- r^*$ and similarly define \emph{ingoing Eddington--Finklestein coordinates} $(u,r,\theta,\phi)$. The manifold $\mathcal{M}_{{\rm static}}$ corresponds to $(u,r,\theta,\phi) \in \mathbb{R} \times (r_+,r_c)\times \mathbb{S}^2$. The metric now takes the form
\begin{equation}\label{metricingoing}
g = -\left(1 - \frac{2M}{r} +\frac{e^2}{r^2} - \frac{\Lambda}{3}r^2\right)du^2 - 2dudr + r^2\sigma_{\mathbb{S}^2}.
\end{equation}

Similarly to the previous section, it is immediately clear that the metric $g$ extends to a smooth Lorentzian manifold with boundary $\widehat{\mathcal{M}}_{\rm static}$
defined by the coordinate range 
$(u,r,\theta,\phi) \in \mathbb{R} \times [r_+,r_c] \times \mathbb{S}^2\doteq
\widehat{\mathcal{M}}_{\rm static}$. 
The boundary $\partial \widehat{\mathcal{M}}_{\rm static}=\mathcal{H}^-\cup \mathcal{C}^+$ 
of this manifold consists of the hypersurfaces 
\[
\mathcal{H}^- \doteq \{p\in \widehat{\mathcal{M}}_{\rm static}:
r(p)=r_+\}, \qquad \mathcal{C}^+ \doteq \{p\in \widehat{\mathcal{M}}_{\rm static}: r(p) = r_c\}.
\]
which we shall refer to respectively as the \emph{past event horizon} and
the \emph{future cosmological horizon}.
Note that the Killing vector field $T$ of the static region extends 
to this new manifold with boundary by $T=\partial_u$, and this is tangential
and null on $\mathcal{H}^-$ and $\mathcal{C}^+$. These are thus null and
Killing horizons.

We may already now attach the above null hypersurfaces
$\mathcal{H}^-$ and $\mathcal{C}^+$ as
additional boundary to the manifold with boundary $\mathcal{M}_0$
to obtain 
a manifold with boundary
$\mathcal{M}_1=\mathcal{M}_0\cup \mathcal{H}^-\cup\mathcal{C}^+$
on which $T$ is globally defined.
Note that $\mathcal{M}_1$
inherits the time orientation from $\mathcal{M}_0$
and $T$ is future-directed on $\mathcal{H}^-$ and $\mathcal{C}^+$.

\subsection{Ingoing Eddington--Finkelstein coordinates on $\mathcal{M}_{\rm interior}$
attaching $\mathcal{CH}^+_A$ and $\mathcal{H}^+_B$}
\label{ingoingcoorsec}

Next we will define similar ingoing Eddington--Finkelstein coordinates on 
 $\mathcal{M}_{{\rm interior}}$, which allow us to attach two additional null hypersurfaces.

In analogy with~\eqref{rstar}, we define the function $r^*(r)$ in the region 
$\mathcal{M}_{{\rm interior}}$ by
 \begin{equation}\label{rstar2}
\frac{dr^*}{dr} = \left(1 - \frac{2M}{r} +\frac{e^2}{r^2} - \frac{\Lambda}{3}r^2\right)^{-1},\qquad r^*\left(\frac{r_-+r_+}{2}\right) = 0.
\end{equation}
We have $r^*(r_-,r_+) = (-\infty,\infty)$. (Note, however, that $\frac{dr^*}{dr} < 0$.) Now, analogously to the region $\mathcal{M}_{{\rm static}}$, we can define $\tilde u = r^*-t$ in 
$\mathcal{M}_{{\rm interior}}$ and then cover $\mathcal{M}_{{\rm interior}}$ with 
coordinates $(\tilde u,r,\theta,\phi) \in \mathbb{R} \times (r_-,r_+) \times \mathbb{S}^2$.  Just as before, the metric extends smoothly to 
$(\tilde u,r,\theta,\phi) \in \mathbb{R} \times [r_-,r_+] \times \mathbb{S}^2\doteq
\widehat{\mathcal{M}}_{\rm interior}$, and this leads to the definition of the boundary
hypersurfaces
\[
\mathcal{H}^+_B \doteq \{p\in \widehat{\mathcal{M}}_{\rm interior}:r(p)=r_+\},\qquad \mathcal{CH}^+_A \doteq \{p\in \widehat{\mathcal{M}}_{\rm interior}:r(p) = r_-\}
\]
which we may now attach to obtain a manifold with boundary $\mathcal{M}_2=\mathcal{M}_1\cup \mathcal{H}^+_B\cup
\mathcal{CH}^+_A$, which has the additional boundary components
$\mathcal{H}^+_B$ and $\mathcal{CH}^+_A$. We will refer to
these as the \emph{ingoing future event horizon} and the \emph{ingoing Cauchy horizon},
respectively.

This new manifold  $\mathcal{M}_2$ with boundary again inherits a time orientation.
The Killing field $T$ extends globally to $\mathcal{M}_2$ and is again null on
(and tangential to) $\mathcal{H}^+_B$ and $\mathcal{CH}^+_A$. Note
however that $T$ is now \emph{past-directed} on both $\mathcal{H}^+_B$ and $\mathcal{CH}^+_A$.

\subsection{The surface gravities $\kappa_+$, $\kappa_-$ and $\kappa_c$ and the 
inequality $\kappa_->\kappa_+$}

Before further extending the metric to obtain our final $\mathcal{M}$, 
let us discuss further the behaviour of the Killing field $T$ on the horizons.

One can infer immediately from the spherical symmetry
of the metric and the fact that 
$T$ is the Killing null generator of the various Killing horizons $\mathcal{C}^{\pm}$, $\mathcal{H}^+_A$, $\mathcal{H}^-$, $\mathcal{H}^+_B$, $\mathcal{CH}^+_B$ and $\mathcal{CH}^+_A$,
that there exist constants $\kappa_c$, $\kappa_+$ and $\kappa_-$ so that 
\[\nabla_TT|_{\mathcal{H}^+_A} = \kappa_+T,\qquad \nabla_TT|_{\mathcal{C}^+} = \kappa_cT,\qquad \nabla_TT|_{\mathcal{CH}^+_B} = -\kappa_-T,\]
\[\nabla_TT|_{\mathcal{H}^-} = -\kappa_+T,\qquad \nabla_TT|_{\mathcal{H}^+_B} = \kappa_+T,\qquad \nabla_TT|_{\mathcal{C}^-} = -\kappa_cT,\qquad \nabla_TT|_{\mathcal{CH}^+_A} = \kappa_-T.\]

These constants $\kappa_+$, $\kappa_-$ and $\kappa_c$ are the various \emph{surface gravities} of the horizons. In order to calculate the surface gravities it is useful to observe that
\[1 - \frac{2M}{r} + \frac{e^2}{r^2} - \frac{\Lambda}{3}r^2 = -\frac{\Lambda}{3r^2}(r-r_-)(r-r_+)(r-r_c)(r-\tilde{r}_c),\]
where
\[\tilde{r}_c = - r_- - r_+-r_c < 0.\]

Then, using that
\[\kappa_c = -\frac{1}{2}\frac{\partial}{\partial r}\left(1-\frac{2M}{r} + \frac{e^2}{r^2} - \frac{\Lambda}{3}r^2\right)|_{r=r_c},\qquad \kappa_+ = \frac{1}{2}\frac{\partial}{\partial r}\left(1-\frac{2M}{r} + \frac{e^2}{r^2} - \frac{\Lambda}{3}r^2\right)|_{r=r_+},\]
\[\kappa_- = -\frac{1}{2}\frac{\partial}{\partial r}\left(1-\frac{2M}{r} + \frac{e^2}{r^2} - \frac{\Lambda}{3}r^2\right)|_{r=r_-},\]
we obtain
\[\kappa_c = \frac{\Lambda}{6}(r_c-r_+)(r_c-r_-)(r_c-\tilde r_c)r_c^{-2},\qquad \kappa_+ = \frac{\Lambda}{6}(r_+-r_-)(r_c-r_+)(r_+-\tilde r_c)r_+^{-2},\]
\[\kappa_- = \frac{\Lambda}{6}(r_+ -r_-)(r_c-r_-)(r_--\tilde r_c)r_-^{-2}.\]
Thus we see that $\kappa_c$, $\kappa_+$ and $\kappa_-$ are all positive. The following
well-known inequality (which the reader can readily verify) is of fundamental importance for our main results:
\begin{lemma}\label{fundsurfgrav} For each tuple $\left(M,e,\Lambda\right)$ of non-degenerate constants, we have 
\begin{equation*}
\kappa_- > \kappa_+.
\end{equation*}
\end{lemma}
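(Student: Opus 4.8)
The plan is to argue directly from the explicit expressions for $\kappa_+$ and $\kappa_-$ displayed just above the statement, reducing the claimed inequality to an elementary polynomial inequality in the three radii $0 < r_- < r_+ < r_c$ and then exhibiting a manifestly positive factorisation of the difference of the two sides. No geometric input beyond the already-derived formulas is needed.

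First I would cancel the common positive factor. Since both $\kappa_-$ and $\kappa_+$ contain the factor $\tfrac{\Lambda}{6}(r_+-r_-)>0$, the inequality $\kappa_->\kappa_+$ is equivalent to
\begin{equation*}
\frac{(r_c-r_-)(r_--\tilde{r}_c)}{r_-^2} > \frac{(r_c-r_+)(r_+-\tilde{r}_c)}{r_+^2}.
\end{equation*}
Next I would substitute $\tilde{r}_c = -(r_-+r_++r_c)$, which gives $r_--\tilde{r}_c = 2r_-+r_++r_c$ and $r_+-\tilde{r}_c = r_-+2r_++r_c$; this expresses both sides through $r_-,r_+,r_c$ alone, and every factor that occurs is positive by virtue of $0<r_-<r_+<r_c$.

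Then I would clear the positive denominators $r_-^2 r_+^2$ and expand both sides. The decisive computational point — and the only mildly nonobvious step — is that upon forming the difference of the two sides the quartic terms $2\,r_-^2 r_+^2$ appearing on each side cancel exactly, leaving a difference that factors as
\begin{equation*}
\big(r_+^2 - r_-^2\big)\,\big(r_c^2 + r_c(r_-+r_+) - r_- r_+\big).
\end{equation*}
To conclude, the first factor is positive because $r_+>r_->0$, and the second is positive because $r_c>r_+>r_-$ forces $r_c^2 > r_-r_+$ while the remaining summand $r_c(r_-+r_+)$ is positive; hence the difference is strictly positive, which is precisely $\kappa_->\kappa_+$.

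There is no serious conceptual obstacle here: the statement is a self-contained algebraic inequality, and the only care required is the bookkeeping in the expansion together with spotting the cancellation that renders the positive factorisation visible. (An alternative, slightly more structural route would be to set $f(r) = (r_c-r)(r-\tilde{r}_c)/r^2$ and deduce $f(r_-)>f(r_+)$ from monotonicity; however, since $f$ need not be monotone on all of $(0,r_c)$ this would require an extra case analysis, so I would prefer the direct factorisation above.)
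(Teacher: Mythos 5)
Your proof is correct, and it is essentially the paper's own approach: the paper states the inequality as one ``which the reader can readily verify'' from the explicit surface-gravity formulas displayed just above the lemma, and your argument---cancelling the common factor $\tfrac{\Lambda}{6}(r_+-r_-)$, substituting $\tilde{r}_c=-(r_-+r_++r_c)$, clearing denominators, and factoring the difference as $\bigl(r_+^2-r_-^2\bigr)\bigl(r_c^2+r_c(r_-+r_+)-r_-r_+\bigr)>0$---is precisely that verification, carried out correctly.
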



\subsection{Kruskal coordinates  attaching the bifurcation spheres $\mathcal{B}_+$, $\mathcal{B}_-$ and $\mathcal{B}_c$}\label{krusky}

Finally, we shall introduce Kruskal coordinates allowing us to extend our manifold to
include the three bifurcation spheres $\mathcal{B}_c$, $\mathcal{B}_+$, and $\mathcal{B}_-$.

We start with the coordinates which will define $\mathcal{B}_c$. We define two functions $U_c(t,r)$ and $V_c(t,r)$ in $\mathcal{M}_{{\rm static}}$ by
\[
U_c\left(t,r\right) \doteq \exp\left(\kappa_c\left(t-r^*\right)\right),\qquad V_c\left(t,r\right) \doteq -\exp\left(-\kappa_c\left(t+r^*\right)\right).
\]
Then $\mathcal{M}_{{\rm static}}$ corresponds to the range $\{(U_c,V_c,\theta,\phi) \in (0,\infty)\times (-\infty,0) \times \mathbb{S}^2\}$. It then turns out that the metric extends to a smooth Lorentzian metric with stratified boundary on $\{(U_c,V_c,\theta,\phi) \in [0,\infty)\times (-\infty,0] \times \mathbb{S}^2\}$ (the explicit form that the metric takes in these coordinates  will not be relevant for this paper). The hypersurface $\{\{U_c = 0\} \times \{V \in (-\infty,0\} \times \mathbb{S}^2\}$ may be identified with $\mathcal{C}^-$ and the hypersurface $\{\{V_c = 0\} \times \{U \in (0,\infty\} \times \mathbb{S}^2\}$ may be identified with $\mathcal{C}^+$. However, we obtain a new sphere $\mathcal{B}_c \doteq \{\{U_c = 0\} \times \{V_c = 0\} \times \mathbb{S}^2\}$.

The vector field $T$ smoothly extends to $\mathcal{B}_c$ 
where it vanishes, and we have the following formulae:
\[\frac{\partial}{\partial U_c}\Big|_{\mathcal{C}^+} = e^{-\kappa_cu}T,\qquad \frac{\partial}{\partial V_c}\Big|_{\mathcal{C}^-} = e^{\kappa_cv}T.\]

The bifurcation spheres $\mathcal{B}_+$ and $\mathcal{B}_-$ are defined in an analogous fashion and end up corresponding to the common boundary of $\mathcal{H}^+_A$ and $\mathcal{H}^-$ and $\mathcal{CH}^+_B$ and $\mathcal{CH}^+_A$ respectively. The associated Kruskal coordinates can be constructed from 
\begin{equation}\label{kruskalplus}
U_+\left(t,r\right)|_{\mathcal{M}_{{\rm static}}} \doteq -\exp\left(-\kappa_+\left(t-r^*\right)\right),\qquad V_+\left(t,r\right)|_{\mathcal{M}_{{\rm static}}} \doteq \exp\left(\kappa_+\left(t+r^*\right)\right),
\end{equation}
\begin{equation}\label{kruskalminus}
U_-\left(t,r\right)|_{\mathcal{M}_{{\rm interior}}} \doteq -\exp\left(-\kappa_-\left(r^*-t\right)\right),\qquad V_-\left(t,r\right)|_{\mathcal{M}_{{\rm interior}}} \doteq -\exp\left(-\kappa_-\left(t+r^*\right)\right).
\end{equation}
As with $\mathcal{B}_c$, $T$ extends smoothly to $\mathcal{B}_+$ and $\mathcal{B}_-$ where it vanishes and we have the formulas:
\begin{equation}\label{Tform}
\frac{\partial}{\partial U_+}|_{\mathcal{H}^-} = e^{\kappa_+u}T,\qquad \frac{\partial}{\partial V_+}|_{\mathcal{H}^+_A} = e^{-\kappa_+v}T,\qquad \frac{\partial}{\partial V_-}|_{\mathcal{CH}^+_B} = e^{\kappa_-v}T,\qquad \frac{\partial}{\partial U_+}|_{\mathcal{CH}^+_A} = e^{\kappa_cu}T.
\end{equation}

We finally define
\begin{equation}
\label{thisisM}
\mathcal{M}=\mathcal{M}_2 \cup\mathcal{B}_-\cup\mathcal{B}_c\cup\mathcal{B}_+
\end{equation}
with differential structure defined by the 
above charts. It follows that
the above is a manifold with stratified boundary on which
 $g$ extends smoothly as a time-oriented Lorentzian metric.
Note that the interior of $\mathcal{M}$ is given by $\mathcal{M}_{\rm interior}\cup
\mathcal{H}^+_A\cup\mathcal{M}_{\rm static}$ and the boundary 
$\partial\mathcal{M}$ is given by the union
\begin{equation}
\label{dis}
\left(\mathcal{H}^+_B\cup\mathcal{B}_+\cup\mathcal{H}^-\right)\bigcup
\left(\mathcal{CH}^+_B\cup\mathcal{B}_-\cup\mathcal{CH}^+_A\right)\bigcup
\left(\mathcal{C}^+\cup\mathcal{B}_c\cup\mathcal{C}^-\right).
\end{equation}
We note that $\left(\mathcal{H}^+_B\cup\mathcal{B}_+\cup\mathcal{H}^-\right)$
is a smooth boundary hypersurface, but the other two sets in brackets in
$(\ref{dis})$ are unions of transversally intersecting smooth hypersurfaces-with-boundary, with
common boundary $\mathcal{B}_-$, $\mathcal{B}_c$ respectively,
e.g.~
\[
\left(\mathcal{CH}^+_B\cup\mathcal{B}_-\cup\mathcal{CH}^+_A\right)=
(\mathcal{CH}^+_B\cup\mathcal{B}_-)\cup 
(\mathcal{B}_-\cup\mathcal{CH}^+_A)
\]
with $\mathcal{CH}^+_B\cap\mathcal{CH}^+_A=\emptyset$.

In Figure~\ref{fig2} we depict the standard Penrose diagram for  $\left(\mathcal{M},g\right)$.

We note finally that the electromagnetic tensor $F_{\mu\nu}$ on $\mathcal{M}_{\rm static}$
extends
to an electromagnetic tensor $F_{\mu\nu}$ on $\mathcal{M}$ so that
the triple $(\mathcal{M},g, F_{\mu\nu})$ still satisfies $(\ref{eq:einsteinmaxwell})$. 

\begin{figure}
\begin{center}
\begin{tikzpicture}
\definecolor{light-gray}{gray}{.9}
\fill[light-gray] (-2,2)--(2,-2)--(0,-4) -- (-4,0); 
\draw (0,0) -- (2,-2) node[sloped,above,midway]{$\mathcal{C}^+$}; 
\draw (2,-2) --  (0,-4) node[sloped,below,midway]{$\mathcal{C}^-$};
\draw (0,-4) -- (-2,-2) node[sloped,below,midway]{$\mathcal{H}^-$}; 
\draw (-2,-2) -- (-4,0) node[sloped,below,midway]{$\mathcal{H}^+_B$}; 
\draw (-2,-2) -- (0,0) node[sloped,above,midway]{$\mathcal{H}^+_A$}; 
\draw (-4,0) -- (-2,2) node[sloped,above,midway]{$\mathcal{CH}^+_B$}; 
\draw (-2,2) -- (0,0) node[sloped,above,midway]{$\mathcal{CH}^+_A$}; 
\path [draw=black,fill=white] (0,0) circle (1/16); 
\path [draw=black,fill=black] (2,-2) circle (1/16) node[right]{$\mathcal{B}_c$}; 
\path [draw=black,fill=white] (0,-4) circle (1/16); 
\path [draw=black,fill=black] (-2,-2) circle (1/16) node[left]{$\mathcal{B}_+$}; 
\path [draw=black,fill=white] (-4,0) circle (1/16); 
\path [draw=black,fill=black] (-2,2) circle (1/16) node[left]{$\mathcal{B}_-$}; 

\draw (-2,0) node{$\mathcal{M}_{\rm interior}$}; 
\draw (0,-2) node{$\mathcal{M}_{\rm static}$}; 

\end{tikzpicture}
\end{center}
\caption{The manfiold-with-stratified-boundary $\mathcal{M}$}\label{fig2}
\end{figure}
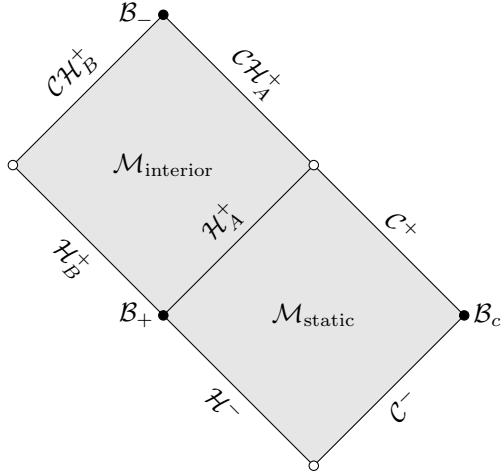

\section{Precise statement of the main theorem}\label{resultsection}

In this section, we will give a precise statement of the main result of this paper as Theorem~\ref{mainresult}
(see Section~\ref{mainresultsubse}).
This formulation only refers to the region $\mathcal{M}$ constructed explicitly
in the previous section. We shall then explain (see Section~\ref{redux})
how Theorem~\ref{maintheoremINTRO} of Section~\ref{theintrosec} can immediately
be reduced to this statement.

\subsection{Statement of Theorem~\ref{mainresult}}
\label{mainresultsubse}

Let $\mathcal{M}$ be defined by~\eqref{thisisM}.
We will consider below hypersurfaces-with-boundary $\Sigma \subset \mathcal{M}\setminus \mathcal{H}^+_B$ 
which are
connected, spacelike and compact, 
 transversally intersect $\mathcal{C}^+$ and $\mathcal{H}^+_A$ and have
boundary consisting of a single sphere in $\mathcal{M}_{{\rm interior}}$ and a single sphere in $\mathcal{C}^+$. (We shall soon specialise to the case where $\Sigma$ is itself
spherically symmetric.)

A solution $\psi$ to the wave equation is uniquely determined in $D^+_{\mathcal{M}}\left(\Sigma\right)$, the domain of dependence of $\Sigma$, by its corresponding Cauchy data along $\Sigma$. In particular, we have the following well-known proposition.

\begin{proposition}\label{local}
Given Cauchy data $\left(\Psi,\Psi'\right) \in H^s(\Sigma) \times H^{s-1}(\Sigma)$, there exists a weak solution $\psi$ to the wave equation
$(\ref{linearwaveequation})$
 in $D^+_{\mathcal{M}}(\Sigma)$ uniquely defined by the property that $\left(\psi,n_{\mathcal{S}}\psi\right) \subset
H^s_{{\rm loc}}(\mathcal{S})\times H^{s-1}_{{\rm loc}}(\mathcal{S})$ for any 
spacelike hypersurface $\mathcal{S} \in D^+_{\mathcal{M}}(\Sigma)$ and that $(\psi|_{\Sigma},n_{\Sigma}\psi|_{\Sigma}) = (\Psi,\Psi')$. Furthermore, if $\Sigma$ is spherically symmetric
and the data $(\Psi,\Psi')$ are spherically symmetric, then the solution
$\psi$ is also spherically symmetric.
 \end{proposition}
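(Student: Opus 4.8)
The statement is the standard well-posedness theory for the linear wave equation on a globally hyperbolic region, and I would organise the proof around the vector field (energy) method together with finite speed of propagation. First I would observe that $D^+_{\mathcal{M}}(\Sigma)$ equipped with $g$ is a globally hyperbolic region admitting $\Sigma$ as a Cauchy hypersurface, and that $g$ is smooth throughout---indeed up to and including the null boundary components $\mathcal{H}^+_A$, $\mathcal{C}^+$ and the portion of $\mathcal{CH}^+$ bounding the domain---by the explicit coordinate constructions of Section~\ref{prelim}. For smooth data $(\Psi,\Psi')\in C^\infty(\Sigma)\times C^\infty(\Sigma)$ one then obtains a unique smooth solution $\psi$ in $D^+_{\mathcal{M}}(\Sigma)$ directly from the classical theory of linear second-order hyperbolic equations on globally hyperbolic Lorentzian manifolds; finite speed of propagation guarantees that this solution depends only on the data on $\Sigma$ and hence is insensitive to the fact that $\Sigma$ is merely a hypersurface-with-boundary.

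Next I would derive the quantitative energy estimates underlying both uniqueness and the limiting procedure. Fix a smooth future-directed timelike (or, where forced to be null, future-directed causal) multiplier $N$, and contract the energy-momentum tensor $T_{\mu\nu}[\psi]=\partial_\mu\psi\,\partial_\nu\psi-\tfrac12 g_{\mu\nu}\,\partial^\lambda\psi\,\partial_\lambda\psi$ with $N$. Applying the divergence theorem to the region bounded by $\Sigma$ below, an arbitrary spacelike slice $\mathcal{S}\in D^+_{\mathcal{M}}(\Sigma)$ above, and the null portions of $\partial D^+_{\mathcal{M}}(\Sigma)$ yields control of the first-order energy of $\psi$ on $\mathcal{S}$ by that on $\Sigma$; here the fluxes through the null boundary components have the correct sign for a causal $N$ and therefore do not obstruct the estimate. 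Commuting the equation with the angular momentum operators $\Omega_1,\Omega_2,\Omega_3$ and with a transversal derivative, and iterating, propagates this bound to the full $H^s_{\rm loc}(\mathcal{S})\times H^{s-1}_{\rm loc}(\mathcal{S})$ norm, giving $\|(\psi,n_{\mathcal{S}}\psi)\|_{H^s\times H^{s-1}(\mathcal{S})}\lesssim\|(\Psi,\Psi')\|_{H^s\times H^{s-1}(\Sigma)}$ on compact subsets.

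For rough data $(\Psi,\Psi')\in H^s(\Sigma)\times H^{s-1}(\Sigma)$ I would then approximate by smooth data $(\Psi_k,\Psi'_k)$ converging in $H^s\times H^{s-1}$. By linearity and the energy estimate of the previous step, the associated smooth solutions $\psi_k$ form a Cauchy sequence in $H^s_{\rm loc}(\mathcal{S})\times H^{s-1}_{\rm loc}(\mathcal{S})$ on each slice; their limit $\psi$ is the asserted weak solution, attains the prescribed data $(\Psi,\Psi')$ on $\Sigma$ by continuity of the trace, and has the stated regularity on every $\mathcal{S}$. Uniqueness is immediate from the same estimate, since the difference of two such solutions has vanishing data and hence vanishing energy on every slice, so it vanishes identically.

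Finally, the spherical symmetry assertion follows softly from uniqueness: the rotation group $SO(3)$ acts on $(\mathcal{M},g)$ by isometries generated by $\Omega_1,\Omega_2,\Omega_3$ and preserves $D^+_{\mathcal{M}}(\Sigma)$, so if $\Sigma$ and $(\Psi,\Psi')$ are invariant, then for each $R\in SO(3)$ the pullback $R^*\psi$ solves $(\ref{linearwaveequation})$ with the same Cauchy data, whence $R^*\psi=\psi$. I expect no serious obstacle, this being a well-known result; the only point requiring genuine care is the energy estimate at the \emph{null} boundary components of $D^+_{\mathcal{M}}(\Sigma)$, where one must verify that $N$ can be chosen causal and future-directed so that these boundary fluxes carry a favourable sign, thereby legitimising both the finite-speed-of-propagation reduction and the passage to rough data.
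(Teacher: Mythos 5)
The paper offers no proof of this proposition at all---it is stated as a ``well-known'' consequence of standard local well-posedness theory for linear wave equations on globally hyperbolic regions---and your proposal correctly supplies exactly that standard argument: energy estimates with a future-directed causal multiplier (with the favourable sign of the fluxes through the null portions of $\partial D^+_{\mathcal{M}}(\Sigma)$ duly noted), approximation of rough data by smooth data, uniqueness via the energy identity applied to the difference, and spherical symmetry via uniqueness combined with the isometric $SO(3)$-action. The only point worth flagging is that the paper invokes the proposition with the fractional exponent $s=1+\epsilon$, so your commutation-and-iteration step, which as written produces integer-order Sobolev bounds, should be supplemented by interpolation between integer levels (or by a direct fractional energy estimate); this is a standard addition and does not affect the correctness of the approach.
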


Our main result is the following theorem.
\begin{theorem}\label{mainresult}
Fix a non-degenerate tuple $(M,e,\Lambda)$ and consider the corresponding 
Reissner--Nordstr\"om--de Sitter metric $(\mathcal{M},g)$ and let $\Sigma$ denote
a spherically symmetric hypersurface as above. 
Then there exists $\epsilon = \epsilon(M,e,\Lambda) > 0$ and spherically symmetric Cauchy data $(\Psi,\Psi') \in H^{1+\epsilon}(\Sigma) \times H^{\epsilon}(\Sigma)$  such that the corresponding solution $\psi$ to the wave equation $(\ref{linearwaveequation})$ obtaining the Cauchy data 
satisfies
\begin{equation}
\label{toblowup123}
\|\psi\|_{\dot H^1(\mathcal{N})}=\infty,
\end{equation}
i.e.~$\psi\not\in H^1(\mathcal{N})$, where $\mathcal{N}$ is any constant $U_-$ hypersurface emanating from a sphere
in $D^+(\Sigma)\cap\mathcal{M}_{\rm interior}$ and terminating on  a sphere of $\mathcal{CH}^+_A$.
\end{theorem}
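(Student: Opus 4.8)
The plan is to produce a single spherically symmetric mode solution and to read off both its regularity on $\Sigma$ and its blow-up at $\mathcal{CH}^+_A$ directly from the two surface gravities $\kappa_+$ and $\kappa_-$. First I would separate variables, writing $\psi = e^{-i\omega t}\, r^{-1} u_\omega(r^*)$ (taking $\ell=0$), so that $u_\omega$ solves the one-dimensional stationary equation $u_\omega'' + (\omega^2 - V)u_\omega = 0$, where the potential $V$ decays exponentially as $r\to r_-,r_+,r_c$ and the two asymptotic solutions at each horizon are $e^{\pm i\omega r^*}$. I would take $\mathrm{Im}\,\omega<0$ and fix $u_\omega$ up to scale by demanding that it be the branch which is smooth across the cosmological horizon $\mathcal{C}^+$ (i.e.\ the singular Kruskal branch at $r=r_c$ is switched off); this single boundary condition determines $u_\omega$ on all of $(r_-,r_c)$. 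Since this $\psi$ solves $(\ref{linearwaveequation})$ throughout the interior of $\mathcal{M}$, the uniqueness in Proposition~\ref{local} identifies it with the solution launched by its own Cauchy data on $\Sigma$.

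Next I would track the two places where $\Sigma$, and later $\mathcal{N}$, meet the horizons. Near $\mathcal{H}^+_A$ one has $r^*\sim (2\kappa_+)^{-1}\log(r-r_+)$, so in the coordinate $v$ that is regular across $\mathcal{H}^+_A$ the mode splits as $\psi\sim e^{-i\omega v}\bigl(B + A\,(r-r_+)^{i\omega/\kappa_+}\bigr)$, whose branch term has modulus $(r-r_+)^{|\mathrm{Im}\,\omega|/\kappa_+}$; hence the induced data lie in $H^s(\Sigma)$ exactly for $s<\tfrac12 + |\mathrm{Im}\,\omega|/\kappa_+$, and therefore in $H^{1+\epsilon}(\Sigma)\times H^\epsilon(\Sigma)$ as soon as $|\mathrm{Im}\,\omega| > (\tfrac12+\epsilon)\kappa_+$. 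Near $\mathcal{CH}^+_A$ the analogous decomposition in the interior coordinates $(\tilde u,v)$ reads $\psi\sim \tilde A\, e^{i\omega\tilde u} + \tilde B\, e^{-i\omega v}$, and along a constant-$U_-$ (equivalently constant-$\tilde u$) null hypersurface $\mathcal{N}$ one has $e^{-i\omega v}=|V_-|^{i\omega/\kappa_-}$, so that $\partial_{V_-}\psi\sim \tilde B\,|V_-|^{|\mathrm{Im}\,\omega|/\kappa_- - 1}$ and $\|\psi\|_{\dot H^1(\mathcal{N})}^2\sim |\tilde B|^2\int_0 |V_-|^{2|\mathrm{Im}\,\omega|/\kappa_- - 2}\,dV_-$. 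Provided $\tilde B\neq 0$, this integral diverges exactly when $|\mathrm{Im}\,\omega|\le \tfrac12\kappa_-$.

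The two requirements $(\tfrac12+\epsilon)\kappa_+ < |\mathrm{Im}\,\omega| \le \tfrac12\kappa_-$ can be satisfied simultaneously precisely because $\kappa_->\kappa_+$: Lemma~\ref{fundsurfgrav} forces $\tfrac{\kappa_-}{2\kappa_+}-\tfrac12>0$, so any $\epsilon\in\bigl(0,\tfrac{\kappa_- - \kappa_+}{2\kappa_+}\bigr)$ leaves a nonempty frequency window, and I would simply fix such an $\epsilon$ together with an $\omega$ in that window. Taking the real (or imaginary) part then yields a genuine real solution with both claimed properties, and this is where the central inequality of the paper does the work.

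I expect the one genuinely non-formal step, and the main obstacle, to be showing that the transmitted amplitude $\tilde B(\omega)$ at the Cauchy horizon is nonzero for an open set of frequencies in this window. Because $u_\omega$ is pinned by its regular behaviour at $\mathcal{C}^+$, the coefficient $\tilde B(\omega)$ is, after normalisation, a connection coefficient of the radial ODE and depends analytically on $\omega$; were it to vanish on an open set it would vanish identically, which one rules out by an o.d.e./analyticity argument (for instance by inspecting a degenerate or high-frequency regime in which transmission is manifestly nontrivial). This is exactly the ``non-triviality of transmission and reflexion'' input flagged in the introduction, and once it is secured the theorem follows by choosing $\omega$ in the window so as to also avoid the (discrete) zeros of $\tilde B$. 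A softer variant of the same conclusion should be reachable by replacing this explicit mode bookkeeping with the time-translation invariance of the associated scattering map, as in the authors' earlier work, but the mode construction above is the most transparent route.
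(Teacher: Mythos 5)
Your proposal is, in outline, the paper's own first proof (Section~\ref{modes}): a spherically symmetric mode with complex frequency pinned by regularity at $\mathcal{C}^+$, with the admissible window for $|\mathrm{Im}\,\omega|$ opened precisely by Lemma~\ref{fundsurfgrav}; your exponent bookkeeping and $H^s$ thresholds at both horizons match the paper's choice $\omega=\omega_R-i\hat\kappa/2$ with $\kappa_+<\hat\kappa<\kappa_-$. There are, however, two genuine gaps. The first is your claim that the boundary condition at $\mathcal{C}^+$ ``determines $u_\omega$ on all of $(r_-,r_c)$.'' This is false as stated: $r=r_+$ is a regular singular point of the radial o.d.e., so the exterior solution admits no canonical continuation into $(r_-,r_+)$; indeed, one can add to any candidate extension a multiple of the interior branch $e^{-i\omega v}(r_+-r)^{i\omega/\kappa_+}\times(\text{analytic})$, which vanishes at $\mathcal{H}^+_A$ to order $>\tfrac12$ and hence preserves both the weak-solution property and the $H^{1+\epsilon}$ regularity. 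One must therefore \emph{construct} the interior extension and verify it is a weak solution across $\mathcal{H}^+_A$. The paper does this by expanding $R_{{\rm in},c}=\mathfrak{A}R_{{\rm in},+}+\mathfrak{B}R_{{\rm out},+}$ at $r_+$ and defining $\psi$ in $\mathcal{M}_{\rm interior}$ to be only $\mathfrak{A}\,e^{-i\omega t}R_{{\rm in},+}$, the branch which is analytic across the horizon in $(v,r)$ coordinates, while the $\mathfrak{B}$-branch (modulus $(r-r_+)^{\hat\kappa/(2\kappa_+)}$, exponent $>\tfrac12$) is in effect extended by zero.

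The second gap, which you flag yourself, is the non-vanishing of the blow-up coefficient; but note that with the continuation just described your single $\tilde B$ is really a \emph{product} $\mathfrak{A}(\omega)\cdot\widetilde{\mathfrak{B}}(\omega)$ of two connection coefficients---transmission from $\mathcal{C}^+$ to the regular branch at $\mathcal{H}^+_A$, and the coefficient of the singular branch at $r_-$ in $R_{{\rm in},+}$---so you must rule out identical vanishing of \emph{each} factor (Lemmas~\ref{unique1} and~\ref{unique2} in the paper), not of one holomorphic function. The paper's mechanism is not a high-frequency/WKB regime (which is delicate to make rigorous) but the conserved current $Q_T=\mathrm{Im}\bigl(\frac{dR}{d\tilde r}\overline{R}\bigr)$ evaluated at the two ends of each region: in the static region, proportionality $R_{{\rm out},+}=a(\omega)R_{{\rm in},c}$ for all real $\omega$ would force $|a|^2=r_+^2/r_c^2<1$, contradicting $a(0)=1$ at $\omega=0$ (this zero-frequency evaluation is the rigorous version of your ``degenerate regime''); in the interior it would force $|a|^2r_-^2+r_+^2=0$, which is absurd. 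Combined with holomorphy of $\mathfrak{A}$ and $\widetilde{\mathfrak{B}}$ (Lemma~\ref{analyticcoefficients}, via Wronskians), the zeros of $\mathfrak{A}\widetilde{\mathfrak{B}}$ are discrete and the argument closes exactly as you describe.
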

In Figure~\ref{fig3} we have depicted the hypersurface $\Sigma$, its future domain of 
dependence $D^+(\Sigma)$ and a choice of hypersurface $\mathcal{N}$.
The square of the homogeneous Sobolev norm $\dot H^1(\mathcal{N})$ can be interpreted as the energy flux measured
by a family of local observers on $\mathcal{N}$. Note that the statement $(\ref{toblowup123})$
is independent of the choice of induced volume form on $\mathcal{N}$.

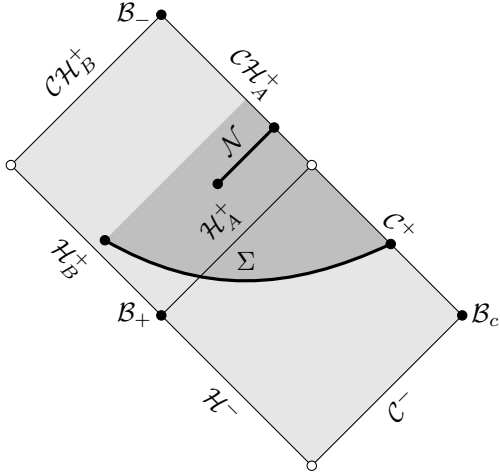
\begin{figure}
\begin{center}
\begin{tikzpicture}
\definecolor{light-gray}{gray}{.9}

\fill[light-gray] (-2,2)--(2,-2)--(0,-4) -- (-4,0); 
\fill[lightgray] (1.05,-1.05) to [out=205,in=-30] (-2.75,-1) -- (-.875,.875)--(1.05,-1.05);
\draw (0,0) -- (2,-2) node[sloped,above,midway]{$\mathcal{C}^+$}; 
\draw (2,-2) --  (0,-4) node[sloped,below,midway]{$\mathcal{C}^-$};
\draw (0,-4) -- (-2,-2) node[sloped,below,midway]{$\mathcal{H}^-$}; 
\draw (-2,-2) -- (-4,0) node[sloped,below,midway]{$\mathcal{H}^+_B$}; 
\draw (-2,-2) -- (0,0) node[sloped,above,midway]{$\mathcal{H}^+_A$}; 
\draw (-4,0) -- (-2,2) node[sloped,above,midway]{$\mathcal{CH}^+_B$}; 
\draw (-2,2) -- (0,0) node[sloped,above,midway]{$\mathcal{CH}^+_A$}; 
\path [draw=black,fill=white] (0,0) circle (1/16); 
\path [draw=black,fill=black] (2,-2) circle (1/16) node[right]{$\mathcal{B}_c$}; 
\path [draw=black,fill=white] (0,-4) circle (1/16); 
\path [draw=black,fill=black] (-2,-2) circle (1/16) node[left]{$\mathcal{B}_+$}; 
\path [draw=black,fill=white] (-4,0) circle (1/16); 
\path [draw=black,fill=black] (-2,2) circle (1/16) node[left]{$\mathcal{B}_-$}; 

\draw[very thick] (1.05,-1.05) to [out=205,in=-30]node[sloped,above,midway]{$\Sigma$} (-2.75,-1) ;
\path [draw = black, fill = black] (1.05,-1.05) circle (1/16);
\path [draw = black, fill = black] (-2.75,-1) circle (1/16);

\draw[very thick] (-1.25,-.25) -- (-.5,.5) node[sloped,above,midway]{$\mathcal{N}$};
\path [draw = black, fill = black] (-.5,.5) circle (1/16);
\path [draw = black, fill = black] (-1.25,-.25) circle (1/16);
\end{tikzpicture}
\end{center}

\caption{The manifold $\mathcal{M}$ with hypersurfaces $\Sigma$, $\mathcal{N}$ and shaded domain of influence of $\Sigma$}\label{fig3}
\end{figure}

We will provide two proofs of this result. The first proof, given in Section~\ref{modes}, will be a direct construction based on individual mode solutions. The second proof, given in Section~\ref{timetranslate} will be based on the time translation invariance of the spacetime and is an adaption to the cosmological setting of our arguments from~\cite{DafShlap}. 

We note that, it will be immediate from either proof that instead of spherically symmetric
$(\Psi,\Psi')$ in Theorem~\ref{mainresult},  we can take   data 
 supported instead on an arbitrary fixed higher
spherical harmonic number $\ell\ge 1$. 

\subsection{Reduction of Theorem~\ref{maintheoremINTRO} to Theorem~\ref{mainresult}}
\label{redux}

We now explain how Theorem~\ref{maintheoremINTRO} can be inferred from Theorem~\ref{mainresult}.

First of all, we observe that by
local existence considerations similar to Proposition~\ref{local},
it clearly suffices to establish Theorem~\ref{maintheoremINTRO} for a spherically symmetric $\widetilde\Sigma$. In the rest of this section we shall thus work with spherically symmetric hypersurfaces. Next, we note that $\left(\mathcal{M}',g\right)$ is isometric to $\left(\mathcal{M},g\right)$, where we define $\mathcal{M}' \doteq \mathcal{M}'_{{\rm static}}\cup\mathcal{H}^+_B\cup\mathcal{M}_{{\rm interior}}$ (see Figure~\ref{fig4}). Hence we obtain the analogue of Theorem~\ref{mainresult} for the region $\mathcal{M}'$. 

Next, 
using the finite speed of propagation, we immediately are able to obtain a solution $\psi$ arising from Cauchy data $\left(\Psi,\Psi'\right) \in H^{1+\epsilon}(\widetilde\Sigma) \times H^{\epsilon}(\widetilde\Sigma)$ such that $\left\vert\left\vert \psi\right\vert\right\vert_{H^1\left(\mathcal{N}\right)} =\left\vert\left\vert \psi\right\vert\right\vert_{H^1\left(\mathcal{\underline{N}}\right)} = \infty$, where $\mathcal{\underline{N}}$ is a null hypersurface transversally intersecting $\mathcal{CH}^+_B$ and $\mathcal{N}$ is a null hypersurface transversally intersecting $\mathcal{CH}^+_A$. 

Finally, it remains to show that the solution has an infinite $H^1$ norm along hypersurfaces transversally intersecting $\mathcal{CH}^+$ which lie outside the shaded region. This follows from a straightforward propagation of singularities argument, using however also
the local finiteness of the energy flux of $\psi$ \emph{along} 
$\mathcal{CH}^+_{A,B}$,
a result proven in~\cite{annefranzen}.

\begin{figure}
\begin{center}
\begin{tikzpicture}
\definecolor{light-gray}{gray}{.9}

\fill[light-gray] (-2,2)--(2,-2)--(0,-4) -- (-2,-2) -- (-4,-4) -- (-6,-2)-- (-2,2); 
\fill[lightgray] (.5,-.5)  to [out = 200,in = 10] (-1.75,-1.1) -- (-.325,.325) -- (.5,-.5);
\fill[lightgray] (-2.25,-1.1) to [out = 170, in = -20] (-4.5,-.5) -- (-3.675,.325) --(-2.25,-1.1);
\draw (0,0) -- (2,-2) node[sloped,above,midway]{$\mathcal{C}^+$}; 
\draw (2,-2) --  (0,-4) node[sloped,below,midway]{$\mathcal{C}^-$};
\draw (0,-4) -- (-2,-2);
\draw (-2,-2) -- (-4,0) node[sloped,below,midway]{$\mathcal{H}^+_B$}; 
\draw (-2,-2) -- (0,0) node[sloped,below,midway]{$\mathcal{H}^+_A$}; 
\draw (-4,0) -- (-2,2) node[sloped,above,midway]{$\mathcal{CH}^+_B$}; 
\draw (-2,2) -- (0,0) node[sloped,above,midway]{$\mathcal{CH}^+_A$}; 
\draw (-4,0) -- (-6,-2) node[sloped,above,midway]{$\mathcal{C}^+$};
\draw (-6,-2) -- (-4,-4) node[sloped,below,midway]{$\mathcal{C}^-$};
\draw (-4,-4) -- (-2,-2);

\path [draw=black,fill=white] (0,0) circle (1/16); 
\path [draw=black,fill=black] (2,-2) circle (1/16) node[right]{$\mathcal{B}_c$}; 
\path [draw=black,fill=white] (0,-4) circle (1/16); 
\path [draw = black,fill = white] (-4,-4) circle (1/16);
\path [draw = black, fill = black] (-6,-2) circle (1/16);
\path [draw=black,fill=black] (-2,-2) circle (1/16) node[left]{$\mathcal{B}_+$}; 
\path [draw=black,fill=white] (-4,0) circle (1/16); 
\path [draw=black,fill=black] (-2,2) circle (1/16) node[left]{$\mathcal{B}_-$}; 

\draw[very thick] (-.9,-.6) -- (-.15,.15) node[sloped,above,midway]{$\mathcal{N}$};
\draw[very thick] (-3.85,.15) -- (-3.1,-.6) node[sloped,above,midway]{$\mathcal{\underline{N}}$};
\draw[very thick] (.5,-.5) to [out = 200,in = 10] (-1.75,-1.1) to [out = 190,in = -10]   (-2.25,-1.1) to [out = 170, in = -20] (-4.5,-.5);

\draw (0,-2) node{$\mathcal{M}_{{\rm static}}$};
\draw (-4,-2) node{$\mathcal{M}'_{{\rm static}}$};
\draw (-2,0) node{$\mathcal{M}_{{\rm interior}}$};
\draw (-2,-1.35) node{$\widetilde{\Sigma}$};
\path [draw=black,fill=black] (.5,-.5) circle (1/16);
\path [draw=black,fill=black] (-4.5,-.5) circle (1/16);
\path [draw=black,fill=black] (-.15,.15) circle (1/16);
\path [draw=black,fill=black] (-3.85,.15) circle (1/16);
\path [draw=black,fill=black] (-3.1,-.6) circle (1/16);
\path [draw=black,fill=black] (-.9,-.6) circle (1/16);
\end{tikzpicture}
\end{center}
\caption{The manifold $\mathcal{M}'\cup\mathcal{M}$ with the hypersurfaces $\widetilde\Sigma$, $\mathcal{\underline{N}}$ and $\mathcal{N}$}\label{fig4}
\end{figure}
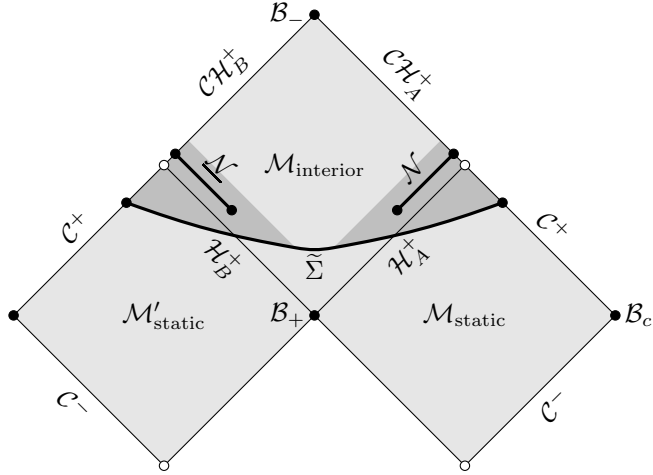

\section{First proof of Theorem~\ref{mainresult}: an explicit mode-solution construction}\label{modes}
In this section, we will give our first proof of Theorem~\ref{mainresult} by constructing the desired solution $\psi$ via direct o.d.e.~analysis. 

We start in Section~\ref{sepvar} by reviewing the separation of variables for the wave equation~\eqref{linearwaveequation}. In Section~\ref{asympanal} we carry out an asymptotic analysis of the radial o.d.e.~at its singular points. Next, in Section~\ref{nontrivial} we establish two o.d.e.~lemmata which have the interpretation as statements of non-triviality of reflection and transmission. These lemmata will also be used in Section~\ref{timetranslate}. Finally, in Section~\ref{theproofode} we give the proof of Theorem~\ref{mainresult}.

\subsection{Separation of variables on the Reissner--Nordstr\"{o}m--de Sitter spacetime}\label{sepvar}
In this section we will review the procedure for separation of variables for the wave equation
$(\ref{linearwaveequation})$ on the Reissner--Nordstr\"{o}m--de Sitter spacetime.

We say that a spherically symmetric $\psi : \mathcal{M}_{{\rm static}} \to \mathbb{C}$ 
in the static region is a \emph{mode solution} if $\psi$ satisfies~\eqref{waveeqn} and there exists $\omega \in \mathbb{C}$ so that
\begin{equation}\label{amode}
\psi(t,r) = e^{-it\omega}R(r).
\end{equation}
Note that a spherically symmetric function $\psi(t,r)$ satisfies the wave equation $(\ref{linearwaveequation})$ in $\mathcal{M}_{{\rm static}}$ if and only if
\begin{equation}\label{waveeqn}
\partial_t^2\psi - r^{-2}\mu(r)\partial_r\left(r^2\mu(r)\partial_r\psi\right) = 0,
\end{equation}
where we have defined
\[\mu(r) \doteq 1-\frac{2M}{r} + \frac{e^2}{r^2} - \frac{\Lambda}{3}r^2.\]
In particular, in the case of a mode solution, the function $R(r)$ from~\eqref{amode} will satisfy\footnote{For the analysis of solutions supported
on a fixed higher spherical harmonic number $\ell$, 
one must simply replace use of $(\ref{radode})$ everywhere in what follows  by the resulting
o.d.e.~depending on $\ell$.} the \emph{radial o.d.e.}:
\begin{equation}\label{radode}
r^{-2}\mu(r)\frac{d}{dr}\left(r^2\mu(r)\frac{dR}{dr}\right) + \omega^2R = 0,\qquad r \in  (r_+,r_c).
\end{equation}
Finally, we also obtain a definition, \emph{mutatis mutandis}, for spherically symmetric mode solutions $\psi : \mathcal{M}_{{\rm interior}} \to \mathbb{C}$ in the interior region, 
where the $r$-range of $(\ref{radode})$
is replaced by $r\in (r_-,r_+)$. 

\subsection{Asymptotic analysis of the radial o.d.e., basis solutions and analyticity
properties}\label{asympanal}
The o.d.e.~\eqref{radode} has regular singular points at $r = r_c$, $r = r_+$ and $r = r_-$. In particular, if $R$ satisfies~\eqref{radode}, then we will have 
\begin{equation}\label{nearrc}
\frac{d^2R}{dr^2} + \left(\frac{1}{r-r_c} + O\left(1\right)\right)\frac{dR}{dr} + \left(\frac{\omega^2\kappa_c^{-2}}{4(r-r_c)^2} + O\left((r-r_c)^{-1}\right)\right)R = 0,\text{ for }|r-r_c| \ll 1,
\end{equation}
\begin{equation}\label{nearrplus}
\frac{d^2R}{dr^2} + \left(\frac{1}{r-r_+} + O\left(1\right)\right)\frac{dR}{dr} + \left(\frac{\omega^2\kappa_+^{-2}}{4(r-r_+)^2} + O\left((r-r_+)^{-1}\right)\right)R = 0,\text{ for }|r-r_+| \ll 1,
\end{equation}
\begin{equation}\label{nearrminus}
\frac{d^2R}{dr^2} + \left(\frac{1}{r-r_-} + O\left(1\right)\right)\frac{dR}{dr} + \left(\frac{\omega^2\kappa_-^{-2}}{4(r-r_-)^2} + O\left((r-r_-)^{-1}\right)\right)R = 0,\text{ for }|r-r_-| \ll 1.
\end{equation}

Then the theory of o.d.e.'s with regular singular points (see~\cite{olver} and also Appendix A of~\cite{yakovinsta}) immediately yields the following lemma.

\begin{lemma}\label{odesworkwell}Let $\mathcal{U} = \{0\} \cup \{\omega \in \mathbb{C}: \text{Im}\left(\omega\right) \not\in \kappa_c\mathbb{Z} \cup \kappa_+\mathbb{Z}\cup \kappa_-\mathbb{Z} \}$. Then for every $\omega \in \mathcal{U}$ there exist six solutions $R_{{\rm in},+}\left(\omega,r\right)$, $R_{{\rm out},+}\left(\omega,r\right)$, $R_{{\rm in},c}\left(\omega,r\right)$, $R_{{\rm out},c}\left(\omega,r\right)$, $R_{{\rm in},-}\left(\omega,r\right)$ and $R_{{\rm out},-}\left(\omega,r\right)$ to~\eqref{radode} defined by
\begin{enumerate}
	\item $R_{{\rm in},+}(\omega,r) = \left(r-r_+\right)^{-\frac{i\omega}{2\kappa_+}}\hat{R}_{{\rm in},+}\left(\omega,r\right), \text{ where }\hat{R}_{{\rm in},+}\left(\omega,r\right)$ is analytic for $r \in (r_-,r_c)$  and satisfies $\hat{R}_{{\rm in},+}\left(\omega,r_+\right) = 1$.
	\item $R_{{\rm out},+}(\omega,r) = \left(r-r_+\right)^{\frac{i\omega}{2\kappa_+}}\hat{R}_{{\rm out},+}\left(\omega,r\right), \text{ where }\hat{R}_{{\rm out},+}\left(\omega,r\right)$is analytic for $r \in (r_-,r_c)$  and satisfies $\hat{R}_{{\rm out},+}\left(\omega,r_+\right) = 1$.

	\item $R_{{\rm in},c}(\omega,r) = \left(r-r_c\right)^{-\frac{i\omega}{2\kappa_c}}\hat{R}_{{\rm in},c}\left(\omega,r\right), \text{ where }\hat{R}_{{\rm in},c}\left(\omega,r\right)$ is analytic for $r \in (r_-,r_+) \cup (r_+,r_c]$  and satisfies $\hat{R}_{{\rm in},c}\left(\omega,r_c\right) = 1$.

	\item $R_{{\rm out},c}(\omega,r) = \left(r-r_c\right)^{\frac{i\omega}{2\kappa_c}}\hat{R}_{{\rm out},c}\left(\omega,r\right), \text{ where }\hat{R}_{{\rm out},c}\left(\omega,r\right)$ is analytic for  $r \in (r_-,r_+) \cup (r_+,r_c]$  and satisfies $\hat{R}_{{\rm out},c}\left(\omega,r_c\right) = 1$.

	\item $R_{{\rm in},-}(\omega,r) = \left(r-r_-\right)^{-\frac{i\omega}{2\kappa_-}}\hat{R}_{{\rm in},-}\left(\omega,r\right), \text{ where }\hat{R}_{{\rm in},-}\left(\omega,r\right)$ is analytic for $r \in [r_-,r_+) \cup (r_+,r_c)$  and satisfies $\hat{R}_{{\rm in},-}\left(\omega,r_-\right) = 1$.

	\item $R_{{\rm out},-}(\omega,r) = \left(r-r_-\right)^{\frac{i\omega}{2\kappa_-}}\hat{R}_{{\rm out},-}\left(\omega,r\right), \text{ where }\hat{R}_{{\rm out},-}\left(\omega,r\right)$ is analytic for  $r \in [r_-,r_+) \cup (r_+,r_c)$ and satisfies $\hat{R}_{{\rm out},-}\left(\omega,r_-\right) = 1$.

\end{enumerate}

Furthermore, considered as functions of $\omega$, $\hat{R}_{{\rm in},+}\left(\omega,r\right)$, $\hat{R}_{{\rm out},+}\left(\omega,r\right)$, $\hat{R}_{{\rm in},c}\left(\omega,r\right)$, $\hat{R}_{{\rm out},c}\left(\omega,r\right)$,  $\hat{R}_{{\rm in},-}\left(\omega,r\right)$ and $\hat{R}_{{\rm out},-}\left(\omega,r\right)$ are holomorphic for $\omega \in \mathcal{U}$.

Finally, it is immediate that when $\omega \in \mathcal{U}\setminus\{0\}$, $R_{{\rm in},+}(\omega,r)$ and $R_{{\rm out},+}(\omega,r)$ are linearly independent. Similarly, $R_{{\rm in},c}(\omega,r)$ and $R_{{\rm out},c}(\omega,r)$ are linearly independent and $R_{{\rm in},-}(\omega,r)$ and $R_{{\rm out},-}(\omega,r)$ are linearly independent. 

\end{lemma}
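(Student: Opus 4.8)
The plan is to read the statement off as a direct application of the Frobenius theory of second-order linear o.d.e.'s with regular singular points, combined with the standard theorem on holomorphic dependence of solutions on a parameter (here $\omega$). First I would recast \eqref{radode} in the normal form $R'' + \left(\tfrac{2}{r}+\tfrac{\mu'}{\mu}\right)R' + \tfrac{\omega^2}{\mu^2}R = 0$ and, using the local expansions \eqref{nearrc}, \eqref{nearrplus}, \eqref{nearrminus} already recorded, confirm that $r_c,r_+,r_-$ are regular singular points at which (using $\tfrac12|\mu'(r_*)| = \kappa_*$) the indicial polynomial is $s^2 + \tfrac{\omega^2}{4\kappa_*^2}$, so the two indicial roots are $s_\pm = \pm\tfrac{i\omega}{2\kappa_*}$. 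These are exactly the exponents carried by the six prefactors $(r-r_*)^{\mp i\omega/2\kappa_*}$ in the statement.

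Second, at each singular point $r_*$ and for each root $s\in\{s_+,s_-\}$, I would build the Frobenius solution $(r-r_*)^{s}\hat R(r)$ with $\hat R(r)=\sum_{n\ge 0}a_n(\omega)(r-r_*)^n$ and $a_0=1$. The recurrence determining the $a_n$ takes the shape $P(s+n)\,a_n = (\text{linear combination of }a_0,\dots,a_{n-1})$ with $P(s+n)=n\left(n\pm\tfrac{i\omega}{\kappa_*}\right)$, so $a_n(\omega)$ is well-defined and holomorphic in $\omega$ as long as $n\pm\tfrac{i\omega}{\kappa_*}\neq 0$ for all $n\ge 1$, i.e.\ as long as the exponent difference $\tfrac{i\omega}{\kappa_*}\notin\mathbb{Z}$. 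This is precisely what is guaranteed on $\mathcal{U}\setminus\{0\}$, which is why the lines $\operatorname{Im}(\omega)\in\kappa_c\mathbb{Z}\cup\kappa_+\mathbb{Z}\cup\kappa_-\mathbb{Z}$ are removed (the conservative excision of whole lines, rather than only the discrete imaginary resonance points, merely keeps the later connection step uniform). Convergence of the series up to the nearest other singular point and holomorphy in $\omega$ of its sum are exactly the output of the cited theory (\cite{olver}; Appendix~A of \cite{yakovinsta}), yielding all six normalized factors $\hat R_{\mathrm{in}/\mathrm{out},*}$ near their base point with $\hat R(r_*)=1$.

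Third, to obtain the stated global analyticity domains in $r$, I would analytically continue. Where $\mu\neq 0$ the o.d.e. has holomorphic coefficients, so its solutions continue analytically, and the pulled-out prefactor $(r-r_*)^{\mp i\omega/2\kappa_*}$ is analytic and non-vanishing off $r_*$ for a fixed branch; hence each $\hat R$ continues as an analytic function of $r$, with holomorphic $\omega$-dependence preserved by the parameter-dependence theorem. Across its own base point the Frobenius construction already gives analyticity, whereas across any \emph{other} singular point the continued solution generically acquires a branch point. This produces exactly the claimed domains: $\hat R_{\mathrm{in},+}$ is analytic on all of $(r_-,r_c)$ because $r_+$ is the only interior singular point and its branch is removed by the prefactor, while $\hat R_{\mathrm{in},c}$ is analytic only on $(r_-,r_+)\cup(r_+,r_c]$, the gap at $r_+$ recording the branch point there. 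The degenerate frequency $\omega=0$, added back into $\mathcal{U}$, I would handle by hand: there \eqref{radode} reduces to $\tfrac{d}{dr}(r^2\mu R')=0$, the indicial roots collide at $0$, the constant solution $R\equiv 1$ supplies the analytic factor of value $1$, and since the denominators $n\left(n\pm\tfrac{i\omega}{\kappa_*}\right)$ do not vanish at $\omega=0$ for $n\ge1$, the $a_n(\omega)$ stay holomorphic and the factors extend holomorphically across $\omega=0$. Finally, for $\omega\in\mathcal{U}\setminus\{0\}$, linear independence is immediate from the distinct leading exponents $\mp\tfrac{i\omega}{2\kappa_*}$, whose Wronskian is $\propto \tfrac{i\omega}{\kappa_*}(r-r_*)^{-1}\neq 0$.

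The main obstacle — more delicate than routine, though still standard — is the propagation of holomorphic $\omega$-dependence through the analytic continuation across an \emph{intermediate} singular point (the connection problem), together with making consistent branch choices for $(r-r_*)^{s}$ on the side $r<r_*$; the non-resonance condition built into the definition of $\mathcal{U}$ is exactly what keeps the connection coefficients holomorphic and the two Frobenius solutions genuinely two-dimensional (no logarithmic terms). Everything else reduces to the bookkeeping of indicial data and convergence once these points are in place.
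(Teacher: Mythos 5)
Your proposal is correct and takes essentially the same approach as the paper: the paper's proof likewise reads the indicial roots $\pm\frac{i\omega}{2\kappa_c}$, $\pm\frac{i\omega}{2\kappa_+}$, $\pm\frac{i\omega}{2\kappa_-}$ off \eqref{nearrc}--\eqref{nearrminus} and then invokes the standard Frobenius power-series construction (citing \cite{olver} and Appendix A of \cite{yakovinsta}), valid precisely because for $\omega\in\mathcal{U}$ the difference of indicial roots is never a non-zero integer. The extra detail you supply --- the explicit recurrence with denominators $n\left(n\pm\frac{i\omega}{\kappa_*}\right)$, the holomorphy of the coefficients across $\omega=0$, the analytic-continuation and branch bookkeeping, and the Wronskian check of linear independence --- simply fills in the steps the paper delegates to those references.
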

\begin{proof}The expressions~\eqref{nearrc},~\eqref{nearrplus} and~\eqref{nearrminus} imply that the indicial roots near $r_c$, $r_+$ and $r_-$ are $\pm\frac{i\omega}{2\kappa_c}$, $\pm\frac{i\omega}{2\kappa_+}$ and $\pm\frac{i\omega}{2\kappa_-}$respectively. The desired solutions may be then constructed by explicit power series (see~\cite{olver} and also Appendix A of~\cite{yakovinsta}) as long we do not allow $\omega$ to enter a region where the difference of two indicial roots is a non-zero integer. We immediately see that the constructed solutions are valid for $\omega \in \mathcal{U}$.
\end{proof}
\begin{remark}When $\omega = 0$, the final statement of above lemma does not apply. This fact plays no role in the proof of Theorem~\ref{mainresult}; nevertheless, we note that one could construct additional linearly independent solutions which must, however, contain logarithmic singularities. 

Though it is also not relevant for the proof of Theorem~\ref{mainresult}; we note that at $\omega = i\kappa_+$, the solution $R_{{\rm out},+}(\omega,r)$ has a simple pole, and that at $\omega = -i\kappa_+$, the solution $R_{{\rm in},+}\left(\omega,r\right)$ has a simple pole (cf.~the analysis of the reflexion and transmission coefficients considered in~\cite{CH}). Analogous statements hold for $R_{{\rm out},c}(\omega,r)$, $R_{{\rm in},c}(\omega,r)$,  $R_{{\rm out},-}(\omega,r)$ and $R_{{\rm in},-}(\omega,r)$.
\end{remark}

The following lemma will be useful.
\begin{lemma}\label{analyticcoefficients}There exist holomorphic functions $\mathfrak{A} : \mathcal{U}\setminus\{0\} \to \mathbb{C}$, $\mathfrak B: \mathcal{U}\setminus\{0\} \to \mathbb{C}$, $\widetilde{\mathfrak{A}}:\mathcal{U}\setminus\{0\} \to \mathbb{C}$ and $\mathfrak{\widetilde B}:\mathcal{U}\setminus\{0\}\to \mathbb{C}$ so that for all $r \not\in \{r_-,r_+,r_c\}$ and $\omega \in \mathcal{U}\setminus\{0\}$ we have
\begin{equation}\label{firstexpansion0}
R_{{\rm in},c}\left(\omega,r\right) = \mathfrak{A}(\omega)R_{{\rm in},+}\left(\omega,r\right) + \mathfrak{B}(\omega)R_{{\rm out},+}\left(\omega,r\right),
\end{equation}
\begin{equation}\label{secondexpansion0}
R_{{\rm in},+}\left(\omega,r\right) = \mathfrak{\widetilde A}(\omega) R_{{\rm in},-}(\omega,r) +\mathfrak{ \widetilde B}(\omega) R_{{\rm out},-}(\omega,r).
\end{equation}
\end{lemma}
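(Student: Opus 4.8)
The claim asserts the existence of four holomorphic change-of-basis coefficients relating the ``incoming from the cosmological horizon'' and ``incoming from the event horizon'' solutions to the pairs of basis solutions at the event and Cauchy horizons respectively. The plan is to exploit Lemma~\ref{odesworkwell}, which supplies, for each $\omega\in\mathcal{U}\setminus\{0\}$, \emph{two} distinct bases of the two-dimensional solution space of the radial o.d.e.~\eqref{radode}: the pair $\{R_{{\rm in},+},R_{{\rm out},+}\}$ and the pair $\{R_{{\rm in},-},R_{{\rm out},-}\}$, each linearly independent by the final assertion of that lemma. Since \eqref{radode} is a second-order linear o.d.e., any solution is a unique linear combination of the elements of either basis; in particular $R_{{\rm in},c}$ and $R_{{\rm in},+}$, being themselves solutions, admit the expansions \eqref{firstexpansion0} and \eqref{secondexpansion0} with uniquely determined coefficients $\mathfrak{A}(\omega),\mathfrak{B}(\omega)$ and $\widetilde{\mathfrak{A}}(\omega),\widetilde{\mathfrak{B}}(\omega)$. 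This establishes the expansions pointwise in $\omega$; the content of the lemma is that these coefficients depend holomorphically on $\omega$.

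\textbf{Extracting the coefficients via the Wronskian.}
First I would fix a convenient base point $r_0\notin\{r_-,r_+,r_c\}$ and evaluate \eqref{firstexpansion0} together with its first $r$-derivative at $r_0$, obtaining a linear $2\times 2$ system for $(\mathfrak{A},\mathfrak{B})$:
\begin{equation*}
\begin{pmatrix} R_{{\rm in},+} & R_{{\rm out},+}\\[2pt] \partial_r R_{{\rm in},+} & \partial_r R_{{\rm out},+}\end{pmatrix}\Bigg|_{r_0}
\begin{pmatrix}\mathfrak{A}\\[2pt]\mathfrak{B}\end{pmatrix}
=\begin{pmatrix} R_{{\rm in},c}\\[2pt]\partial_r R_{{\rm in},c}\end{pmatrix}\Bigg|_{r_0}.
\end{equation*}
By Cramer's rule, $\mathfrak{A}$ and $\mathfrak{B}$ are ratios whose denominator is the Wronskian $W(R_{{\rm in},+},R_{{\rm out},+})(r_0)$. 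The linear independence guaranteed by Lemma~\ref{odesworkwell} ensures this Wronskian is nonzero for every $\omega\in\mathcal{U}\setminus\{0\}$, so the system is always invertible and the extraction is legitimate. The identical procedure applied to \eqref{secondexpansion0}, using the basis $\{R_{{\rm in},-},R_{{\rm out},-}\}$ and its nonvanishing Wronskian, yields $\widetilde{\mathfrak{A}}$ and $\widetilde{\mathfrak{B}}$.

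\textbf{Holomorphy and the main obstacle.}
The holomorphy of the four coefficients then follows from the holomorphy of the data in the Cramer ratios. By Lemma~\ref{odesworkwell}, each $\hat R_{*,\pm}(\omega,r)$ is holomorphic in $\omega\in\mathcal{U}$, and consequently so is each $R_{*,\pm}(\omega,r_0)$ and its $r$-derivative on the region $\mathcal{U}\setminus\{0\}$ where the prefactors $(r_0-r_\bullet)^{\pm i\omega/2\kappa_\bullet}$ are entire in $\omega$ (the base point $r_0$ being held fixed and distinct from the singular points). Since the numerator and denominator of each Cramer ratio are holomorphic and the denominator is nonvanishing throughout $\mathcal{U}\setminus\{0\}$, the quotients $\mathfrak{A},\mathfrak{B},\widetilde{\mathfrak{A}},\widetilde{\mathfrak{B}}$ are holomorphic there. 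The one point deserving care—and the main thing to verify rather than a genuine obstacle—is that the coefficients so defined are genuinely independent of the choice of base point $r_0$: this is immediate because the expansions \eqref{firstexpansion0}--\eqref{secondexpansion0} are identities of solutions valid for all $r$, so the coefficients are intrinsic, and the Wronskian is $r$-independent up to the standard exponential integrating factor, which cancels in the ratios. I expect the entire argument to be short and essentially formal, the only subtlety being the bookkeeping of the domains of analyticity in $r$ for the various solutions inherited from Lemma~\ref{odesworkwell}.
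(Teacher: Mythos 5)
Your proposal is correct and follows essentially the same route as the paper: existence of the expansions from the linear independence statement in Lemma~\ref{odesworkwell}, followed by expressing the coefficients as ratios of Wronskian-type quantities (your Cramer's rule system is equivalent to the paper's conserved Wronskian $\mathfrak{W}(f,g)=r^2\mu\left(\frac{df}{dr}g-\frac{dg}{dr}f\right)$ evaluated at $r_0$), with holomorphy inherited from the holomorphy of the basis solutions and the nonvanishing of the denominator. The only cosmetic difference is that the paper verifies the nonvanishing of $\mathfrak{W}\left(R_{{\rm in},+},R_{{\rm out},+}\right)$ from the asymptotics at $r_+$ together with $r$-independence of $\mathfrak{W}$, whereas you invoke the standard equivalence between linear independence and a nonvanishing Wronskian; both are valid.
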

\begin{proof}The fact that the expressions~\eqref{firstexpansion0} and~\eqref{secondexpansion0} hold for functions $\mathfrak A$, $\mathfrak B$, $\mathfrak{\widetilde A}$ and $\mathfrak{\widetilde B}$ of unspecified regularity is an immediate consequence of the statement from Lemma~\ref{odesworkwell} that $R_{{\rm in},+}\left(\omega,r\right)$ and $R_{{\rm out},+}\left(\omega,r\right)$ are linearly independent and  $R_{{\rm in},-}(\omega,r)$ and  $R_{{\rm out},-}(\omega,r)$ are linearly independent when $\omega \in \mathcal{U}\setminus\{0\}$.

In order to analyse the regularity of $\mathfrak{A}$, $\mathfrak B$, $\mathfrak{\widetilde A}$ and $\mathfrak{\widetilde B}$, we use the Wronskian $\mathfrak W$. Given two functions $f(r)$ and $g(r)$ we define the Wronskian:
\[\mathfrak W\left(f,g\right)(r) \doteq r^2\mu(r)\frac{df}{dr}g - r^2\mu(r)\frac{dg}{dr}f.\]
A key fact is that if $f$ and $g$ satisfy the radial o.d.e., then
\begin{equation}\label{wronkcons}
\frac{d\mathfrak W}{dr} = 0.
\end{equation}

Let $r_0 = \frac{r_++r_-}{2}$. Using~\eqref{wronkcons} and the observation that $\mathfrak W\left(f,f\right) = 0$, one may establish the formula
\begin{equation}\label{formA}
\mathfrak A\left(\omega\right) = \frac{\mathfrak W\left(R_{{\rm in},c}\left(\omega,r_0\right),R_{{\rm out},+}\left(\omega,r_0\right)\right)}{\mathfrak W\left(R_{{\rm in},+}\left(\omega,r_0\right),R_{{\rm out},+}\left(\omega,r_0\right)\right)}.
\end{equation}
The asymptotic behavior of $R_{{\rm in},+}$ and $R_{{\rm out},+}$ and~\eqref{wronkcons} imply that $\mathfrak W\left(R_{{\rm in},+}\left(\omega,r_0\right),R_{{\rm out},+}\left(\omega,r_0\right)\right)\ne0$. Thus, \eqref{formA} and Lemma~\ref{odesworkwell} yield that $\mathfrak A$ is 
holomorphic. The arguments for $\mathfrak B$, $\mathfrak{\widetilde A}$ and $\mathfrak{\widetilde B}$ are analogous. 
\end{proof}

\subsection{Non-triviality of reflexion and transmission}\label{nontrivial}

Before giving the proof of Theorem~\ref{mainresult}, we need two final preparatory lemmata which can be interpreted as the statement that neither the reflexion map in the static region nor the transmission map in the black hole interior
region can vanish identically (cf.~the proof of Theorem 10 
of~\cite{Dafermos:2014jwa} and the results of~\cite{kehle}).

\begin{lemma}\label{unique1}There exists $\omega \in \mathbb{R}$ so that the solutions $R_{{\rm out},+}\left(\omega,r\right)$ and $R_{{\rm in},c}\left(\omega,r\right)$ are linearly independent. 

\end{lemma}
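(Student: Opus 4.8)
The plan is to reduce the asserted linear independence to the nonvanishing of the connection coefficient $\mathfrak A$ of Lemma~\ref{analyticcoefficients}, and then to establish $\mathfrak A\not\equiv 0$ by combining its holomorphicity with a high-frequency analysis of the radial o.d.e.~$(\ref{radode})$. First I would observe that by $(\ref{firstexpansion0})$ we have $R_{{\rm in},c}=\mathfrak A\,R_{{\rm in},+}+\mathfrak B\,R_{{\rm out},+}$, and since $R_{{\rm in},+}$ and $R_{{\rm out},+}$ are linearly independent for $\omega\in\mathcal U\setminus\{0\}$ by Lemma~\ref{odesworkwell}, the pair $R_{{\rm out},+},R_{{\rm in},c}$ is linearly dependent precisely when $\mathfrak A(\omega)=0$. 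Thus it suffices to produce a single real $\omega\neq 0$ with $\mathfrak A(\omega)\neq 0$. Equivalently, evaluating the conserved Wronskian using the indicial behaviour at $r_+$ gives $\mathfrak W(R_{{\rm out},+},R_{{\rm in},c})=2i\omega r_+^2\,\mathfrak A(\omega)$, so the desired independence is exactly the nonvanishing of this Wronskian.

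Next, since $\mathfrak A$ is holomorphic on $\mathcal U\setminus\{0\}$ by Lemma~\ref{analyticcoefficients} — and the basis solutions, hence $\mathfrak A$, extend holomorphically to a connected neighbourhood of the punctured real axis, because for real $\omega\neq0$ the difference of indicial exponents $\tfrac{i\omega}{\kappa_\bullet}$ is non-real and so never a nonzero integer — it is enough to show that $\mathfrak A$ does not vanish identically. Indeed, a holomorphic function that is not identically zero has only isolated zeros, so $\mathfrak A(\omega)\neq 0$ for all but a discrete set of $\omega$, and in particular for some real $\omega\neq0$.

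The heart of the matter, and the step I expect to be the main obstacle, is therefore to prove $\mathfrak A\not\equiv 0$. For this I would pass to the Schr\"odinger (Regge--Wheeler) form of the equation: setting $\Phi=rR$ and using the tortoise coordinate $r^*$ of $(\ref{rstar})$, equation $(\ref{radode})$ becomes $\frac{d^2\Phi}{dr^{*2}}+\left(\omega^2-V(r)\right)\Phi=0$, where $V$ is smooth and decays exponentially as $r^*\to\pm\infty$ (that is, as $r\to r_+,r_c$), since $V\propto\mu\to0$ there. In these variables $R_{{\rm out},+}\sim e^{i\omega r^*}$ as $r^*\to-\infty$ and $R_{{\rm in},c}\sim e^{-i\omega r^*}$ as $r^*\to+\infty$, so that $\mathfrak A$ is, up to the nonzero normalisation factors coming from the conventions $\hat R(\omega,r_\bullet)=1$ and the factors of $r$, the transmission coefficient for the exponentially localised barrier $V$. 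A standard Volterra/Born-iteration estimate for the associated Jost solutions then shows that this transmission coefficient tends to a nonzero constant as $\omega\to+\infty$ along the real axis (high-frequency transparency of $V$); hence $\mathfrak A\not\equiv0$, completing the proof. The delicate points here are purely in bookkeeping: tracking the $r$- and branch-dependent normalisation constants relating the $R_{\bullet}$ to the Jost solutions, and verifying that the iteration converges uniformly on the real axis.

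As a consistency check, and as the source of the companion reflexion statement, I would note that conjugating $(\ref{firstexpansion0})$ for real $\omega$ and invoking the constant Wronskian at $r_c$ yields a flux relation of the schematic form $|\mathfrak B|^2-|\mathfrak A|^2=\mathrm{const}>0$, which immediately gives $\mathfrak B\neq0$. It is worth emphasising that this relation does \emph{not} by itself force $\mathfrak A\neq0$ — it is consistent with $\mathfrak A=0$ — which is exactly why the high-frequency input above is required to conclude.
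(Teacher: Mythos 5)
Your preparatory steps are all correct: by \eqref{firstexpansion0} and the linear independence of $R_{{\rm in},+}$ and $R_{{\rm out},+}$ (Lemma~\ref{odesworkwell}), linear dependence of $R_{{\rm out},+}$ and $R_{{\rm in},c}$ at a frequency $\omega$ is indeed equivalent to $\mathfrak{A}(\omega)=0$; your Wronskian identity $\mathfrak{W}(R_{{\rm out},+},R_{{\rm in},c})=2i\omega r_+^2\,\mathfrak{A}(\omega)$ checks out; and the holomorphy/isolated-zeros logic is exactly the mechanism the paper itself uses later (in Section~\ref{theproofode}) to go from ``$\mathfrak{A}\not\equiv 0$'' to a good choice of $\omega_R$. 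The gap is at the step you yourself flagged as the heart of the matter, and it is fatal as written: you have the asymptotics of $R_{{\rm in},c}$ backwards. From \eqref{rstar}, near $r_c$ one has $r_c-r\sim Ce^{-2\kappa_c r^*}$, so $R_{{\rm in},c}=(r-r_c)^{-\frac{i\omega}{2\kappa_c}}\hat{R}_{{\rm in},c}\sim C'e^{+i\omega r^*}$ as $r^*\to+\infty$, not $e^{-i\omega r^*}$; equivalently, $e^{-i\omega t}R_{{\rm in},c}$ is a function of $u=t-r^*$ times an analytic factor, which is precisely why it extends smoothly to $\mathcal{C}^+$ in the paper's first proof of Theorem~\ref{mainresult}. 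Consequently \eqref{firstexpansion0} expresses the solution that is purely \emph{right}-moving at $r^*=+\infty$ in terms of the left-moving ($\mathfrak{A}$) and right-moving ($\mathfrak{B}$) waves at $r^*=-\infty$: after normalising, $\mathfrak{B}^{-1}R_{{\rm in},c}$ is the scattering solution with incident wave $e^{i\omega r^*}$, reflected wave $(\mathfrak{A}/\mathfrak{B})e^{-i\omega r^*}$ and transmitted wave $(1/\mathfrak{B})e^{i\omega r^*}$, so that $\mathfrak{B}=1/\tau$ and $\mathfrak{A}=\rho/\tau$. Thus $\mathfrak{A}$ is a \emph{reflexion}-type coefficient --- this is why the paper describes Lemma~\ref{unique1} as non-triviality of reflexion --- and high-frequency transparency gives exactly the opposite of what you need: $\tau\to 1$ and $\rho\to 0$, hence $\mathfrak{A}(\omega)\to 0$ as real $\omega\to\pm\infty$ (indeed rapidly, since $V$ is smooth and exponentially decaying). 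Worse, no ``standard Volterra/Born estimate'' can ever certify that a reflexion coefficient is not identically zero, because reflexionless potentials exist (e.g.\ $V=-2\,\mathrm{sech}^2$): one must use specific structure of the present potential. The same mix-up appears in your final paragraph, where the flux identity $|\mathfrak{B}|^2-|\mathfrak{A}|^2=\mathrm{const}>0$ (which is correct) is labelled the ``reflexion'' statement.

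The paper closes exactly this gap with a \emph{zero}-frequency argument rather than a high-frequency one. Assume $R_{{\rm out},+}(\omega,\cdot)=a(\omega)R_{{\rm in},c}(\omega,\cdot)$ for every real $\omega$. The conserved quantity $Q_T=\mathrm{Im}\bigl(\tfrac{dR}{d\tilde r}\overline{R}\bigr)$, evaluated at the two ends using the indicial behaviour, forces $|a(\omega)|^2=r_+^2/r_c^2<1$ for all real $\omega\neq 0$, hence also at $\omega=0$ by continuity of $a$; but at $\omega=0$ the o.d.e.\ \eqref{radode} is explicitly integrable, the two normalised solutions are both the constant $1$, and so $a(0)=1$ --- a contradiction. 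Note where the specific structure enters: the zero-energy ``resonance'' solution $R\equiv 1$ has different weighted amplitudes at the two ends precisely because $r_+\neq r_c$, and this is what makes the reflexion non-trivial near $\omega=0$. If you want to keep your scattering-theoretic framework, the correct patch is a \emph{low}-frequency analysis showing $\rho(\omega)\not\to 0$ as $\omega\to 0$, but that is essentially the paper's proof in disguise.
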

\begin{proof}Suppose, for the sake of contradiction, that for every $\omega \in \mathbb{R}$ there is some $a(\omega) \in \mathbb{C}$ so that $R_{{\rm out},+}\left(\omega,r\right) = a(\omega)R_{{\rm in},c}\left(\omega,r\right)$. 

Set $R\left(r\right) \doteq R_{{\rm out},+}\left(\omega,r\right)$. We have that $R$ will satisfy the radial o.d.e.~\eqref{radode}. It is convenient to define a new $\tilde r(r)$ coordinate by
\[\frac{d\tilde r}{dr} = \left(r^2\mu\right)^{-1},\qquad \tilde r\left(\frac{r_++r_c}{2}\right) = 0.\]
Then~\eqref{radode} becomes
\begin{equation}\label{newradode}
\frac{d^2R}{d{\tilde r}^2} + r^4\omega^2 R = 0.
\end{equation}

Now define
\[Q_T \doteq \text{Im}\left(\frac{dR}{d\tilde r}\overline{R}\right).\]
We clearly have
\[\frac{dQ_T}{d\tilde r} = 0.\]
In particular, using the fundamental theorem of calculus, we obtain
\begin{align}\label{energyid}
0 = \lim_{r\to r_c}Q_T\left(r\right) - \lim_{r\to r_+}Q_T\left(r\right).
\end{align}
Let us now compute $\lim_{r\to r_c}Q_T\left(r\right)$:
\begin{align}\label{atrc}
\lim_{r\to r_c}\text{Im}\left(\frac{dR}{d\tilde r}\overline{R}\right) &= \lim_{r\to r_c} \text{Im}\left(r^2\mu(r)\frac{dR}{dr}\overline{R}\right)
\\ \nonumber &= -2r_c^2\kappa_c\left|a\right|^2\lim_{r\to r_c}\text{Im}\left((r-r_c)\frac{d}{dr}\left(\left(r-r_c\right)^{\frac{-i\omega}{2\kappa_c}}\right)\left(r-r_c\right)^{\frac{i\omega}{2\kappa_c}}\right)
\\ \nonumber &= r_c^2\omega\left|a\right|^2.
\end{align}
Similarly, we may compute
\begin{equation}\label{atrplus}
\lim_{r\to r_+}\text{Im}\left(\frac{dR}{d\tilde r}\overline{R}\right)  = r_+^2\omega.
\end{equation}
Combining~\eqref{energyid},~\eqref{atrc} and~\eqref{atrplus} yields, when $\omega \neq 0$,
\begin{equation}\label{whatwehavetolivewith}
\left|a\right|^2 = \frac{r_+^2}{r_c^2} < 1.
\end{equation}
However, since $a$ must be smooth in $\omega$, by continuity we also have~\eqref{whatwehavetolivewith} when $\omega = 0$.

On the other hand, when $\omega = 0$, the o.d.e.~\eqref{newradode} can be easily solved explicitly and we see that
\[R\left(0,r\right) = 1 \Rightarrow a(0) = 1.\]
This contradicts~\eqref{whatwehavetolivewith}, completing the proof.

\end{proof}

\begin{lemma}\label{unique2}For every $\omega \in \mathbb{R}\setminus\{0\}$ the solutions $R_{{\rm in},+}\left(\omega,r\right)$ and $R_{{\rm in},-}\left(\omega,r\right)$ are linearly independent. 
\end{lemma}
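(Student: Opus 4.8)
The plan is to argue by contradiction using a conserved energy current, exactly in the spirit of the proof of Lemma~\ref{unique1}, but now exploiting that $r_-$ and $r_+$ bound the interior region, where $\mu<0$. Fix $\omega\in\mathbb{R}\setminus\{0\}$ and suppose $R_{{\rm in},+}(\omega,\cdot)$ and $R_{{\rm in},-}(\omega,\cdot)$ were linearly dependent. Since neither solution vanishes identically, this means $R_{{\rm in},+}(\omega,r)=a\,R_{{\rm in},-}(\omega,r)$ for some constant $a\neq0$ and all $r\in(r_-,r_+)$, both functions being genuine solutions of~\eqref{radode} there by Lemma~\ref{odesworkwell}.

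First I would pass to the coordinate $\tilde r$ with $\frac{d\tilde r}{dr}=(r^2\mu)^{-1}$, as in Lemma~\ref{unique1}, so that~\eqref{radode} becomes $\frac{d^2R}{d\tilde r^2}+r^4\omega^2R=0$ with real coefficient, and the current $Q_T\doteq\mathrm{Im}\!\left(\frac{dR}{d\tilde r}\overline{R}\right)=\mathrm{Im}\!\left(r^2\mu\frac{dR}{dr}\overline{R}\right)$ is conserved on $(r_-,r_+)$. The contradiction will come from evaluating the (equal) one-sided limits of $Q_T$ at the two endpoints. Near $r_+$ one uses $R=R_{{\rm in},+}\sim(r-r_+)^{-i\omega/2\kappa_+}$ together with $r^2\mu\sim 2\kappa_+r_+^2(r-r_+)$ (recall $\mu'(r_+)=2\kappa_+>0$), which gives $\lim_{r\to r_+}Q_T=-\omega r_+^2c_+$, where $c_+>0$ is the squared modulus of the branch factor $(r-r_+)^{-i\omega/2\kappa_+}$ on the interior side $r<r_+$. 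Near $r_-$ one uses $R=a\,R_{{\rm in},-}\sim a(r-r_-)^{-i\omega/2\kappa_-}$, but now $r^2\mu\sim-2\kappa_-r_-^2(r-r_-)$ (since $\mu'(r_-)=-2\kappa_-<0$) and the branch factor has modulus one because $r-r_->0$, giving $\lim_{r\to r_-}Q_T=+\omega r_-^2|a|^2$.

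Conservation of $Q_T$ then forces $-\omega r_+^2c_+=\omega r_-^2|a|^2$, which for $\omega\neq0$ is impossible: the left-hand side is strictly negative while the right-hand side is nonnegative. This is the whole argument. The main point to get right is the bookkeeping of signs, and the decisive structural fact---which is precisely why the contradiction is immediate for every real $\omega\neq0$, rather than requiring passage to $\omega=0$ as in Lemma~\ref{unique1}---is that the two horizons bounding the interior carry oppositely signed $\mu'$, namely $\mu'(r_+)>0$ but $\mu'(r_-)<0$. A purely ingoing solution at both ends is thus forced to transport fluxes of opposite sign across the interior, which cannot happen. The only real subtlety is the branch of $(r-r_+)^{-i\omega/2\kappa_+}$ for $r<r_+$; but since only its positive modulus $c_+$ enters $Q_T$, the choice of branch is irrelevant to the sign. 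If one prefers to avoid assuming proportionality outright, the same current computation applied to the decomposition~\eqref{secondexpansion0}, $R_{{\rm in},+}=\widetilde{\mathfrak A}R_{{\rm in},-}+\widetilde{\mathfrak B}R_{{\rm out},-}$, yields $r_-^2\big(|\widetilde{\mathfrak B}|^2-|\widetilde{\mathfrak A}|^2\big)=r_+^2c_+>0$, whence $\widetilde{\mathfrak B}(\omega)\neq0$, which is exactly the asserted linear independence.
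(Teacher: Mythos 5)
Your proof is correct and is essentially the paper's own argument: the same conserved current $Q_T=\mathrm{Im}\bigl(\frac{dR}{d\tilde r}\overline{R}\bigr)$ in the same $\tilde r$ coordinate, the same endpoint evaluations $Q_T\to-\omega r_+^2c_+$ at $r_+$ and $Q_T\to+\omega r_-^2|a|^2$ at $r_-$, and the same sign contradiction for $\omega\neq 0$ (the paper simply takes the branch constant $c_+$ to be $1$). One caveat about your comparative aside: in Lemma~\ref{unique1} the two endpoints $r_+$ and $r_c$ \emph{also} carry oppositely signed $\mu'$ (since $\mu'(r_+)=2\kappa_+>0$ while $\mu'(r_c)=-2\kappa_c<0$), so what actually produces same-sign fluxes there (and hence forces the passage to $\omega=0$) is that one of the two solutions considered in that lemma is outgoing rather than ingoing, whereas here both are ingoing.
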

\begin{proof}We proceed in a similar fashion to the proof of Lemma~\ref{unique1}. Fix an arbitrary $\omega \in \mathbb{R}\setminus\{0\}$ and suppose, for the sake of contradiction, that
\[R_{{\rm in},+}(\omega,r) = a R_{{\rm in},-}\left(\omega,r\right)\]
for some $a \in \mathbb{C}$. 

We set $R(r) = R_{{\rm in},+}(\omega,r)$ and introduce the $\tilde r(r)$ coordinate by
\[\frac{d\tilde r}{dr} = \left(r^2\mu\right)^{-1},\qquad \tilde r\left(\frac{r_-+r_+}{2}\right) = 0,\]
Then~\eqref{radode} becomes
\begin{equation}\label{newradode}
\frac{d^2R}{d{\tilde r}^2} + r^4\omega^2 R = 0,
\end{equation}
and, just as in the proof of Lemma~\ref{unique1}, we see that if we set
\[Q_T \doteq \text{Im}\left(\frac{dR}{d\tilde r}\overline{R}\right),\]
then
\begin{equation}\label{ohmygoditisconserved}
\frac{d}{d\tilde r}Q_T = 0.
\end{equation}

One computes
\[Q_T\left(r_+\right) = -r_+^2\omega,\qquad Q_T\left(r_-\right) = |a|^2r_-^2\omega.\]
Together with~\eqref{ohmygoditisconserved} we obtain
\[|a|^2r_-^2 + r_+^2 = 0,\]
which is clearly a contradiction.
\end{proof}

\subsection{Proof of Theorem~\ref{mainresult}}\label{theproofode}

Now we are ready to prove Theorem~\ref{mainresult}.
\begin{proof}
Recall the fundamental inequality $\kappa_+<\kappa_-$ satisfied by the surface gravities
(Lemma~\ref{fundsurfgrav}).
We may thus choose $\hat{\kappa}$ satisfying $\kappa_+ < \hat{\kappa} < \kappa_-$. 
We also introduce a parameter $\omega_R \in \mathbb{R}\setminus \{0\}$ which will be fixed later in the proof. 

We start by defining a mode solution $\psi$ in the static region
$\mathcal{M}_{\rm static}$ where $r \in (r_+,r_c)$ by
\begin{align}\label{thisisus}
\psi\left(t,r\right) &\doteq e^{-it\left(\omega_R - i\frac{\hat{\kappa}}{2}\right)}R_{{\rm in},c}\left(\omega_R - i\frac{\hat{\kappa}}{2},r\right)
\\ \nonumber &= e^{-it\left(\omega_R - i\frac{\hat{\kappa}}{2}\right)}\left(r-r_c\right)^{-\frac{i\left(\omega_R - i\frac{\hat{\kappa}}{2}\right)}{2\kappa_c}}\hat{R}_{{\rm in},c}\left(\omega_R - i\frac{\hat{\kappa}}{2},r\right).
\end{align}

It follows from~\eqref{rstar} that near $r = r_c$, we have
\[\psi(t,r) = e^{-i\left(\omega_R-i\frac{\hat{\kappa}}{2}\right)\left(t-r^*\right)}\tilde{R}\left(r\right),\]
where $\tilde{R}(r)$ is analytic for $r$ near $r_c$. In particular, it is clear by working in the ingoing Eddington--Finkelstein coordinates $\left(u,r,\theta,\phi\right)$ of Section~\ref{ingoingfir} that $\psi$ extends smoothly to $\mathcal{C}^+$. 

Next, we want to extend $\psi$ to $\mathcal{M}_{{\rm interior}}$ also as a mode solution. First, using Lemma~\ref{analyticcoefficients}, we expand
\begin{equation}\label{firstexpansion}
\psi(t,r) = e^{-it\left(\omega_R -i\frac{\hat{\kappa}}{2}\right)}\left[\mathfrak A\left(\omega_R-i\frac{\hat{\kappa}}{2}\right)R_{{\rm in},+}\left(\omega_R-i\frac{\hat{\kappa}}{2},r\right) + \mathfrak B\left(\omega_R-i\frac{\hat{\kappa}}{2}\right)R_{{\rm out},+}\left(\omega_R-i\frac{\hat{\kappa}}{2},r\right)\right].
\end{equation}
Then we define $\psi$ in the interior region where $r \in (r_-,r_+)$  by the following:\[\ \psi(t,r) = \mathfrak A\left(\omega_R-i\frac{\hat{\kappa}}{2}\right)e^{-it\left(\omega_R -i\frac{\hat{\kappa}}{2}\right)}R_{{\rm in},+}\left(\omega_R-i\frac{\hat{\kappa}}{2},r\right) .\]

We observe that in outgoing Eddington--Finklestein coordinates $\left(v,r,\theta,\phi\right)$
of Section~\ref{outgoingcorfir}, the expression~\eqref{firstexpansion} becomes
\begin{align}\label{secondexpansion}
&\psi(t,r) = e^{-iv\left(\omega_R - i\frac{\hat{\kappa}}{2}\right)}\Bigg[ \mathfrak A\left(\omega_R-i\frac{\hat{\kappa}}{2}\right)\hat{R}_{{\rm in},+}\left(\omega_R-i\frac{\hat{\kappa}}{2},r\right) 
\\ \nonumber &\qquad \qquad +\mathfrak B\left(\omega_R-i\frac{\hat{\kappa}}{2}\right)\left(r-r_+\right)^{\frac{i\left(\omega_R-i\frac{\hat{\kappa}}{2}\right)}{\kappa_+}}\hat{R}_{{\rm out},+}\left(\omega_R-i\frac{\hat{\kappa}}{2},r\right)\Bigg].
\end{align}
In particular, it is clear that if we now consider $\psi$ as a function on $\mathcal{M}_{{\rm static}}\cup\mathcal{H}^+_A \cup\mathcal{M}_{{\rm interior}}$, it continuously extends to the horizon $\mathcal{H}^+_A$. Next, we claim that irrespective of the values of $\mathfrak A\left(\omega_R-i\frac{\hat{\kappa}}{2}\right)$ and $\mathfrak B\left(\omega_R-i\frac{\hat{\kappa}}{2}\right)$, there exists a sufficiently small $\epsilon > 0$ so that we will have that 
\begin{equation}\label{regS}
\left(\psi|_{\mathcal{S}},n_{\mathcal{S}}\psi|_{\mathcal{S}}\right)\in H^{1+\epsilon}\left(\mathcal{S}\right) \times H^{\epsilon}\left(\mathcal{S}\right)
\end{equation}
for any compact spacelike hypersurface-with-boundary $\mathcal{S} \subset \mathcal{M}_{{\rm static}}\cup\mathcal{H}^+_A\cup\mathcal{M}_{{\rm interior}}$. Indeed, since the terms $e^{-iv\left(\omega_R - i\frac{\hat{\kappa}}{2}\right)}$, $\hat{R}_{{\rm in},+}\left(\omega_R-i\frac{\hat{\kappa}}{2},r\right)$ and $\hat{R}_{{\rm out},+}\left(\omega_R-i\frac{\hat{\kappa}}{2},r\right)$  are smooth expressions for $r \in (r_-,r_c)$ in the $(v,r,\theta,\phi)$ coordinates, it suffices to examine $\left(r-r_+\right)^{\frac{i\left(\omega_R-i\frac{\hat{\kappa}}{2}\right)}{\kappa_+}}$. However, since $\frac{\hat{\kappa}}{\kappa_+} > 1$, it is immediate that $\left(r-r_+\right)^{\frac{i\left(\omega_R-i\frac{\hat{\kappa}}{2}\right)}{\kappa_+}}$ lies in $H^{1+\epsilon}\left([r_+,r_++1]\right)$ for suitable $0 < \epsilon \ll 1$ and we thus obtain~\eqref{regS}.

Next we note that $\psi$ is easily seen to be a weak solution of the wave equation
$(\ref{linearwaveequation})$, and, 
furthermore, defining $(\Psi,\Psi')\doteq (\psi|_{\Sigma},n_{\Sigma}\psi|_{\Sigma})$, we have shown that
\[(\Psi,\Psi') \in H^{1+\epsilon}(\Sigma) \times H^{\epsilon}(\Sigma),\]
and thus $\psi$ coincides in $D^+(\Sigma)$
with the solution produced by Proposition~\ref{local}.

Now, to finish the proof it suffices to show that  $\psi$ satisfies 
$(\ref{toblowup123})$
for any spherically symmetric null hypersurface $\mathcal{N}$ intersecting $\mathcal{CH}^+_A$ transversally.  To see why this is true,  we use Lemma~\ref{analyticcoefficients} to expand $\psi$ near $r=r_-$ as
\begin{align*}
&\psi(t,r) = \mathfrak A\left(\omega_R-i\frac{\hat{\kappa}}{2}\right)e^{-it\left(\omega_R -i\frac{\hat{\kappa}}{2}\right)}\Bigg[\mathfrak{\widetilde{A}}\left(\omega_R-i\frac{\hat{\kappa}}{2}\right)R_{{\rm in},-}\left(\omega_R-i\frac{\hat{\kappa}}{2},r\right) \\ \nonumber &\qquad\qquad + \mathfrak{\widetilde{B}}\left(\omega_R-i\frac{\hat{\kappa}}{2}\right)R_{{\rm out},-}\left(\omega_R-i\frac{\hat{\kappa}}{2},r\right) \Bigg].
\end{align*}

In order to understand the regularity as we approach $\mathcal{CH}^+_A$, we re-write this in 
the ingoing coordinates $(\tilde u,r,\theta,\phi)$ of Section~\ref{ingoingcoorsec}. We obtain
\begin{align*}
&\psi(\tilde u,r) = \mathfrak A\left(\omega_R-i\frac{\hat{\kappa}}{2}\right)e^{i\tilde u\left(\omega_R -i\frac{\hat{\kappa}}{2}\right)}\Bigg[\mathfrak{\widetilde{A}}\left(\omega_R-i\frac{\hat{\kappa}}{2}\right)\hat{R}_{{\rm in},-}\left(\omega_R-i\frac{\hat{\kappa}}{2},r\right) 
\\ \nonumber &\qquad \qquad + \mathfrak{\widetilde{B}}\left(\omega_R-i\frac{\hat{\kappa}}{2}\right)\left(r-r_-\right)^{\frac{i\left(\omega_R - i\frac{\hat{\kappa}}{2}\right)}{\kappa_-}}\hat{R}_{{\rm out},-}\left(\omega_R-i\frac{\hat{\kappa}}{2},r\right) \Bigg].
\end{align*}

Since $\frac{\hat{\kappa}}{\kappa_-} < 1$, we have that 
\[\left\vert\left\vert \left(r-r_-\right)^{\frac{i\left(\omega_R - i\frac{\hat{\kappa}}{2}\right)}{\kappa_-}}\right\vert\right\vert_{H^1\left([r_-,r_-+1]\right)} = \infty.\]

In particular, if we can arrange so that $\mathfrak{A}\left(\omega_R-i\frac{\hat{\kappa}}{2}\right)\mathfrak{\widetilde B}\left(\omega_R-i\frac{\hat{\kappa}}{2}\right) \neq 0$, then we will have established~\eqref{toblowup123} and finished the proof. By Lemma~\ref{analyticcoefficients}, the functions $\mathfrak{A}\left(\omega\right)$ and $\mathfrak{\widetilde B}(\omega)$ are analytic. Thus, either (a) one of $\mathfrak{A}(\omega)$ or $\mathfrak{\widetilde B}(\omega)$ vanish identically, or 
(b) the zeros of $\mathfrak{A}(\omega)\mathfrak{\widetilde B}(\omega)$ form a discrete set. Fortunately, Lemmas~\ref{unique1} and~\ref{unique2} are sufficient to exclude possibility (a).  
Thus, we can indeed find an $\omega_R \in \mathbb{R}\setminus \{0\}$ such that $\mathfrak{A}\left(\omega_R-i\frac{\hat{\kappa}}{2}\right)\mathfrak{\widetilde B}\left(\omega_R-i\frac{\hat{\kappa}}{2}\right) \neq 0$ and the proof is concluded.

\end{proof}

\section{Second proof of Theorem~\ref{mainresult}: exploiting time-translation invariance}
\label{timetranslate}
In this section we will give our second proof of Theorem~\ref{mainresult} by exploiting time translation with respect to the Killing field $T$. We will in fact construct a solution to $(\ref{linearwaveequation})$
globally in the region $\mathcal{M}$ by prescribing 
scattering data on $(\mathcal{H}^+_B\cup\mathcal{B}_+\cup\mathcal{H}^-)\cup\mathcal{C}^-$
and then obtain Theorem~\ref{mainresult} by restricting the solution to $\mathcal{D}^+_{\mathcal{M}}\left(\Sigma\right)$.

We start in Section~\ref{prelimtime} by recalling various previous results which will be useful. Next, in Section~\ref{thekey} we prove Propositon~\ref{theargheart} which constructs a $1$-parameter family of data along $\mathcal{H}^-$, which are uniformly bounded $H^{1+\epsilon}(\mathcal{H}^-)$, so that the $H^1$ norm of the corresponding solution along $\mathcal{CH}^+_B$ can be arbitrarily large. Finally, in Section~\ref{finishhim} we use Proposition~\ref{theargheart} and the uniform boundedness principle to complete the proof of Theorem~\ref{mainresult}.

\subsection{Preliminary results}\label{prelimtime}
We start by introducing some notation and reviewing known results which will be used in the proof of Theorem~\ref{mainresult}. 

It is useful to introduce the following norms.
\begin{definition}For any spherically symmetric function $\psi$ such that $\frac{\partial\psi}{\partial v}|_{\mathcal{H}^-}$ is measurable, we set
\[\left\vert\left\vert \psi\right\vert\right\vert_{\dot{H}^1\left(\mathcal{H}^-\right)}^2 \doteq \int_{-\infty}^0\left|\frac{\partial \psi}{\partial u}|_{\mathcal{H}^-}\right|^2\, du + \int_{-1}^0\left|\frac{\partial \psi}{\partial U_+}\right|^2\, dU_+.\]

Similarly, if $\psi$ is spherically symmetric and $\frac{\partial\psi}{\partial v}|_{\mathcal{CH}^+_B}$ is measurable, we set
\[\left\vert\left\vert \psi\right\vert\right\vert_{\dot{H}^1\left(\mathcal{CH}^+_A\right)}^2 \doteq \int_{-\infty}^0\left|\frac{\partial \psi}{\partial v}|_{\mathcal{H}^-}\right|^2\, dv + \int_{-1}^0\left|\frac{\partial \psi}{\partial V_-}\right|^2\, dV_-.\]

Note, in particular, that if $\psi$ is supported near the bifurcation sphere $\mathcal{B}_+$ the  $\left\vert\left\vert\cdot\right\vert\right\vert_{\dot{H}^1\left(\mathcal{H}^-\right)}$ norm is finite if and only if $\psi$ lies in $\dot{H}^1_{\rm loc}$ along $\mathcal{H}^-\cup\mathcal{B}_+$. An analogous statement holds for the bifurcation sphere $\mathcal{B}_-$ and 
the $\left\vert\left\vert\cdot\right\vert\right\vert_{\dot{H}^1\left(\mathcal{CH}^+_B\right)}$ norm.
\end{definition}

Recall from Section~\ref{krusky} that $\mathcal{H}^-\cup\mathcal{B}_+\cup\mathcal{H}^+_B$ is a smooth null hypersurface. In particular, a solution $\psi$ in $\mathcal{M}$ is uniquely determined by its characteristic data along $\mathcal{H}^-\cup\mathcal{B}_+\cup\mathcal{H}^+_B$ and $\mathcal{C}^-$.

\begin{theorem}\label{frantheo}Let $\Psi : \mathcal{H}^- \to \mathbb{R}$ be smooth, compactly supported and spherically symmetric, and let $\psi$ be the unique solution to the wave equation
$(\ref{linearwaveequation})$ on $\mathcal{M}$ such that $\psi|_{\mathcal{H}^+_B\cup \mathcal{B}_+} = 0$, $\psi|_{\mathcal{H}^-} = \Psi$ and $\psi|_{\mathcal{C}^-} = 0$.

Then $\psi$ continuously extends to  $\mathcal{CH}^+_B \cup \mathcal{CH}^+_A \cup \mathcal{B}_-$ and satisfies
\begin{align*}
&\sup_{\mathcal{CH}^+_B \cup \mathcal{CH}^+_A \cup \mathcal{B}_-}\left|\psi\right|^2 + \int_{-\infty}^{\infty}\left|\frac{\partial\psi}{\partial v}|_{\mathcal{CH}^+_B}\right|^2\, dv  +  \int_{-\infty}^{\infty}\left|\frac{\partial \psi}{\partial u}|_{\mathcal{CH}^+_A}\right|^2\, du  \lesssim \left\vert\left\vert \Psi\right\vert\right\vert^2_{\dot{H}\left(\mathcal{H}^-\right) \cap L^{\infty}\left(\mathcal{H}^-\right)}.
\end{align*}
\end{theorem}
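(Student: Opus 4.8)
The plan is to prove the two assertions — continuous extendibility and the quantitative bound — by separate mechanisms, treating the amplitude and the tangential energy flux differently, and to reduce everything to an estimate in $\mathcal{M}_{\rm interior}$ after first propagating the data through the static region. First I would control $\psi$ on the event horizon $\mathcal{H}^+_A$ that bounds the interior. Since $\Psi$ is smooth and compactly supported on $\mathcal{H}^-$ while $\psi|_{\mathcal{C}^-}=0$, the solution in $\mathcal{M}_{\rm static}$ decays exponentially towards the future: this is the standard decay statement for the wave equation in the region between the event and cosmological horizons when $\Lambda>0$ (see~\cite{Dafermos:2007jd, Haefner, vasy}). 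Concretely, on $\mathcal{H}^+_A$ parametrised by the advanced coordinate $v$ this yields a pointwise bound $|\partial_v\psi|\lesssim e^{-\alpha v}$ for some $\alpha>0$, together with $\sup_{\mathcal{H}^+_A}|\psi|\lesssim \|\Psi\|_{\dot H^1(\mathcal{H}^-)\cap L^\infty(\mathcal{H}^-)}$; in particular the $T$-energy flux through $\mathcal{H}^+_A$ is finite and bounded by the right-hand side. To transfer the flux I would use conservation of the $T$-energy current in $\mathcal{M}_{\rm static}$, where $T$ is timelike and hence the current is coercive, so the flux on $\mathcal{H}^-\cup\mathcal{C}^-$ (with vanishing $\mathcal{C}^-$ contribution) controls that on $\mathcal{H}^+_A$.

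For the two Cauchy-horizon flux integrals in the statement I would argue entirely by the $T$-Killing energy identity, now in the interior. Because $T$ is Killing, its energy current is divergence-free, so integrating over the interior region bounded to the past by $\mathcal{H}^+_A\cup\mathcal{H}^+_B$ and to the future by $\mathcal{CH}^+_B\cup\mathcal{CH}^+_A$ gives an exact identity with no bulk term. For spherically symmetric $\psi$ the flux through a null hypersurface along which $T$ is tangent is sign-definite and reduces precisely to $\int|\partial_v\psi|^2\,dv$ on $\mathcal{CH}^+_B$ and to $\int|\partial_u\psi|^2\,du$ on $\mathcal{CH}^+_A$; the same holds on $\mathcal{H}^+_A$, where $T$ is likewise null and tangent. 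Since the flux through $\mathcal{H}^+_B$ vanishes (zero data there) and that through $\mathcal{H}^+_A$ is finite by the previous step, the identity immediately bounds the Cauchy-horizon fluxes by $\|\Psi\|^2_{\dot H^1(\mathcal{H}^-)}$. The only care needed is at the bifurcation spheres, where $T$ degenerates; this is exactly why the norm $\|\cdot\|_{\dot H^1(\mathcal{H}^-)}$ carries the extra Kruskal-coordinate ($U_+$) term near $\mathcal{B}_+$, and the corresponding contributions near $\mathcal{B}_-$ are treated by passing to the regular coordinates $V_-$, $U_-$.

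For the amplitude bound and the continuous extension I would pass to the double-null description in the interior. Writing $\phi=r\psi$, the spherically symmetric equation takes the form $\partial_u\partial_v\phi=\mathcal{V}\,\phi$ with a coefficient satisfying $|\mathcal{V}|\lesssim|\mu|$, where $\mu$ vanishes at $r=r_-$ and, toward $\mathcal{CH}^+_B$, decays exponentially, $|\mu|\lesssim e^{-2\kappa_- v}$ (and analogously toward $\mathcal{CH}^+_A$). Integrating the equation over a characteristic rectangle reaching up to the Cauchy horizon and running a bootstrap on $\sup|\phi|$, the exponential decay of $\mathcal{V}$ renders the double integral of the right-hand side convergent and small near the horizon; together with the bounded, exponentially decaying data on $\mathcal{H}^+_A$ and the vanishing data on $\mathcal{H}^+_B$ this closes to give $\sup_{\mathcal{M}_{\rm interior}}|\phi|\lesssim\|\Psi\|_{\dot H^1(\mathcal{H}^-)\cap L^\infty(\mathcal{H}^-)}$, hence the claimed bound on $\sup|\psi|^2$ since $r\geq r_->0$. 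Continuous extension to $\mathcal{CH}^+_B\cup\mathcal{CH}^+_A\cup\mathcal{B}_-$ then follows because along each characteristic the finiteness of $\int|\partial_v\phi|^2\,dv$ forces $\phi(u,\cdot)$ to possess a limit as the horizon is approached, while the uniform estimates provide the equicontinuity needed to take the trace continuously up to the corner $\mathcal{B}_-$.

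The main obstacle is the amplitude bound up to $\mathcal{CH}^+$: unlike the tangential flux, it cannot be extracted from a single conservation law, since the blue-shift associated with $\kappa_->\kappa_+$ (Lemma~\ref{fundsurfgrav}) amplifies derivatives and the naive energy is not uniformly bounded. The content of the Franzen-type argument is precisely that it is the exponential decay of $\mu$ toward the Cauchy horizon, combined with \emph{any} exponential decay rate $\alpha>0$ of the data coming from the static region, that permits the $L^\infty$ bootstrap to close even though the transversal derivative norms genuinely diverge; reconciling the finite tangential flux and bounded amplitude with this transversal blow-up is the delicate point of the estimate.
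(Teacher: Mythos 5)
Your overall architecture---exponential decay in $\mathcal{M}_{\rm static}$ \`a la~\cite{Dafermos:2007jd} to control the data induced on $\mathcal{H}^+_A$, followed by an estimate in $\mathcal{M}_{\rm interior}$---is the same as the paper's, which simply invokes adaptations of~\cite{Dafermos:2007jd} and~\cite{annefranzen}. However, your second step, which claims the two Cauchy-horizon flux integrals follow ``immediately'' from the conserved $T$-energy identity in the interior, contains a genuine error, and it is precisely the error whose avoidance is the content of~\cite{annefranzen}. The flux of the $T$-current through a Killing horizon is indeed $\pm\int(T\psi)^2$, but the sign is dictated by whether $T$ is future- or past-directed there, and the paper is explicit (Section~\ref{ingoingcoorsec}) that $T$ is \emph{past}-directed on $\mathcal{H}^+_B$ and on $\mathcal{CH}^+_A$, while future-directed on $\mathcal{H}^+_A$ and $\mathcal{CH}^+_B$. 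Concretely, in interior double-null coordinates one has $T=\partial_v-\partial_{\tilde u}$, and for spherically symmetric $\psi$ with vanishing data on $\mathcal{H}^+_B$ the $T$-identity over all of $\mathcal{M}_{\rm interior}$ reads, up to constants,
\begin{equation*}
\int_{\mathcal{CH}^+_B}\left|\partial_v\psi\right|^2dv \;=\; \frac{r_+^2}{r_-^2}\int_{\mathcal{H}^+_A}\left|\partial_v\psi\right|^2dv \;+\; \int_{\mathcal{CH}^+_A}\left|\partial_{\tilde u}\psi\right|^2d\tilde u,
\end{equation*}
so the $\mathcal{CH}^+_A$ flux sits on the \emph{same side} as the data: the identity controls only the difference of the two Cauchy-horizon fluxes and is perfectly consistent with both being infinite. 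This is the familiar fact that $T$ is spacelike in the interior, so its conserved current is not coercive there (it is a momentum, not an energy); no single conservation law can yield the bound. Indeed, if your argument were valid it would apply verbatim to the rough data constructed in Section~\ref{timetranslate} of the paper (whose solutions also have finite $T$-flux through $\mathcal{H}^+_A$, by the static-region conservation you describe), and would thus contradict the paper's main blow-up theorem. The finiteness of the Cauchy-horizon fluxes for \emph{smooth} data genuinely requires the weighted, blue-shift-adapted multiplier estimates of~\cite{annefranzen}, where the smoothness enters through the decay rate along $\mathcal{H}^+_A$ competing against weights tuned to $\kappa_-$.

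Two secondary points. First, your concluding continuity argument is too weak: finiteness of $\int|\partial_v\phi|^2\,dv$ along a characteristic does not force $\phi(u,\cdot)$ to have a limit (Cauchy--Schwarz only gives $|\phi(u,v_2)-\phi(u,v_1)|\lesssim\sqrt{v_2-v_1}$); one needs an integrable---e.g.\ exponentially decaying---pointwise bound on the tangential derivative. Second, the same defect blocks the natural attempt to rescue the flux bounds from your $L^\infty$ bootstrap: integrating $\partial_{\tilde u}\partial_v\phi=-V\phi$ in $\tilde u$ at fixed $v$ gives $|\partial_v\phi|_{\mathcal{CH}^+_B}|\le|\partial_v\phi|_{\mathcal{H}^+_A}|+C\sup|\phi|$, and the constant term is not square-integrable over $v\in(-\infty,\infty)$. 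To do better one must propagate the exponential decay of the $\mathcal{H}^+_A$ data into the interior with rates compared against $\kappa_+$ and $\kappa_-$, which is again the actual content of the cited works rather than a formality.
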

\begin{proof}This follows from a straightforward adaption of~\cite{Dafermos:2007jd}  and of~\cite{annefranzen} to  the Reissner--Nordstr\"{o}m--de Sitter setting.
\end{proof}

We now fix some notation for the ``seed data'' we will use to construct our desired solutions.

\begin{definition}\label{seed}Let $\Phi_{\mathcal{H}^-}(u) : \mathcal{H}^- \to \mathbb{R}$ be a non-zero smooth function compactly supported in $\{u \in (-1,0)\}$ and spherically symmetric, and let $\varphi$ be the corresponding spherically symmetric solution to the wave equation $(\ref{linearwaveequation})$ on $\mathcal{M}$ so that $\varphi|_{\mathcal{H}_-} = \Phi_{\mathcal{H}^-}$, $\varphi|_{\mathcal{H}^+_B\cup\mathcal{B}_+} = 0$ and $\varphi|_{\mathcal{C}^-} = 0$. 
\end{definition}

The next proposition is a statement of non-trivial reflection to $\mathcal{H}^+_A$.
\begin{proposition}\label{itwenttohplus} Let $\varphi$ denote the solution from Definition~\ref{seed}. Then we have 
\[\int_{-\infty}^{\infty}\left|\frac{\partial \varphi}{\partial v}|_{\mathcal{H}^+_A}\right|^2\, dv  > 0.\]
\end{proposition}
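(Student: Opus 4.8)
The plan is to pass to the frequency domain with respect to the Killing field $T$ and to recognise $\int_{-\infty}^\infty |\tfrac{\partial\varphi}{\partial v}|_{\mathcal{H}^+_A}|^2\, dv$ as a quantity controlled by the reflection coefficient $\mathfrak A(\omega)$ of Lemma~\ref{analyticcoefficients}; its non-triviality is then precisely the content of Lemma~\ref{unique1}. Concretely, I would argue by contradiction, assuming $\frac{\partial\varphi}{\partial v}\big|_{\mathcal{H}^+_A}=0$. Since $\varphi|_{\mathcal{B}_+}=0$ and $\mathcal{B}_+$ is the past limit of $\mathcal{H}^+_A$, integrating along $\mathcal{H}^+_A$ then gives $\varphi|_{\mathcal{H}^+_A}\equiv 0$. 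Because $\Phi_{\mathcal{H}^-}$ is smooth and compactly supported in $u$, the solution $\varphi$ is supported to the future of a compact set and decays in $t$ in the static region (by the estimates underlying Theorem~\ref{frantheo}), so one may take an honest $L^2$ Fourier transform in $t$. As $T=\partial_u$ on $\mathcal{H}^-$ and $T=\partial_v$ on $\mathcal{H}^+_A$, this coincides with the $u$-, resp.\ $v$-transform of the boundary traces; let $\hat\Phi(\omega)$ denote the transform of $\Phi_{\mathcal{H}^-}$, which by Paley--Wiener is entire and, since $\Phi_{\mathcal{H}^-}\not\equiv 0$, is nonzero away from a discrete set.

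For each real $\omega\neq 0$ the profile $\hat\varphi(\omega,\cdot)$ solves the radial o.d.e.~\eqref{radode}. The condition $\varphi|_{\mathcal{C}^-}=0$ encodes the absence of data incoming from the past cosmological horizon; since (in the notation of Lemma~\ref{odesworkwell}) $R_{{\rm out},c}\sim e^{-i\omega v}$ near $r_c$ is the component carrying the $\mathcal{C}^-$ data, this forces $\hat\varphi(\omega,\cdot)=A(\omega)\,R_{{\rm in},c}(\omega,\cdot)$ for a scalar $A(\omega)$. Expanding near $r_+$ via Lemma~\ref{analyticcoefficients} gives
\[
\hat\varphi(\omega,r)=A(\omega)\mathfrak A(\omega)\,R_{{\rm in},+}(\omega,r)+A(\omega)\mathfrak B(\omega)\,R_{{\rm out},+}(\omega,r).
\]
Since $R_{{\rm in},+}\sim e^{-i\omega v}$ and $R_{{\rm out},+}\sim e^{-i\omega u}$ as $r\to r_+$, the first term is the part of $\varphi$ whose trace survives on $\mathcal{H}^+_A$, while the second is the outgoing part, whose trace on $\mathcal{H}^-$ is $\Phi_{\mathcal{H}^-}$. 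Hence, up to fixed nonzero constants, $A(\omega)\mathfrak B(\omega)=\hat\Phi(\omega)$ is the transform of the data on $\mathcal{H}^-$, and $A(\omega)\mathfrak A(\omega)$ is the transform of $\varphi|_{\mathcal{H}^+_A}$.

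Now combine. From $\Phi_{\mathcal{H}^-}\not\equiv 0$ we get $A(\omega)\mathfrak B(\omega)=c\,\hat\Phi(\omega)\neq 0$ for a.e.\ real $\omega$, so $A(\omega)\neq 0$ for a.e.\ $\omega$. On the other hand, $\varphi|_{\mathcal{H}^+_A}\equiv 0$ forces $A(\omega)\mathfrak A(\omega)=0$ for all $\omega$. Therefore $\mathfrak A(\omega)=0$ for a.e.\ real $\omega$, and since $\mathfrak A$ is holomorphic (Lemma~\ref{analyticcoefficients}) this yields $\mathfrak A\equiv 0$. But $\mathfrak A\equiv 0$ means $R_{{\rm in},c}=\mathfrak B\,R_{{\rm out},+}$ for every $\omega$, i.e.\ $R_{{\rm in},c}$ and $R_{{\rm out},+}$ are everywhere linearly dependent, directly contradicting Lemma~\ref{unique1}. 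Hence $\frac{\partial\varphi}{\partial v}|_{\mathcal{H}^+_A}\not\equiv 0$, and by Plancherel $\int_{-\infty}^\infty|\tfrac{\partial\varphi}{\partial v}|_{\mathcal{H}^+_A}|^2\,dv=c\int_{-\infty}^\infty\omega^2|A(\omega)\mathfrak A(\omega)|^2\,d\omega>0$.

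The main obstacle is the rigorous justification of the frequency-domain representation rather than the algebra: one must confirm that $\varphi$ and its trace on $\mathcal{H}^+_A$ decay enough in $t$ to license the $L^2$ Fourier transform and Plancherel, and that the vanishing condition at $\mathcal{C}^-$ really selects $R_{{\rm in},c}$ frequency-by-frequency. Both follow from the boundedness and (exponential) decay estimates in the static region from~\cite{Dafermos:2007jd} underpinning Theorem~\ref{frantheo}; the identification of physical-space traces with the o.d.e.\ coefficients $A\mathfrak A$, $A\mathfrak B$ is then the standard dictionary between characteristic data and the basis solutions of Lemma~\ref{odesworkwell}.
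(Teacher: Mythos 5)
Your proposal is correct and takes essentially the same route as the paper: the paper's proof of this proposition consists precisely of citing Lemma~\ref{unique1} together with ``a minor adaption of the proof of Theorem 10 from~\cite{Dafermos:2014jwa}'', and that adaptation is exactly the frequency-domain argument you reconstruct (Fourier transform in $t$ licensed by decay, identification of the traces on $\mathcal{H}^-$ and $\mathcal{H}^+_A$ with the coefficients $A\mathfrak{B}$ and $A\mathfrak{A}$, holomorphy of $\mathfrak{A}$, and exclusion of $\mathfrak{A}\equiv 0$ via Lemma~\ref{unique1}, which is also how the paper itself uses that lemma in Section~\ref{theproofode}). The technical caveats you flag---justifying the $L^2$ Fourier representation and the frequency-by-frequency selection of $R_{{\rm in},c}$ by the no-incoming-radiation condition on $\mathcal{C}^-$---are exactly the content of the cited adaptation, handled as you indicate by the decay estimates underlying Theorem~\ref{frantheo}.
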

\begin{proof}This follows from Lemma~\ref{unique1} and a minor adaption of the proof of Theorem 10 from~\cite{Dafermos:2014jwa}.
\end{proof}

The next proposition is the statement that there is non-zero transmission to $\mathcal{CH}^+_B$.
\begin{proposition}\label{ittransmitsyay} Let $\varphi$ denote the solution from Definition~\ref{seed}. Then we have 
\[\int_{-\infty}^{\infty}\left|\frac{\partial \varphi}{\partial v}|_{\mathcal{CH}^+_B}\right|^2\, dv  > 0.\]
\end{proposition}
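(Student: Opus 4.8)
The plan is to prove Proposition~\ref{ittransmitsyay} by the same mechanism used for Proposition~\ref{itwenttohplus}, but with the interior \emph{transmission} coefficient playing the role of the static reflection coefficient, and to feed in Proposition~\ref{itwenttohplus} itself as the statement that non-trivial data reaches $\mathcal{H}^+_A$. First I would invoke Proposition~\ref{itwenttohplus} to conclude that the restriction $\Psi_A\doteq\varphi|_{\mathcal{H}^+_A}$ is non-trivial; since the seed solution satisfies $\varphi|_{\mathcal{H}^+_B\cup\mathcal{B}_+}=0$, the solution $\varphi$ in $\mathcal{M}_{\rm interior}$ is exactly the unique solution with vanishing data on $\mathcal{H}^+_B$ and characteristic data $\Psi_A$ on $\mathcal{H}^+_A$. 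By Theorem~\ref{frantheo} this data lies in $\dot H^1\cap L^\infty(\mathcal{H}^+_A)$, so $\Psi_A$ admits a Fourier representation in the Killing parameter $v$ along $\mathcal{H}^+_A$ (on which $T=\partial_v$) with transform $\widehat{\Psi_A}(\omega)\in L^2(\mathbb{R})$ that is not identically zero.

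The next step is to represent the transmitted $v$-flux on $\mathcal{CH}^+_B$ in frequency space. For each real $\omega\neq 0$, a boundary mode $e^{-i\omega v}$ on $\mathcal{H}^+_A$ propagates into $\mathcal{M}_{\rm interior}$ as the profile $R_{{\rm in},+}(\omega,\cdot)$, which near $r_+$ behaves like $e^{-i\omega v}$ in the coordinates of Section~\ref{outgoingcorfir}. The decomposition $R_{{\rm in},+}=\mathfrak{\widetilde A}\,R_{{\rm in},-}+\mathfrak{\widetilde B}\,R_{{\rm out},-}$ of Lemma~\ref{analyticcoefficients} then separates the part $R_{{\rm in},-}$, which is a clean function of $\tilde u$ (regular on $\mathcal{CH}^+_A$), from the part $R_{{\rm out},-}\sim e^{-i\omega v}$, which is the regular $v$-dependent piece on $\mathcal{CH}^+_B$ and hence carries the $v$-flux there. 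Consequently the boundary value $\varphi|_{\mathcal{CH}^+_B}$, as a function of $v$, has Fourier transform proportional to $\mathfrak{\widetilde B}(\omega)\widehat{\Psi_A}(\omega)$, and Plancherel in $v$ along $\mathcal{CH}^+_B$ yields the representation
\[
\int_{-\infty}^{\infty}\Big|\frac{\partial\varphi}{\partial v}|_{\mathcal{CH}^+_B}\Big|^2\,dv = c\int_{-\infty}^{\infty}\omega^2\,\big|\mathfrak{\widetilde B}(\omega)\big|^2\,\big|\widehat{\Psi_A}(\omega)\big|^2\,d\omega
\]
for a constant $c>0$.

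It then remains only to see that the right-hand side is strictly positive, and here the two non-triviality inputs combine: by Lemma~\ref{unique2} the transmission coefficient satisfies $\mathfrak{\widetilde B}(\omega)\neq 0$ for every real $\omega\neq 0$, while $\widehat{\Psi_A}$ is a non-zero element of $L^2(\mathbb{R})$ by Proposition~\ref{itwenttohplus}. Hence the integrand is strictly positive on a set of positive Lebesgue measure, forcing the integral to be positive. (Alternatively, tracing everything back to $\Phi_{\mathcal{H}^-}$, one composes the static reflection with the interior transmission to obtain a Fourier multiplier built analytically from $\mathfrak{A}$, $\mathfrak{B}$ and $\mathfrak{\widetilde B}$; its non-triviality follows from Lemmas~\ref{unique1} and~\ref{unique2}, and since $\widehat{\Phi}_{\mathcal{H}^-}$ is the transform of a non-zero compactly supported function it vanishes only on a discrete set, so the product cannot vanish a.e.)

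The main obstacle I anticipate is analytic rather than conceptual: the justification of the frequency-space flux identity. Unlike the compactly supported seed data on $\mathcal{H}^-$, the induced data $\Psi_A$ on $\mathcal{H}^+_A$ is only known to lie in $\dot H^1\cap L^\infty$ with the decay of Theorem~\ref{frantheo} and is \emph{not} compactly supported, so one must control its Fourier transform together with the boundary contributions near the bifurcation sphere $\mathcal{B}_+$ and the limiting behaviour as $\omega\to 0$, exactly as in the adaptation of Theorem~10 of~\cite{Dafermos:2014jwa} that already underlies Proposition~\ref{itwenttohplus}. Once this scattering identity is in place, the positivity is an immediate consequence of the non-vanishing of $\mathfrak{\widetilde B}$ supplied by Lemma~\ref{unique2}.
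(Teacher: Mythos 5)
Your proposal is correct and follows essentially the same route as the paper: the paper's own proof consists of citing Theorem~\ref{frantheo}, Proposition~\ref{itwenttohplus}, and ``a minor adaption of the proof of Proposition 7.2 from~\cite{DafShlap}'', and that adaptation is precisely the argument you have spelled out --- Fourier representation of the interior solution in the Killing parameter along $\mathcal{H}^+_A$, identification of the $v$-flux on $\mathcal{CH}^+_B$ with the multiplier $\mathfrak{\widetilde B}(\omega)$ acting on $\widehat{\Psi_A}$, Plancherel, and the non-vanishing of $\mathfrak{\widetilde B}$ on $\mathbb{R}\setminus\{0\}$ supplied by Lemma~\ref{unique2} (the paper's ``non-triviality of transmission'' lemma, stated in Section~\ref{nontrivial} expressly for reuse in Section~\ref{timetranslate}). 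The analytic caveat you flag --- justifying the frequency-space flux identity for the non-compactly-supported induced data $\Psi_A$ --- is exactly the content the paper outsources to that cited adaptation, so nothing essential is missing from your outline.
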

\begin{proof}This follows by combining Theorem~\ref{frantheo}, Proposition~\ref{itwenttohplus} and a minor adaption of the proof of Proposition 7.2 from \cite{DafShlap}.

\end{proof}

Finally, the next proposition is a manifestation of the time translation invariance of the spacetime.
\begin{proposition}\label{ittimetranslatesyay}Let $\Phi_{\mathcal{H}^-}$ and $\varphi$ be as in Definition~\ref{seed}.  For every $\tau > 0$, set 
\[\Phi^{(\tau)}_{\mathcal{H}^-} \doteq \Phi_{\mathcal{H}^-}\left(u-\tau\right),\] 
and let $\varphi^{(\tau)}$ be the corresponding solution to the wave equation
$(\ref{linearwaveequation})$ on $\mathcal{M}$ so that $\varphi^{(\tau)}|_{\mathcal{H}_-} = \Phi^{(\tau)}_{\mathcal{H}^-}$, $\varphi^{(\tau)}|_{\mathcal{H}^+_B\cup\mathcal{B}_+} = 0$ and $\varphi^{(\tau)}|_{\mathcal{C}^-} = 0$. 

Then, in the outgoing Eddington--Finklestein coordinates $(v,r,\theta,\phi)$, we have  $\varphi^{(\tau)}\left(v,r\right) = \varphi\left(v-\tau,r\right)$. In particular,
\[\varphi^{(\tau)}|_{\mathcal{CH}^+_B}\left(v\right) = \varphi|_{\mathcal{CH}^+_B}\left(v-\tau\right).\]

\end{proposition}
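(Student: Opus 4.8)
The plan is to realise the $\tau$-translation as the restriction of a genuine spacetime isometry and then invoke uniqueness. Recall that the Killing field $T$ generates a one-parameter group $\{\Theta_\tau\}_{\tau\in\mathbb{R}}$ of isometries of $(\mathcal{M},g)$; that $T$ is complete (so that $\Theta_\tau$ is globally defined for all $\tau$) follows chart-by-chart from the explicit coordinates of Section~\ref{prelim}. I would then set $\tilde\varphi \doteq \varphi\circ\Theta_{-\tau}$. Since $\Theta_{-\tau}$ is an isometry and $\Box_g$ commutes with isometries (i.e.\ $\Box_g(\varphi\circ\Theta_{-\tau}) = (\Box_g\varphi)\circ\Theta_{-\tau}$), $\tilde\varphi$ is again a solution of $(\ref{linearwaveequation})$ of the same regularity class as $\varphi$. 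The whole proof then reduces to checking that $\tilde\varphi$ carries exactly the characteristic data defining $\varphi^{(\tau)}$, after which the conclusion is immediate from uniqueness.

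The concrete step is to track how $\Theta_\tau$ acts in each of the relevant coordinate systems. In the outgoing Eddington--Finkelstein coordinates of Section~\ref{outgoingcorfir}, $T=\partial_v$ throughout $\mathcal{M}_0$, so $\Theta_\tau(v,r,\theta,\phi)=(v+\tau,r,\theta,\phi)$; on the bifurcation spheres $T$ vanishes and, in the Kruskal charts of Section~\ref{krusky}, $\Theta_\tau$ acts by the hyperbolic rescaling $U\mapsto e^{\mp\kappa\tau}U$, $V\mapsto e^{\pm\kappa\tau}V$ fixing the sphere, which simultaneously confirms completeness. On $\mathcal{H}^-$ one has $T=\partial_u$ (Section~\ref{ingoingfir}), so $\Theta_{-\tau}\colon u\mapsto u-\tau$, and hence $\tilde\varphi|_{\mathcal{H}^-}(u)=\varphi|_{\mathcal{H}^-}(u-\tau)=\Phi_{\mathcal{H}^-}(u-\tau)=\Phi^{(\tau)}_{\mathcal{H}^-}(u)$. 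Finally, both the null cone $\mathcal{H}^+_B\cup\mathcal{B}_+$ and the horizon $\mathcal{C}^-$ are $T$-invariant (they are Killing horizons with generator $T$, together with a fixed bifurcation sphere), so $\Theta_{-\tau}$ maps each into itself; since $\varphi$ vanishes on both, so does $\tilde\varphi$. Thus $\tilde\varphi$ satisfies precisely the data $\tilde\varphi|_{\mathcal{H}^-}=\Phi^{(\tau)}_{\mathcal{H}^-}$, $\tilde\varphi|_{\mathcal{H}^+_B\cup\mathcal{B}_+}=0$, $\tilde\varphi|_{\mathcal{C}^-}=0$ defining $\varphi^{(\tau)}$.

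By the uniqueness of the characteristic initial value problem recalled just before Theorem~\ref{frantheo}, we conclude $\varphi^{(\tau)}=\tilde\varphi=\varphi\circ\Theta_{-\tau}$, which in the $(v,r,\theta,\phi)$ chart reads $\varphi^{(\tau)}(v,r)=\varphi(v-\tau,r)$; restricting to $\mathcal{CH}^+_B$, where the outgoing coordinate $v$ persists and $T=\partial_v$ so that $\Theta_{-\tau}\colon v\mapsto v-\tau$, yields the stated identity $\varphi^{(\tau)}|_{\mathcal{CH}^+_B}(v)=\varphi|_{\mathcal{CH}^+_B}(v-\tau)$. The argument is entirely soft, and the only point that genuinely requires attention is the global well-definedness and correct coordinate action of $\Theta_\tau$ near the bifurcation spheres, where $T$ degenerates; but the explicit Kruskal charts of Section~\ref{krusky} dispose of this at once, so I expect no real obstacle.
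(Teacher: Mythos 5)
Your proof is correct and is essentially the argument the paper itself relies on: the paper's proof simply cites Theorem~\ref{frantheo} and a minor adaptation of Lemma 7.6 of~\cite{DafShlap}, which is precisely this soft argument of pulling back by the Killing flow of $T$, checking that the translated characteristic data are reproduced on the $T$-invariant hypersurfaces $\mathcal{H}^-$, $\mathcal{H}^+_B\cup\mathcal{B}_+$, $\mathcal{C}^-$, and invoking uniqueness of the characteristic initial value problem. The only point worth adding is an explicit appeal to Theorem~\ref{frantheo} to know that $\varphi$ and $\varphi^{(\tau)}$ extend continuously to $\mathcal{CH}^+_B$, so that the identity $\varphi^{(\tau)}=\varphi\circ\Theta_{-\tau}$, established in the interior, passes to the stated restriction on $\mathcal{CH}^+_B$.
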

\begin{proof}This follows by combining Theorem~\ref{frantheo}, Proposition~\ref{itwenttohplus} and a minor adaption of the proof of Lemma 7.6 from \cite{DafShlap}.
\end{proof}

\subsection{From time translation invariance to scattering map amplification}\label{thekey}

The following proposition contains the essential content for the proof of Theorem~\ref{mainresult} and is closely related to the proof of Theorem 2 in~\cite{DafShlap}.
\begin{proposition}\label{theargheart} For a sufficiently small $\epsilon > 0$, there exists a one-parameter family of spherically symmetric and compactly supported $\Psi^{(\tau)}_{\mathcal{H}^-} : \mathcal{H}^- \to \mathbb{R}$  so that 
\begin{enumerate}
\item We have \[\sup_{\tau > 0}\left\vert\left\vert \Psi^{(\tau)}_{\mathcal{H}^-}\right\vert\right\vert_{H^{1+\epsilon}\left(\mathcal{H}^-\right)\cap L^{\infty}\left(\mathcal{H}^-\right)} \lesssim 1.\]
\item If we denote by $\psi^{(\tau)}$ the unique solution in $\mathcal{M}$ such that $\psi^{(\tau)}|_{\mathcal{H}^+_B\cup \mathcal{B}_+} = 0$, $\psi^{(\tau)}|_{\mathcal{H}^-} = \Psi^{(\tau)}_{\mathcal{H}^-}$ and $\psi^{(\tau)}|_{\mathcal{C}^-} = 0$, then
\[\sup_{\tau > 0}\left\vert\left\vert \psi^{(\tau)}\right\vert\right\vert_{\dot{H}^1\left(\mathcal{CH}^+_B\right)} = \infty.\]
\end{enumerate}

\end{proposition}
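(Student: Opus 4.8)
The plan is to build the family $\Psi^{(\tau)}_{\mathcal{H}^-}$ by time-translating the fixed seed datum $\Phi_{\mathcal{H}^-}$ of Definition~\ref{seed} towards the bifurcation sphere $\mathcal{B}_+$ and then rescaling by a $\tau$-dependent constant. Concretely, I would set $\Psi^{(\tau)}_{\mathcal{H}^-}(u) \doteq \lambda(\tau)\,\Phi_{\mathcal{H}^-}(u-\tau)$, so that by linearity the associated solution is $\psi^{(\tau)} = \lambda(\tau)\,\varphi^{(\tau)}$, with $\varphi^{(\tau)}$ as in Proposition~\ref{ittimetranslatesyay}. The positive factor $\lambda(\tau)$ will be fixed at the very end so as to normalise the data norm in item (1); the substance of the proposition is that, thanks to the inequality $\kappa_->\kappa_+$ of Lemma~\ref{fundsurfgrav}, this normalisation still leaves room for the solution norm in item (2) to diverge.

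First I would estimate the data norm. For $\tau>1$ the support of $\Phi_{\mathcal{H}^-}(\cdot-\tau)$ lies in $\{u\in(\tau-1,\tau)\}$, i.e.\ entirely in the region $U_+\in(-1,0)$ where $U_+=-e^{-\kappa_+ u}$ is the regular coordinate across $\mathcal{B}_+$. Writing $U_+=-e^{-\kappa_+\tau}y$ one checks that $\Phi_{\mathcal{H}^-}(u-\tau)=\Phi_{\mathcal{H}^-}(-\kappa_+^{-1}\ln y)=:G(y)$ is a \emph{fixed} smooth profile, so in the $U_+$-chart the translated bump is an exact $e^{-\kappa_+\tau}$-dilation of $G$. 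The dilation scaling of Sobolev norms then gives $\|\Phi_{\mathcal{H}^-}(\cdot-\tau)\|_{\dot H^{1+\epsilon}(\mathcal{H}^-)}\sim e^{(\frac12+\epsilon)\kappa_+\tau}\|G\|_{\dot H^{1+\epsilon}}$, with the $L^2$ piece subdominant ($\sim e^{-\frac12\kappa_+\tau}$) and the $L^\infty$ norm preserved. I would therefore take $\lambda(\tau)$ comparable to $e^{-(\frac12+\epsilon)\kappa_+\tau}$, which yields $\sup_{\tau>1}\|\Psi^{(\tau)}_{\mathcal{H}^-}\|_{H^{1+\epsilon}(\mathcal{H}^-)\cap L^{\infty}(\mathcal{H}^-)}\lesssim 1$, establishing item (1).

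Next I would bound the solution norm from below on $\mathcal{CH}^+_B$. Expressing the near-$\mathcal{B}_-$ piece of the $\dot H^1(\mathcal{CH}^+_B)$ norm in the regular Kruskal coordinate $V_-=-e^{-\kappa_- v}$ exposes the blue-shift weight: a change of variables gives $\int_{-1}^0|\partial_{V_-}\psi|^2\,dV_-=\tfrac{1}{\kappa_-}\int_0^\infty e^{\kappa_- v}|\partial_v\psi|^2\,dv$ (integrand restricted to $\mathcal{CH}^+_B$). By the time-translation identity $\partial_v\varphi^{(\tau)}|_{\mathcal{CH}^+_B}(v)=\partial_v\varphi|_{\mathcal{CH}^+_B}(v-\tau)$ of Proposition~\ref{ittimetranslatesyay}, substituting $s=v-\tau$ pulls out a factor $e^{\kappa_-\tau}$ and leaves $\int_{-\tau}^\infty e^{\kappa_- s}|\partial_v\varphi|^2\,ds$; this is bounded below by a fixed $c_1>0$ for all large $\tau$ because the transmitted flux $\int_{-\infty}^\infty|\partial_v\varphi|^2\,dv$ is strictly positive by Proposition~\ref{ittransmitsyay}. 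Hence $\|\varphi^{(\tau)}\|^2_{\dot H^1(\mathcal{CH}^+_B)}\gtrsim e^{\kappa_-\tau}$, and no convergence of the weighted integral is needed.

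Combining the two steps, $\|\psi^{(\tau)}\|_{\dot H^1(\mathcal{CH}^+_B)}=\lambda(\tau)\,\|\varphi^{(\tau)}\|_{\dot H^1(\mathcal{CH}^+_B)}\gtrsim e^{[\frac{\kappa_-}{2}-(\frac12+\epsilon)\kappa_+]\tau}$, which diverges as $\tau\to\infty$ provided $\kappa_->(1+2\epsilon)\kappa_+$. Since $\kappa_->\kappa_+$ strictly by Lemma~\ref{fundsurfgrav}, such an $\epsilon=\epsilon(M,e,\Lambda)>0$ exists (with $\epsilon\to0$ as $\kappa_-/\kappa_+\to1$, i.e.\ as extremality is approached), giving item (2). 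The main obstacle, and the heart of the argument, is precisely this bookkeeping of two competing exponential rates in two different Kruskal charts: the cost $e^{(1+2\epsilon)\kappa_+\tau}$ of measuring the data in the slightly stronger $H^{1+\epsilon}$ norm near $\mathcal{B}_+$ against the blue-shift gain $e^{\kappa_-\tau}$ extracted near $\mathcal{B}_-$. Everything hinges on the gain strictly beating the cost, which is the quantitative content of $\kappa_->\kappa_+$ and is exactly what forces the $\epsilon$-loss to be taken small.
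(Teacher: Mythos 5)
Your proposal is correct and follows essentially the same approach as the paper's proof: the same one-parameter family obtained by translating the seed datum of Definition~\ref{seed} towards $\mathcal{B}_+$ and rescaling by an exponential factor, the same Kruskal-chart computation exposing the blue-shift weight at $\mathcal{CH}^+_B$ via Propositions~\ref{ittimetranslatesyay} and~\ref{ittransmitsyay}, and the same final appeal to $\kappa_->\kappa_+$ (Lemma~\ref{fundsurfgrav}). The only difference is cosmetic: you obtain the uniform $H^{1+\epsilon}$ data bound by an exact dilation-scaling argument in the $U_+$ chart with normalisation $e^{-(\frac12+\epsilon)\kappa_+\tau}$, whereas the paper normalises by $e^{-\hat\kappa\tau/2}$ for an auxiliary $\kappa_+<\hat\kappa<\kappa_-$ and interpolates between its $\dot H^1$ and $\dot H^2$ estimates~\eqref{h1psi}--\eqref{h2psi}; the two parametrisations are equivalent.
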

\begin{proof}
Recall the fundamental inequality $\kappa_+<\kappa_-$ satisfied by the surface gravities
(Lemma~\ref{fundsurfgrav}) and again chose thus $\hat{\kappa}$ satisfying
$\kappa_+<\hat\kappa<\kappa_-$. 
Let $\Phi$ and $\varphi$ be as in Definition~\ref{seed}. Define $\Psi^{(\tau)}_{\mathcal{H}^-}:\mathcal{H}^- \to \mathbb{R}$ by
\[\Psi^{(\tau)}_{\mathcal{H}^-}\left(u\right) \doteq e^{\frac{-\hat{\kappa}}{2}\tau}\Phi^{(\tau)}_{\mathcal{H}^-}\left(u\right) = e^{\frac{-\hat{\kappa}}{2}\tau}\Phi_{\mathcal{H}^-}\left(u-\tau\right),\]
and then let $\psi^{(\tau)}$ be the corresponding solution to the wave equation on $\mathcal{M}$ so that $\psi^{(\tau)}|_{\mathcal{H}_-} = \Psi^{(\tau)}_{\mathcal{H}^-}$,
 $\psi^{(\tau)}|_{\mathcal{H}^+_B\cup\mathcal{B}_+} = 0$ and $\psi^{(\tau)}|_{\mathcal{C}^-} = 0$. 

Clearly $\sup_{\tau > 0}\left\vert\left\vert \Psi^{(\tau)}_{\mathcal{H}^-}\right\vert\right\vert_{L^{\infty}\left(\mathcal{H}^-\right)} \lesssim 1$. 

Using Kruskal coordinates at $\mathcal{B}_+$ (see~\eqref{kruskalplus}), for large enough $\tau$ we may calculate
\begin{align}\label{h1psi}
\left\vert\left\vert \Psi^{(\tau)}_{\mathcal{H}^-}\right\vert\right\vert^2_{\dot{H}^1\left(\mathcal{H}^-\right)} &\lesssim  \int_{-1}^0\left|\frac{\partial \Psi^{(\tau)}_{\mathcal{H}^-}}{\partial U_+}\right|^2\, dU_+ 
\\ \nonumber &\lesssim \frac{1}{\kappa_+}\int_0^{\infty}\left|\frac{\partial}{\partial u}\Psi^{(\tau)}_{\mathcal{H}^-}\right|^2e^{\kappa_+ u}\, du
\\ \nonumber &= \frac{1}{\kappa_+}\int_{\tau-1}^{\tau}\left|\frac{\partial}{\partial u}\Phi_{\mathcal{H}^-}\left(u-\tau\right)\right|^2 e^{\kappa_+u - \hat{\kappa}\tau}\, du
\\ \nonumber &\sim e^{\left(\kappa_+-\hat{\kappa}\right)\tau}.
\end{align}

Similarly, we may estimate
\begin{align}\label{h2psi}
\left\vert\left\vert \Psi^{(\tau)}_{\mathcal{H}^-}\right\vert\right\vert^2_{\dot{H}^2\left(\mathcal{H}^-\right)} \lesssim e^{\left(3\kappa_+-\hat{\kappa}\right)\tau}. 
\end{align}
In particular, by interpolating, there must exist $\epsilon > 0$ so that
\begin{equation}\label{basich1epest1}
\sup_{\tau > 0}\left\vert\left\vert \Psi^{(\tau)}_{\mathcal{H}^-}\right\vert\right\vert^2_{\dot{H}^{1+\epsilon}\left(\mathcal{H}^-\right)} \lesssim 1.
\end{equation}

Next, our goal is to prove that
\begin{equation}\label{toblowup}
\sup_{\tau > 0}\left\vert\left\vert \psi^{(\tau)}\right\vert\right\vert^2_{\dot{H}^1\left(\mathcal{CH}^+_B\right)} = \infty.
\end{equation}
Proposition~\ref{ittransmitsyay} implies that there exists $v_0 \in (-\infty,\infty)$ and $c > 0$ so that
\begin{equation}\label{basiclowerbound}
\int_{v_0}^{v_0+1}\left|\frac{\partial\varphi}{\partial v}|_{\mathcal{CH}^+_B}\right|^2\, dv \geq c.
\end{equation}
Now, using~\eqref{basiclowerbound} and also Proposition~\ref{ittimetranslatesyay}, we have
\begin{align}\label{lowerboundforthewin}
\sup_{\tau > 0}\left\vert\left\vert \psi^{(\tau)}\right\vert\right\vert^2_{\dot{H}^1\left(\mathcal{CH}^+_B\right)} &\gtrsim \sup_{\tau > 0}\int_{-1}^0\left|\frac{\partial \psi^{(\tau)}}{\partial V_-}|_{\mathcal{CH}^+_B}\right|^2\, dV_- 
\\ \nonumber &= \frac{1}{\kappa_-}\sup_{\tau > 0}\int_0^{\infty}\left|\frac{\partial \phi}{\partial v}|_{\mathcal{CH}^+_B}\left(v-\tau\right)\right|^2 e^{\kappa_- v-\hat{\kappa}\tau}\, dv 
\\ \nonumber &=\frac{1}{\kappa_-}\sup_{\tau > 0}\int_{-\tau}^{\infty}\left|\frac{\partial \phi}{\partial v}|_{\mathcal{CH}^+_B}\left(v\right)\right|^2 e^{\kappa_- \left(v+\tau\right)-\hat{\kappa}\tau}\, dv 
\\ \nonumber &\geq \frac{1}{\kappa_-}\sup_{\tau > 0}\left[e^{\kappa_-\left(v_0+\tau\right)-\hat{\kappa}\tau}\int_{v_0}^{v_0+1}\left|\frac{\partial \phi}{\partial v}|_{\mathcal{CH}^+_B}\left(v\right)\right|^2 \, dv \right]
\\ \nonumber &\geq \frac{ce^{\kappa_-v_0}}{\kappa_-}\sup_{\tau > 0}e^{\left(\kappa_--\hat{\kappa}\right)\tau}
\\ \nonumber &=\infty.
\end{align}

\end{proof}

\subsection{The uniform boundedness principle and the proof of Theorem~\ref{mainresult}}\label{finishhim}
Finally we can use Proposition~\ref{theargheart} along with the uniform boundedess principle to prove Theorem~\ref{mainresult}.
\begin{proof}Let $\epsilon > 0$ be determined by Propsition~\ref{theargheart} and $\xi(x)$ be a smooth function which is identically $1$ for $x < 0$ and identically $0$ for $x > 1$. Also, let $H^{1+\epsilon}_{{\rm sph}}\left(\mathcal{H}^-\right) \cap L^{\infty}_{{\rm sph}}\left(\mathcal{H}^-\right)$ denote the Banach space of spherically symmetric functions in $H^{1+\epsilon}\left(\mathcal{H}^-\right) \cap L^{\infty}\left(\mathcal{H}^-\right)$.

Then, for every $N >0$, we define a map $\mathscr{T}_N: H^{1+\epsilon}_{{\rm sph}}\left(\mathcal{H}^-\right) \cap L^{\infty}_{{\rm sph}}\left(\mathcal{H}^-\right) \to \dot{H}^1\left(\mathcal{CH}^+_B\right) \cap L^{\infty}\left(\mathcal{CH}^+_B\right)$ by taking $\Psi_{\mathcal{H}^-} \in H^{1+\epsilon}_{{\rm sph}}\left(\mathcal{H}^-\right) \cap L^{\infty}_{{\rm sph}}\left(\mathcal{H}^-\right) $ to the corresponding unique solution $\psi$ to the wave equation~\eqref{linearwaveequation} which satisfies $\psi|_{\mathcal{H}^+_B\cup\mathcal{B}_+} = 0$, $\psi|_{\mathcal{H}^-} = \Psi_{\mathcal{H}^-}$ and $\psi|_{\mathcal{C}^-} = 0$, then  restricting $\psi$ to $\mathcal{CH}^+_B$ and then multiplying $\psi$ by $\xi\left(v-N\right)$, that is,
\[\mathscr{T}_N\left(\Psi_{\mathcal{H}^-}\right)\left(v\right) \doteq \xi\left(v-N\right)\psi|_{\mathcal{CH}^+_B}\left(v\right).\]
Note that the solutions $\psi$ constructed will be spherically symmetric weak solutions to the wave equation uniquely defined by the property that $\left(\psi,n_{\mathcal{S}}\psi\right) \subset
H^{1+\epsilon}_{{\rm loc}}(\mathcal{S})\times H^{\epsilon}_{{\rm loc}}(\mathcal{S})$ for any 
spacelike hypersurface $\mathcal{S} \subset \mathcal{M}$.

Theorem~\ref{frantheo} and a density argument implies that each $\mathscr{T}_N$ is a well-defined bounded map. Proposition~\ref{theargheart} implies that
\[\sup_N\sup_{\left\vert\left\vert \Psi_{\mathcal{H}^-}\right\vert\right\vert \leq 1}\left\vert\left\vert \mathscr{T}_N\Psi_{\mathcal{H}^-}\right\vert\right\vert_{\dot{H}^1_{{\rm sph}}\left(\mathcal{CH}^+_B\right)} = \infty.\]

The uniform boundedness principle (see Theorem III.9 of~\cite{ReedSimon}) and Theorem~\ref{frantheo} then imply that there exists $\Psi_{\mathcal{H}^-} \in H^{1+\epsilon}_{{\rm sph}}\left(\mathcal{H}^-\right) \cap L^{\infty}_{{\rm sph}}\left(\mathcal{H}^-\right)$ so that
\[\sup_N\left\vert\left\vert \mathscr{T}_N\Psi_{\mathcal{H}^-}\right\vert\right\vert_{\dot{H}^1_{{\rm sph}}\left(\mathcal{CH}^+_B\right)} = \infty.\]
This is, of course, equivalent to 
\[\left\vert\left\vert \psi\right\vert\right\vert_{\dot{H}^1\left(\mathcal{CH}^+_B\right)} = \infty.\]

Having established that the energy along $\mathcal{CH}^+_B$ blows up, the $H^1$ blow-up along any other null hypersurface $\mathcal{N}$ which intersects $\mathcal{CH}^+_A$ transversally follows from a standard propagation of singularities argument. (See, for example, Section 7.2.3 of~\cite{DafShlap}.)

Define $(\Psi,\Psi') \doteq \left(\psi|_{\Sigma},n_{\Sigma}\psi|_{\Sigma}\right)$. Using finite-in-time energy estimates and the spherical symmetry of $(\Psi,\Psi')$, it is straightforward to see that
\begin{equation}\label{basich1epest2}
\left\vert\left\vert (\Psi,\Psi') \right\vert\right\vert^2_{H^{1+\epsilon}\left(\Sigma\right) \times H^{\epsilon}\left(\Sigma\right)} \lesssim 1,
\end{equation}
and thus $\psi$ coincides in $D^+(\Sigma)$
with the solution produced by Proposition~\ref{local}.
We have thus constructed a solution $\psi$ in $D^+(\Sigma)$ 
arising from data $(\ref{basich1epest2})$
which satisfies the required blow up property at $\mathcal{CH}^+_A$,
completing the proof of Theorem~\ref{mainresult}.
\end{proof}

\section{Discussion}\label{discussionsec}

We end this paper by amplifying some of the comments already made in the introduction.

\paragraph{Additional angular regularity.}
We have already argued why for Christodoulou's formulation~\cite{Chr} 
of strong cosmic censorship,
it is clearly natural to also relax the regularity assumption on initial data.
Indeed, as we have mentioned, this has a precedent in the proof of weak cosmic censorship in spherical
symmetry~\cite{Christodoulou4}, where weak irregularities allowed in the
space of absolutely continuous functions were used
to maximally exploit the blue-shift instability connected to naked singularities.
On the other hand, one might object that for data which are only $H^{1+\epsilon}$,
one cannot show that general solutions to $(\ref{linearwaveequation})$ are continuous. 
To overcome this objection, it suffices to replace $H^{1+\epsilon}$ with a space where
additional regularity is imposed in the angular directions. In fact, 
in the Reissner--Nordstr\"om--de Sitter case, for a spherically symmetric hypersurface $\Sigma$, 
one can consider the space 
\begin{equation}
\label{anormsuchas}
\mathcal{D}_k(\Sigma)=\{(\Psi,\Psi'): \forall|\alpha|\le k,
\Omega^\alpha\Psi\in H^1(\Sigma), \Omega^\alpha\Psi'\in L^2(\Sigma)\}.
\end{equation}
Here $\alpha=(\alpha_1,\alpha_2,\alpha_3)$ is a multi-index and
$\Omega^\alpha$ denotes a string $\Omega_1^{\alpha_1}\Omega_2^{\alpha_2}\Omega_3^{\alpha_3}$
of angular momentum operators.
(Note that the data corresponding to an $H^1$ spherically symmetric solution $\psi_0$, and more generally, a solution $\psi_\ell$
supported on a fixed angular frequency $\ell$, is \emph{a fortiori} contained in 
$\mathcal{D}_k(\Sigma)$ for all $k$. Thus, our Theorem~\ref{mainresult} 
produces a solution
in $\mathcal{D}_k$ for all $k$.) 
For sufficiently high $k$,  this is precisely the norm considered
in~\cite{Dafermos:2007jd} for the wave equation on Schwarzschild--de Sitter.
This yields 
sufficient regularity and sufficiently fast decay in the region $\mathcal{M}_{\rm static}\cup
\mathcal{H}^+_A\cup\mathcal{C}^+$, in particular
along $\mathcal{H}^+_A$, so as to be able to  still apply for instance the results for 
$(\ref{linearwaveequation})$
on the black hole interior $\mathcal{M}_{\rm interior}$ region due 
to~\cite{annefranzen}.
(Note that one does not need in fact the extra $\epsilon$ and we have not included
it in the definition $(\ref{anormsuchas})$.)
Thus, solutions $\psi$ arising from data in $\mathcal{D}_k$ 
still share the same qualitative behaviour as smooth $C^\infty$ solutions. 
In particular, 
we can still moreover infer the continuous 
extendibility of $\psi$ beyond $\mathcal{CH}^+$.

\paragraph{Generalisation to $H^s$.}
Let us remark also that more generally, one can show for arbitrary $s\ge 1$,
and sufficiently small $\epsilon>0$, that given 
generic data $(\Psi,\Psi')\in H^{s(+\epsilon)}\times H^{s(+\epsilon)-1}$, the corresponding
solution $\psi$ of $(\ref{linearwaveequation})$
 is inextendible in
$H^s_{\rm loc}$ at the Cauchy horizon.

\paragraph{Extremal black holes with $\Lambda=0$.}
In the case $\Lambda=0$, it would be interesting to
revisit the  extremal
Reissner--Nordstr\"om spacetime from the point of view of the present paper.
Recall
that, for smooth localised initial data for $(\ref{linearwaveequation})$ on 
a suitable past Cauchy hypersurface $\Sigma$ 
crossing the event horizon $\mathcal{H}^+$
in extremal Reissner--Nordstrom,
it has been proven~\cite{Gajic:2015csa,Gajic:2015hyu,aag:2018extreme} that the resulting solutions
are indeed in $H^1_{\rm loc}$ at the Cauchy horizon.\footnote{This does not of course affect the validity of Christodoulou's formulation in $\Lambda=0$ since the extremal case is itself
non-generic.}
In view of the degeneration of the surface gravity,
it is not at all clear whether this failure can be circumvented by
 passing to a space of initial data of lower regularity.

 \paragraph{Decay rates and strong cosmic censorship for $C^\infty$ data?}
The above comments notwithstanding, in no way are we trying to argue that
one should cease investigation of the fine dynamics of  $C^\infty$ initial data. On the contrary!
Though we hope to have given
coherent  arguments for allowing for  a genericity assumption 
based on a norm such as $H^{1+\epsilon}$,  $(\ref{anormsuchas})$, or some other
related modification,
we do not view these arguments at present to be definitive. 
Unquestionably, 
the most
satisfactory resolution of the Christodoulou formulation of strong cosmic censorship would  be one independent
of the precise regularity assumptions made at the level of initial data, and the cleanest way to have this would be for the
result to have been true in the topology of the $C^\infty$ class.  If Christodoulou's
formulation of strong cosmic censorship indeed fails for the Einstein--Maxwell
system $(\ref{eq:einsteinmaxwell})$ with  $\Lambda>0$ in the smooth topology,
then a nagging dissatisfaction with the whole situation is still inevitable, despite
our proposed circumvention.
In any case, irrespectively of its final significance for strong cosmic censorship, the 
problem of understanding the fine asymptotics of solutions arising from $C^\infty$ initial 
data---and the implications of this for the sharp generic inextendibility statement at
the Cauchy horizon---is
certainly an extremely worthwhile open problem that very much remains to be properly  understood.
\bibliographystyle{DHRalpha}
\bibliography{finalref}

\end{document}